\theoremstyle{theorem}
\newtheorem{theorem}{Theorem}[section]
\newtheorem{claim}[theorem]{Claim}
\newtheorem{lemma}[theorem]{Lemma}
\newtheorem{fact}[theorem]{Fact}
\newtheorem{observation}[theorem]{Observation}
\newtheorem*{lemma*}{Lemma}
\newtheorem*{theorem*}{Theorem}
\newtheorem*{claim*}{Claim}
\theoremstyle{definition}
\newtheorem{definition}[theorem]{Definition}
\newtheorem{remark}[theorem]{Remark}
\DeclareMathOperator{\rank}{\mathrm{rank}}
\DeclareMathOperator{\E}{\mathbb{E}}
\DeclareMathOperator{\pE}{\widetilde{\mathbb{E}}}
\newcommand{\N}{\mathbb{N}}
\newcommand{\Ind}{\mathop{\mathbb{I}}}
\DeclareMathOperator{\Tr}{\mathrm{Tr}}
\DeclareMathOperator{\SA}{\mathrm{SA}}
\DeclareMathOperator{\sgn}{\mathrm{sign}}
\DeclareMathOperator{\Var}{\mathbb{V}}
\DeclareMathOperator{\pV}{\widetilde{\mathbb{V}}}
\DeclareMathOperator{\Cov}{\mathrm{Cor}}
\newcommand{\pC}{\mathop{\widetilde{\mathrm{Cor}}}}
\DeclareMathOperator{\vol}{\mathrm{vol}}
\DeclareMathOperator{\val}{\mathrm{val}}
\newcommand{\argmax}{\mathop{\mathrm{argmax}}}
\newcommand{\Id}{\mathrm{Id}}
\newcommand{\R}{\mathbb{R}}
\newcommand{\cE}{\mathcal{E}}
\newcommand{\cS}{\mathcal{S}}
\newcommand{\iprod}[1]{\langle #1 \rangle}
\newcommand{\hf}{\frac{1}{2}}
\newcommand{\eps}{\varepsilon}
\newcommand{\Tnote}[1]{}
\newcommand{\Snote}[1]{}
\newcommand{\Lnote}[1]{}
\newcommand{\e}{\eps}
\title{Subexponential LPs Approximate Max-Cut}
\author{Samuel B. Hopkins\thanks{\protect\url{hopkins@berkeley.edu} UC Berkeley} \and Tselil Schramm\thanks{\protect\url{tselil@mit.edu} Harvard University and MIT} \and Luca Trevisan\thanks{\protect\url{L.Trevisan@UniBocconi.It} Bocconi University} }
\begin{document}
\maketitle

\begin{abstract}
We show that for every $\varepsilon > 0$, the degree-$n^\varepsilon$ Sherali-Adams linear program (with $\exp(\tilde{O}(n^\varepsilon))$ variables and constraints) approximates the maximum cut problem within a factor of $(\frac{1}{2}+\varepsilon')$, for some $\varepsilon'(\varepsilon) > 0$.
Our result provides a surprising converse to known lower bounds against all linear programming relaxations of Max-Cut \cite{CMM09,KMR}, and hence resolves the extension complexity of approximate Max-Cut for approximation factors close to $\frac{1}{2}$ (up to the function $\varepsilon'(\varepsilon)$).
Previously, only semidefinite programs and spectral methods were known to yield approximation factors better than $\frac 12$ for Max-Cut in time $2^{o(n)}$.
We also show that constant-degree Sherali-Adams linear programs (with $\text{poly}(n)$ variables and constraints) can solve Max-Cut with approximation factor close to $1$ on graphs of small threshold rank: this is the first connection of which we are aware between threshold rank and linear programming-based algorithms.

Our results separate the power of Sherali-Adams
versus Lov\'asz-Schrijver hierarchies for approximating Max-Cut, since it is known \cite{STT} that $(\frac{1}{2}+\varepsilon)$ approximation of Max Cut
requires $\Omega_\varepsilon (n)$ rounds in the Lov\'asz-Schrijver hierarchy.

We also provide a subexponential time approximation for Khot's Unique Games problem \cite{K02}: we show that
for every $\varepsilon > 0$ the degree-$(n^\varepsilon \log q)$ Sherali-Adams linear program distinguishes instances of Unique Games of value $\geq 1-\varepsilon'$
from instances of value $\leq \e'$, for some $\e'( \varepsilon) >0$, where $q$ is the alphabet size.
Such  guarantees are qualitatively similar to those of previous subexponential-time algorithms for Unique Games but our algorithm does not rely on semidefinite programming or subspace enumeration techniques \cite{ABS,BRS11,GS11}.
\end{abstract}

\section{Introduction}

In the Max-Cut problem, we are given an undirected graph $G=(V,E)$ and are asked to find a partition of $V$ into two sets $C$ and $\overline C$ that maximizes the fraction of edges of $E$ having exactly one endpoint in $C$.

Besides  being one of the simplest and most well-studied discrete optimization problems, Max-Cut is held as a hallmark example of the success of semidefinite programs and spectral methods over other algorithms.
The $.878\ldots$ approximation algorithm by Goemans
and Williamson provided the first application of semidefinite programming (SDP) to the design of approximation algorithms with bounded worst-case approximation ratio. While it is easy to achieve a $\frac 12$ approximation in polynomial time (a simple greedy algorithm can find a partition that cuts half the edges in linear time), only semidefinite programming \cite{GW} and spectral methods \cite{T} have been known to achieve approximations better than $\frac 12$ in polynomial time, or even subexponential time.

A fundamental difficulty in breaking the $\frac 12$ barrier for Max-Cut is that an approximation algorithm also provides a {\em certificate}: if we have, say, a $.52$-approximation algorithm, and we run it on a graph in which the Max-Cut optimum is less than the approximation, for example $\leq .51$, then the algorithm will output a cut of value $\leq .51$ and the execution of the algorithm, together with its proof of correctness, will provide a certificate that the Max-Cut optimum is $\leq .51 / .52< .99$. This means that the design of a .52-approximation algorithm requires the development of a technique that is able to provide, for every graph whose Max-Cut optimum is $\leq .51$, a certificate that its Max-Cut optimum is $< .99$. To date, only semidefinite programming was known to provide such certificates in sub-exponential time. Even the certificates implied by the spectral algorithm of \cite{T} are dual feasible solutions of the Goemans-Williamson SDP relaxation.

There has been strong evidence that subexponentially-sized linear programming (LP) relaxations of Max-Cut could not provide such certificates.
Beyond the failure to obtain LP-based algorithms for Max-Cut, there are many concrete strong lower bounds.
Schoenebeck, Tulsiani and Trevisan \cite{STT} prove that in order to achieve an integrality gap better than $\frac 12 +\epsilon$ one needs Lov\'asz-Schrijver hierarchy relaxations \cite{LS91} of size $\exp(\Omega_\epsilon (n))$.
For the more powerful Sherali-Adams hierarchy \cite{SA90}, Charikar, Makarychev and Makarychev \cite{CMM09} prove that to achieve such an integrality gap one needs relaxations of size $\exp(n^{\Omega_\epsilon (1) })$.
In both proofs, the integrality gap instances are (slightly modified) sparse random graphs, which spectral algorithms can solve almost trivially.
Kothari, Meka and Raghavendra \cite{KMR} show that Sherali-Adams integrality gaps for constraint satisfaction problems imply integrality gaps for {\em all} linear programs\footnote{The result
applies to all linear programs obtained in the ``extended formulation'' framework. In the case of Max-Cut, this applies to all relaxations in which
the constraints do not depend on the edges of the graph, and depend only on the number of vertices.} of comparable size,
and, in particular, using the integrality gap of \cite{CMM09}, show that, in order to achieve approximation $\frac 12 + \epsilon$ one needs an LP of size $\exp(n^{\Omega_\epsilon(1)})$.

These results together made any nontrivial LP-based algorithm for Max-Cut seem quite unlikely.
It had been hypothesized that integrality gaps exist for exponential-sized Sherali-Adams LPs, similar to those known for Lov\'asz-Schrijver; the boldest conjecture held that perhaps even random graphs could provide such integrality gaps.
But surprisingly, the latter hypothesis was refuted by O'Donnell-Schramm \cite{ODonnellS19}, who show that the Sherali-Adams hierarchy can certify that a random graph of average degree $d$ has Max-Cut at most $.51$ with an LP of size $\exp(n^{\alpha})$, where $\alpha \rightarrow 0$ as $d \rightarrow \infty$.
This matches the integrality gap result of Charikar et al., up to the precise dependence of $\alpha$ on $d$.

The result of O'Donnell and Schramm can be seen as an average-case analysis of the approximation guarantee of subexponential Sherali-Adams relaxations of Max-Cut, and it applies, more generally, to all graphs whose normalized adjacency matrix has bounded spectral radius.
But even in the wake of the O'Donnell-Schramm result, it was not clear whether subexponential LPs can provide nontrivial approximations for worst-case Max-Cut instances.
Indeed, in the related circumstance of approximating the feasible region of a semidefinite program with $O(n)$ variables and constraints via linear programs, it is known that constant-factor approximation can require $2^{\Omega(n)}$ linear constraints \cite{BFPS}.

In this paper we resolve this question by providing a worst-case analysis that applies to all graphs.
The graphs considered by O'Donnell and Schramm crucially have Max-Cut and Sherali-Adams values close to $1/2$: their analysis relies on fast mixing of random walks, and larger Max-Cuts preclude this.
To overcome this barrier, our key intermediate step (and our main technical contribution) is an analysis of Sherali-Adams relaxations on graphs whose adjacency matrices have both a nontrivial number of large positive eigenvalues and an arbitrary number of large negative ones.
In particular, this includes graphs with Max-Cut values close to $1$.
For this, we distill a simple proof of a strong \emph{local-to-global correlation} lemma (Lemma~\ref{lem:l-t-g}), with significantly weaker assumptions than existing local-to-global lemmas used to analyze LP and SDP hierarchies \cite{ODonnellS19, BRS11, RagT12}.

\subsection{Our Results}

\begin{theorem}[Main Result for Max-Cut] \label{th.main.maxcut}
For every $\alpha > 0$ there is an $\epsilon >0$ such that a Sherali-Adams LP relaxation of Max-Cut of degree $n^\alpha$ provides an approximation ratio at least $\frac 12 + \epsilon$.
\end{theorem}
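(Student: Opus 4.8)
Fix constants $c=c(\alpha)$ and $\epsilon=\epsilon(\alpha)$ with $\tfrac{1}{2(1-c)}\geq\tfrac12+\epsilon$. For a graph $G$ put $o=\mathrm{MaxCut}(G)$ and $s=\SA_{n^\alpha}(G)$, so $\tfrac12\leq o\leq s\leq 1$. If $s\leq 1-c$ then $o/s\geq\tfrac{1}{2(1-c)}\geq\tfrac12+\epsilon$; hence it suffices to handle graphs with $s>1-c$, and for those it is enough to prove $o\geq\tfrac12+\epsilon$ (which gives $o\geq\tfrac12+\epsilon\geq(\tfrac12+\epsilon)s$). We pass to regular $G$ by a standard reduction --- or work throughout with a normalized adjacency operator $\bar A$, whose eigenvalues we write $1=\lambda_1\geq\cdots\geq\lambda_n\geq-1$ --- and use that for Max-Cut the optimal pseudo-distribution $\pE$ may be taken symmetric under $x\mapsto-x$.

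\textbf{Near-bipartite case.} If $\lambda_n\leq-1+\gamma$ for a small constant $\gamma=\gamma(\epsilon)$, the algorithmic (Cheeger) direction of the bipartiteness inequality, i.e.\ Trevisan's spectral rounding, produces a cut of value $\geq 1-O(\sqrt\gamma)\geq\tfrac12+\epsilon$, so $o\geq\tfrac12+\epsilon$. From now on assume $\lambda_n>-1+\gamma$.

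\textbf{Main step: conditioning and rounding.} Take $\pE$ attaining $s$ and apply Lemma~\ref{lem:l-t-g} to obtain a set $S$ of at most $n^\alpha$ variables such that, writing $\E_a$ for the average over $x_S=a$ drawn from the $S$-marginal of $\pE$,
\[
\E_a\,\E_{(i,j)\in E}\bigl[\pC(x_i,x_j\mid x_S=a)\bigr]\;\geq\;-\,\eta,
\]
where $\eta=\eta(G,\alpha)$ is governed by the large \emph{positive} eigenvalues of $\bar A$ and an error term decaying with $n^\alpha$, and --- this is the point where the weakened hypothesis matters --- is essentially insensitive to the negative spectrum, because each edge contributes a $2\times2$ positive-semidefinite conditional covariance block and the potential/variance argument behind the conditioning controls the residual contribution of $\lambda_n$. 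Now round: sample $x_S=a$ and set $\Pr[y_i=1]=\tfrac{1+\pE[x_i\mid x_S=a]}{2}$ independently for $i\notin S$. Using $\pE[x_ix_j\mid x_S]=\pC(x_i,x_j\mid x_S)+\pE[x_i\mid x_S]\pE[x_j\mid x_S]$ and the law of total pseudo-expectation,
\[
\E\bigl[\mathrm{cut}(y)\bigr]\;=\;s\;+\;\tfrac12\,\E_a\,\E_{(i,j)\in E}\bigl[\pC(x_i,x_j\mid x_S=a)\bigr]\;\geq\;s-\tfrac12\eta,
\]
so $\mathrm{MaxCut}(G)\geq s-\tfrac12\eta\geq 1-c-\tfrac12\eta$. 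This is at least $\tfrac12+\epsilon$ as soon as $\eta\leq 1-2c-2\epsilon$, so the theorem reduces to proving the uniform bound $\eta(G,\alpha)\leq 1-2c-2\epsilon$ for all $G$ with $\lambda_n>-1+\gamma$.

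\textbf{The obstacle.} Controlling $\eta$ is the crux, and it is precisely the question of taming the positive spectrum of $\bar A$ while conditioning on only $n^\alpha$ variables. When at most $n^\alpha$ eigenvalues of $\bar A$ exceed a small threshold, the conditioning of Lemma~\ref{lem:l-t-g} can be steered to neutralize those directions, leaving $\eta\leq|\lambda_n|+o_\alpha(1)<1-\gamma+o_\alpha(1)$, below $1-2c-2\epsilon$ once the constants are chosen small relative to $\gamma$ and $n^\alpha$ is large; this regime is also where the O'Donnell--Schramm bounded-spectral-radius analysis and the bounded-threshold-rank Sherali--Adams algorithm live. The hard regime is the opposite one --- many large \emph{positive} eigenvalues, $\lambda_n$ away from $-1$ --- where conditioning cannot reach all the offending directions; there one argues instead that having many large positive eigenvalues forces, via the trace identity $\sum_{\lambda_i>0}\lambda_i=\sum_{\lambda_i<0}|\lambda_i|$ and an SDP/Grothendieck-type rounding, that $\mathrm{MaxCut}(G)$ is already bounded away from $\tfrac12$ by a constant, so that $o\geq(\tfrac12+\epsilon)s$ directly. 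I expect the delicate points to be (i) making these two regimes genuinely partition all graphs with the interlocking constants $\tau,\gamma,c,\epsilon,n^\alpha$ consistent, and (ii) formulating Lemma~\ref{lem:l-t-g} so that the positive-spectrum loss comes out in exactly the above one-sided form; getting (ii) right, with its weaker-than-usual hypotheses, is the main technical contribution.
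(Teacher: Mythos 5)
Your proposal has a genuine gap at exactly the step you flag as ``the obstacle,'' and the idea that fills it in the paper is missing from your outline: a \emph{graph decomposition} step. The paper handles the regime of many large positive eigenvalues not by arguing that such graphs automatically have Max-Cut bounded away from $\tfrac12$, but by invoking Steurer's partitioning theorem (Theorem~\ref{thm:partition}) to delete edges so that every connected component has $\tau$-threshold rank at most $n^{\alpha/2}$, while retaining an $\exp(-O(1/\alpha^3))$ fraction of edges. Inside each low-threshold-rank component one runs conditioning (Theorem~\ref{thm:gcr-conditioning}) plus the local-to-global lemma (Lemma~\ref{lem:l-t-g}, i.e.\ Lemma~\ref{lem:loc-to-glob} with Claim~\ref{claim:thresh-trace}) and independent rounding to cut nearly all internal edges, and then a uniformly random sign flip per component guarantees that every crossing edge is cut with probability $\tfrac12$; this yields value $\tfrac12+\Omega(\eps_\alpha)$ whenever $\SA_{n^\alpha}(G)\geq 1-\eps_\alpha$. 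Your substitute for this regime --- that many eigenvalues above a tiny threshold $\tau=n^{-\Omega(1)}$, together with $\lambda_n$ bounded away from $-1$, force $\mathrm{MaxCut}(G)\geq\tfrac12+\Omega(1)$ via the trace identity and Grothendieck-type rounding --- is false as a spectral statement: a random $d$-regular graph with $d$ a large constant has $\Theta(n)$ eigenvalues of order $1/\sqrt{d}\gg n^{-\Omega(1)}$, has $\lambda_n\approx-2/\sqrt d$ far from $-1$, and yet has Max-Cut $\tfrac12+O(1/\sqrt d)$, arbitrarily close to $\tfrac12$. (Such graphs are of course also killed by your reduction to $s>1-c$, but you give no argument that uses the high SA value in the high-rank regime; the decomposition is precisely that argument.)

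Two smaller but substantive issues. First, your ``main step'' misdescribes what Lemma~\ref{lem:l-t-g} provides: it is a purely one-directional statement that small \emph{global} squared $\ell_1$-correlation implies small \emph{local} $\ell_1$-correlation, valid only when the $\tau$-threshold rank is small (it enters through $\Tr(N^{2t})\leq 2(k+n\tau^{2t-1})$); the conditioning is done at random via mutual-information decay and is not ``steered to neutralize'' large eigendirections, and the lemma gives no signed, one-sided bound $\geq-\eta$ of the kind your rounding identity needs --- the paper instead bounds the nonnegative quantity $\E_{i\sim j}\pC(X_i,X_j)$ and loses it additively in Lemma~\ref{lem:gcr-rounding}. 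Second, the near-bipartite case split on $\lambda_n$ (handled by Trevisan's spectral rounding) is not needed in the paper: threshold rank only counts positive eigenvalues, and Claim~\ref{claim:thresh-trace} controls the trace with no assumption on the negative spectrum, which is exactly what lets the decomposition-based argument cover all graphs. So while your opening reduction (trivial $\tfrac12$ cut when $s\leq 1-c$) matches the paper, the core of the proof --- partition into low-threshold-rank pieces, round within pieces, random flips across pieces --- is absent, and the replacement you sketch does not work.
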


Up to the precise dependence of $\e$ on $\alpha$ our result is the best possible for Sherali-Adams (and in fact \emph{any} linear program which fits in the \emph{extended formulations} framework) \cite{CMM09,KMR}.

Our approach also offers insight into polynomial-size LPs from the Sherali-Adams hierarchy.
We show that a large class of Max-Cut instances -- graphs with \emph{bounded threshold rank} -- can be solved almost exactly by LPs of polynomial size.
To the best of our knowledge, this is the first known connection between bounded-threshold rank graphs and the guarantees of linear-programming based algorithms; previously such connections were known only for semidefinite programs \cite{BRS11,GS11}.
We prove Theorem~\ref{th.main.maxcut} by combining Theorem~\ref{th.low-threshold-rank} below with a graph decomposition theorem of Steurer~\cite{Steurer10} (closely related to that of Arora, Barak, and Steurer~\cite{ABS}).

\begin{definition}[Threshold Rank]
For $\tau > 0$, we say that a graph $G$ has $\tau$-threshold rank at most $k$ if the normalized adjacency matrix of $G$ has at most $k$ eigenvalues bigger than $\tau$. (Note that we only count the positive eigenvalues larger than $\tau$, not including the number of negative eigenvalues whose absolute value is bigger than $\tau$.)
\end{definition}

\begin{theorem}[Max-Cut on Low Threshold-Rank Graphs]
\label{th.low-threshold-rank}
  Let $G$ be a graph, let $\tau > 0$, and let $r = \rank_\tau(G)$.
  The degree-$k$ Sherali-Adams LP relaxation of Max-Cut on $G$ is an $\e$ additive approximation\footnote{Here we think of the Max-Cut value as normalized to lie in $[0,1]$.}
to the Max-Cut value of $G$ as long as
\[
  k \geq \frac r {\e^{O \left ( \log \tfrac n r / \log \tfrac 1 \tau \right ) } } \, .
\]
\end{theorem}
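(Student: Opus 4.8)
The plan is to exhibit, for any graph $G$ with $\rank_\tau(G) = r$, an explicit "rounding" of the degree-$k$ Sherali–Adams pseudo-distribution $\pE$ into an actual cut (equivalently, a distribution over cuts) that loses at most $\e$ in value, for $k$ as large as the stated bound. The engine behind this is the local-to-global correlation lemma (Lemma~\ref{lem:l-t-g}) promised in the introduction: for a pseudo-distribution of high enough degree, if one conditions on a small random seed set $S$ of vertices, the resulting conditional pseudo-distribution has small average pairwise correlations $\pC(x_i, x_j)$ along the edges. Once pairwise correlations are tiny, independent rounding (take $x_i$ independently with marginal given by the conditioned pseudo-expectation) recovers essentially the SA objective value, because the Max-Cut objective $\frac12\sum_{ij \in E}(1 - x_i x_j)$ is a degree-2 polynomial and the error incurred is controlled by $\sum_{ij\in E}|\pC(x_i,x_j)|$. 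The subtlety is that the vanilla local-to-global lemma gives correlation decay only when there is a spectral gap / bounded spectral radius — which is exactly what fails once the Max-Cut value is far from $\frac12$, since large Max-Cut forces large negative eigenvalues. This is where threshold rank enters: we only need control of the $r$ large positive eigenvalues, and the negative part of the spectrum, however bad, is harmless for the argument.

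Concretely, I would proceed as follows. First, decompose the normalized adjacency matrix $A$ of $G$ as $A = A_{\ge} + A_{<}$, where $A_{\ge}$ is the projection onto the at-most-$r$ eigenspaces with eigenvalue $> \tau$ and $A_{<}$ has spectral radius... no: $A_{<}$ has all eigenvalues $\le \tau$ from above but may be very negative. The key observation is that for the correlation-decay step we only care about the positive side: after a random walk / conditioning step of appropriate length $t$, the contribution of $A_{<}$ to the relevant quadratic form is bounded by $\tau^t$ (since $x \mapsto x^t$ on $(-1,\tau]$ is bounded by $\max(1,\tau)^{\text{stuff}}$ — more carefully, one works with $(A_{<})^2$ or an even power, so the negative eigenvalues only help). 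Meanwhile the rank-$r$ positive part contributes at most $r \cdot (\text{something})$, which the local-to-global lemma converts into average correlation $\lesssim r / k$ after conditioning on $O(k/?)$ vertices, or more precisely the lemma gives average squared correlation decaying like (rank)/(degree of conditioning). Setting the degree $k$ so that $t \approx \log(n/r)/\log(1/\tau)$ walk steps suffice to kill the negative tail down to $\e^{O(1)}$ while the rank-$r$ part is beaten down by conditioning, yields the stated relation $k \ge r \cdot \e^{-O(\log(n/r)/\log(1/\tau))}$ — the $r$ out front is the number of "bad" directions one must pay for, and the exponent is the walk length needed, with $\e$ appearing because each of the $\sim t$ stages of a telescoping/Cauchy–Schwarz argument costs a factor of $\e$.

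The main steps in order: (1) state and apply Lemma~\ref{lem:l-t-g} to reduce "good rounding" to "small average conditional correlation along edges"; (2) reduce the correlation bound to a spectral statement about powers of $A$ applied to the pseudo-covariance/pseudo-correlation matrix of $\pE$, using that SA pseudo-distributions are consistent under conditioning and that the objective is degree $2$; (3) split the spectrum of $A$ into the $\le r$ eigenvalues exceeding $\tau$ and the rest, bound the "rest" contribution by $\tau^{2t}$ (this is where only the positive side of the threshold matters — negative eigenvalues, raised to an even power, are $\le \tau^{2t}$ in operator norm on that subspace), and bound the top part by $r$ times a trivial bound; (4) optimize $t$ and the amount of conditioning (which must be $\le k$) to make both error terms $\le \e$, extracting the claimed inequality on $k$. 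I expect step (3), carefully handling the interaction between the even power needed to tame negative eigenvalues and the telescoping in the local-to-global lemma, to be the main obstacle: one has to ensure the walk-length factor $\log(n/r)/\log(1/\tau)$ is exactly what comes out, which requires being careful that the $\tau^t$ decay kicks in before the rank-$r$ error (which does not decay with $t$) dominates, and that the number of conditioning steps stays within the degree budget $k$. Everything else (consistency of conditioned pseudo-distributions, the degree-2 error expansion, independent rounding) is standard Sherali–Adams manipulation.
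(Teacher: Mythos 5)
Your high-level plan is the same as the paper's (condition to reduce global correlation, use the spider-walk local-to-global lemma with walk length $t \approx \log(n/r)/\log(1/\tau)$ and degree $\e^{-O(t)}$, then round each vertex independently), but the step you flag as the main obstacle --- step (3), taming the spectrum below the threshold --- is exactly where your argument breaks, and the fix is not a matter of care but of a different idea. You claim that after splitting $N = N_{\ge} + N_{<}$, the part with eigenvalues $\le \tau$ contributes at most $\tau^{2t}$ because ``negative eigenvalues, raised to an even power, are $\le \tau^{2t}$ in operator norm,'' and earlier that squaring means ``the negative eigenvalues only help.'' Both statements are false: an eigenvalue near $-1$ (and a graph with Max-Cut value near $1$ has such eigenvalues, possibly $\Omega(n)$ of them) becomes $\approx 1$ after raising to an even power, not $\le \tau^{2t}$. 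The hypothesis $\rank_\tau(G) = r$ gives no control whatsoever on the negative spectrum, so your bound on the ``rest'' contribution to $\Tr(N^{2t})$ (or to the relevant quadratic form) can be off by a factor of $n$, and the whole correlation-decay estimate collapses precisely on the instances the theorem is meant to cover.

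The paper's Claim~\ref{claim:thresh-trace} handles this with an argument you would need to supply: it never bounds the negative part spectrally, but instead uses that $N$ has nonnegative entries, so $\Tr(N^{\ell}) \ge 0$ for \emph{every} $\ell$, including odd ones. Writing $N = N_+ + N_-$ for the PSD and negative-definite parts, the odd power keeps signs, so $\Tr(N_+^{2t+1}) \le r + n\tau^{2t+1}$ (this is the only place threshold rank is used), and nonnegativity of $\Tr(N^{2t+1})$ then forces $|\Tr(N_-^{2t+1})| \le r + n\tau^{2t+1}$; finally $\|N\|\le 1$ gives $\Tr(N_-^{2t+2}) \le |\Tr(N_-^{2t+1})|$, hence $\Tr(N^{2t+2}) \le 2(r + n\tau^{2t+1})$. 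That trace bound, fed through the Cauchy--Schwarz step of Lemma~\ref{lem:loc-to-glob}, is what makes arbitrarily many large negative eigenvalues harmless. Without it (or an equivalent substitute), your step (3) does not go through; with it, the rest of your outline matches the paper's proof.
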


As an example, Theorem~\ref{th.low-threshold-rank} says that $O(r)$ levels of the Sherali-Adams hierarchy approximate Max-Cut up to $0.01$ additive error on graphs whose normalized adjacency matrices have $r$ eigenvalues larger than $n^{-0.01}$.
Theorem~\ref{th.low-threshold-rank} captures the results of \cite{ODonnellS19} as a special case, but it also extends to graphs with nontrivial eigenvalues (both positive and negative): this is crucial to our ability to use it in the proof of Theorem~\ref{th.main.maxcut}.

Our approach also extends beyond Max-Cut.
For a variety of $2$-CSPs, including Max-$2$-Lin and Max-$k$-Cut, we show that Sherali-Adams LPs of subexponential size obtain nontrivial worst-case approximations.
Of particular note is our result for Unique Games, where we show that $O( n^\alpha\log q)$-degree Sherali-Adams relaxations provide a constant-factor approximation for Unique Games on alphabets of size $q$.\footnote{Recall that the Unique Games Conjecture says that such approximations are NP hard.}
This is qualitatively similar in performance to the spectral algorithm of Arora et al. \cite{ABS} and the semidefinite programming algorithm of Barak et al.~\cite{BRS11}.

\begin{theorem}[Main Result for Unique Games] \label{th.main.ug}
  For every $\alpha > 0$ there is an $\e > 0$ such that a degree-$n^\alpha\log q$ Sherali-Adams relaxation of Unique-Games on an alphabet of size $q$ distinguishes instances of value $\leq \e$ from instances of value $\geq 1-\e$.
\end{theorem}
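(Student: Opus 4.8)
The plan is to mirror the two-step proof of Theorem~\ref{th.main.maxcut}. First I would prove a Unique-Games analog of Theorem~\ref{th.low-threshold-rank} — really a bounded-threshold-rank theorem for $2$-CSPs over size-$q$ alphabets, Unique Games being one instance — asserting that constant-degree Sherali-Adams certifies the value of any instance whose \emph{constraint} graph has bounded threshold rank. Then I would invoke Steurer's graph decomposition~\cite{Steurer10}, closely related to that of \cite{ABS}, to partition a worst-case instance into sub-instances each of which lands in the low-threshold-rank case, run the degree-$k$ SA relaxation of the whole instance, and analyze it part by part.

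For the low-threshold-rank step, take the degree-$k$ Sherali-Adams relaxation of Unique Games over the alphabet $[q]$: a degree-$k$ pseudo-distribution over labelings $\sigma\colon V\to[q]$, consistent on all $k$-subsets of vertices, whose objective is the pseudo-expected fraction of satisfied permutation constraints. I claim that if the constraint graph $G$ has $\rank_\tau(G)\le r$ then this SA value is within additive $\delta$ of the true value provided
\[
 k \ge \frac{r\,\log q}{\delta^{\,O\left(\log \tfrac n r / \log \tfrac 1 \tau\right)}}\,.
\]
The proof reuses the mechanism behind Theorem~\ref{th.low-threshold-rank}: feed the pseudo-random labels to the local-to-global correlation Lemma~\ref{lem:l-t-g}, use bounded threshold rank together with the SA consistency constraints to control global label correlations, and round by propagating a label along the (permutation-respecting) constraints. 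The genuinely new ingredient is the alphabet: working with the $nq$-vertex label-extended graph would blow the threshold rank up by a factor of $q$ and ruin the bound, so instead one keeps the $[q]$-valued variables and exploits that a vertex label carries only $\log q$ bits of information, so that the local-to-global rounding loses in proportion to $\log q$ rather than $q$. The robustness of Lemma~\ref{lem:l-t-g} to large \emph{negative} eigenvalues (a feature absent from the local-to-global lemmas of \cite{ODonnellS19,BRS11,RagT12}) should be exactly what keeps the alphabet cost logarithmic here, rather than forcing one to count all the eigenvalues of the blown-up label-extended graph.

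For the decomposition step, Steurer's theorem (after standard regularization) partitions $V$ into parts $V_1,\dots,V_t$ with at most a $\delta$-fraction of edges crossing between parts, each induced sub-instance having $\tau$-threshold rank at most $n^{\,O(\log(1/\delta)/\log(1/\tau))}$. In the completeness case nothing more is needed: SA is a relaxation, so the degree-$k$ SA value of an instance of true value $\ge 1-\e$ is itself $\ge 1-\e$. In the soundness case, suppose the true value is $\le \e$. Restricting the degree-$k$ SA solution on $G$ to the labels of each $V_i$ gives a valid degree-$k$ SA solution for the sub-instance on $V_i$, so by the previous step its objective there is at most $\val(G[V_i])+\delta$. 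Since the parts are disjoint, gluing together optimal labelings of the $G[V_i]$ yields a labeling of $G$ showing $\sum_i \val(G[V_i])\,|E(V_i)| \le \e\,|E|$; and crossing edges contribute at most $\delta$ to the objective. Summing, the SA value of $G$ is at most $\e+2\delta$, below $\tfrac12$, whereas the completeness value is above $\tfrac12$, so the relaxation distinguishes the two cases. Choosing $\tau$ and $\delta$ as functions of $\alpha$ so that the exponent $O(\log(1/\delta)/\log(1/\tau))$ stays below $\alpha$, the required degree works out to $n^{\alpha}\log q$ (after shrinking $\alpha$), and taking $\e$ small relative to $\delta$ gives the claimed constant gap.

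I expect the low-threshold-rank step, and within it the interplay between the permutation constraints and the threshold-rank bound, to be the main obstacle. For Max-Cut the local-to-global rounding only has to reconcile a single ``disagreement'' predicate, whereas for Unique Games it must propagate a consistent label through the permutations $\pi_{uv}$ along many overlapping paths simultaneously, and do so while charging the alphabet only a $\log q$ factor — which is exactly what rules out the lazy route through the $nq$-vertex label-extended graph. Checking that the weak hypotheses of Lemma~\ref{lem:l-t-g} are actually met by this permutation-correlation structure, and that the rounding tolerates the $\log q$ entropy of the labels, is the crux.
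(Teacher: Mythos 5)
Your high-level plan (low threshold-rank analysis plus Steurer's decomposition) is the paper's, but two steps as written do not go through. First, the decomposition you assume does not exist in the regime you need. Because the LP local-to-global machinery is only useful when $\tau \le n^{-\Omega(1)}$, you must invoke the small-$\tau$ version of the decomposition (Theorem~\ref{thm:partition}), and there the guarantee is only that the fraction of edges cut is at most $1-\exp\bigl(-O(\alpha^{-2}\log(1/\tau))\bigr)$ -- i.e.\ only a small constant fraction of edges survives inside the pieces, not ``at most a $\delta$-fraction crossing'' (for instance, a sparse random regular graph has $\tau$-threshold rank $\Omega(n)$ for $\tau=n^{-\Omega(1)}$, so any partition into low-rank pieces must destroy almost all edges). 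Consequently your soundness bound collapses: the pseudodistribution may have objective $1$ on every crossing edge, so all you can conclude is $\SA(G)\le \e + \delta + (\text{crossing fraction}) \approx 1-\sqrt{\eps_\alpha}+\e+\delta$, nowhere near $\tfrac12$; the distinguishing threshold has to sit at $1-\eps_\alpha$, and the correct statement is exactly the one the paper proves in primal form -- a high-valued SA solution can be rounded to an actual assignment of value $\ge \eps_\alpha/8$, with the crossing edges simply sacrificed (for Unique Games there is no analogue of the random-flip trick that recovers $\tfrac12$ on crossing edges in Max-Cut). Your argument is repairable along these lines, but the step ``SA value of $G$ is at most $\e+2\delta$, below $\tfrac12$'' is wrong as stated.

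Second, the low threshold-rank step for Unique Games -- which you rightly flag as the crux -- is left without its key ingredient, and the mechanism you gesture at is not the right one. Lemma~\ref{lem:l-t-g} is stated for Boolean local variables; extending it to $[q]$-valued labels via ordinary correlation loses factors polynomial in $q$ (the paper's generic reduction only gives $\Cov_\pi \ge \Cov/q$), which is fatal for large $q$. What actually makes the argument work is the \emph{permutation correlation} $\Cov_\pi$ of Definition~\ref{def:pcor}: a local-to-global lemma for $\Cov_\pi$ with constants independent of $q$ (Lemma~\ref{lem:loc-t-ug}), together with the fact that \emph{independent} rounding from the $1$-wise marginals loses only the local permutation correlation because UG constraints are bijections (Lemma~\ref{lem:gcr-ug}); no propagation of labels along constraints is used or needed. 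Moreover, the $\log q$ in the degree comes from the entropy bound in the conditioning step (Theorem~\ref{thm:gcr-conditioning} with Pinsker, mutual information at most $\log q$), not from the negative-eigenvalue robustness of Lemma~\ref{lem:l-t-g}; that robustness is what lets the analysis apply to arbitrary pieces of the partition and is orthogonal to the alphabet issue. So the genuinely new lemma your proposal needs -- a $q$-independent permutation-correlation local-to-global bound plus the matching rounding lemma -- is missing rather than derivable from the cited Boolean statements.
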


The spirit of these results is that subexponentially-sized LPs can (surprisingly) match the performance of SDPs for some key problems in combinatorial optimization.
In the case of Unique Games in the sub-exponential time regime, the performance of the LP is qualitatively similar to what is known for SDPs (which are the best known algorithms) -- both achieve constant-factor approximations in subexponential time.\footnote{The precise approximation factors obtained by the algorithms of \cite{ABS,BRS11} are better, but the bottom line is that we obtain an approximation ratio independent of the alphabet size in subexponential time, which is the measure of success relevant to the Unique Games Conjecture.}
In the polynomial-size regime, Theorem~\ref{th.low-threshold-rank} shows that Sherali-Adams LPs match the best known performance guarantees for SDP-based hierarchies for Max-Cut on graphs of low threshold rank, with the exception of a somewhat more stringent requirement on the value of the threshold $\tau$.\footnote{For LPs, our results are meaningful mainly when $\tau \leq n^{-\Omega(1)}$, while SDP-based hierarchies require only $\tau \leq \text{poly}(\epsilon)$. The lower bounds of \cite{CMM09} show that this $n$-dependence of the threshold is necessary for LPs; our result matches their lower bound in the case of expander graphs, i.e. for threshold rank $1$.}

One might wonder how universal this phenomenon is, and in particular whether it extends beyond constraint satisfaction problems.
We observe that the approximation factor achieved by Sherali-Adams linear programs for the Max-QP problem (also known as the Max-Cut-Gain problem with negative edge weights) is exponentially worse than what is achieved by polynomial-size semidefinite programs, even in the subexponential regime.

\begin{observation}\label{obs.main.qp}
Degree $k$ Sherali-Adams relaxations provide a $\Theta(n/k)$-approximation to Max-QP.
\end{observation}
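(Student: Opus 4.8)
The plan is to pin down the worst-case ratio between the degree-$k$ Sherali--Adams value and the integral optimum $\mathrm{OPT}(A)=\max_{x\in\{\pm1\}^n}x^\top Ax$, where $A$ ranges over symmetric matrices with zero diagonal and $J$ denotes the all-ones matrix; I would prove a matching $O(n/k)$ upper bound and $\Omega(n/k)$ lower bound on this ratio.

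For the upper bound, the first step is a lower bound on the smallest eigenvalue of any feasible degree-$k$ pseudomoment matrix $Y=(\pE[x_ix_j])_{ij}$, which has unit diagonal. Feasibility to degree $k$ forces the restriction of $Y$ to every $k$-element set to be the correlation matrix of a genuine distribution on $\{\pm1\}^k$, hence positive semidefinite; so if $v$ is a bottom eigenvector of $Y$ then $\sum_{\text{blocks }B}v_B^\top Y[B]v_B\ge 0$ for every partition of $[n]$ into blocks of size $k$. Taking the expectation over a uniformly random such partition and expanding (a fixed pair lands in a common block with probability $\tfrac{k-1}{n-1}$) gives, in one line, $\lambda_{\min}(Y)\ge -\tfrac{n-k}{k-1}$. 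Then $Y':=\tfrac{k-1}{n-1}Y+\tfrac{n-k}{n-1}\Id$ is positive semidefinite with unit diagonal, i.e. a feasible solution to the Goemans--Williamson-style SDP relaxation of Max-QP; since $\langle A,\Id\rangle=\Tr A=0$ this yields $\langle A,Y\rangle=\tfrac{n-1}{k-1}\langle A,Y'\rangle\le\tfrac{n-1}{k-1}\,\mathrm{SDP}(A)$. Finally I would invoke the known fact that the SDP relaxation of Max-QP is an $O(\log n)$-approximation (Nesterov; Megretski; Charikar--Wirth), concluding that the degree-$k$ Sherali--Adams value exceeds $\mathrm{OPT}$ by at most a factor $O(\tfrac{n}{k}\log n)$ — and by exactly $O(n/k)$ relative to the SDP value, which is the clean statement; the stray logarithm is immaterial to the point of the observation.

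For the lower bound I would exhibit a single instance: $A=-(J-\Id)$, i.e. Max-QP on $K_n$ with every edge of weight $-1$. Here $x^\top Ax=\|x\|^2-(\sum_i x_i)^2=n-(\sum_i x_i)^2$, so $\mathrm{OPT}(A)=n$ (or $n-1$ for odd $n$), attained by a balanced assignment. Against this I would build a feasible degree-$k$ Sherali--Adams solution with $\pE[x_i]=0$ and $\pE[x_ix_j]=-\tfrac{1}{k-1}$ for all $i\ne j$, for which $\langle A,Y\rangle=-\sum_{i\ne j}\pE[x_ix_j]=\tfrac{n(n-1)}{k-1}=\Theta(n^2/k)$, a factor $\Theta(n/k)$ above $\mathrm{OPT}$ — which, combined with the upper bound, gives the claimed $\Theta(n/k)$. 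The construction (for $k$ even; an analogous one with $\pE[x_ix_j]=-\tfrac1k$ handles odd $k$) is to take, for a set $S$ with $|S|\le k$, the local distribution on $S$ to be the law of the first $|S|$ coordinates of a uniformly random balanced string in $\{\pm1\}^k$: this law is permutation-symmetric, so it is well defined on an abstract index set, and the resulting family is automatically consistent because marginalizing onto an initial segment of an exchangeable sequence again gives the law of an initial segment. I would also remark that this $Y$ is \emph{not} globally positive semidefinite (its bottom eigenvalue is $-\tfrac{n-k}{k-1}<0$), which is precisely why no honest distribution on $\{\pm1\}^n$ realizes it and a gap is possible — a finite-exchangeability effect that the SDP, ranging over positive semidefinite matrices, does not suffer.

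The only genuinely nontrivial step is checking that the proposed collection of local distributions really is a feasible degree-$k$ Sherali--Adams solution — mutual consistency together with the claimed first and second moments — and I expect that to be the main thing to get right. Everything else is short: the random-partition eigenvalue estimate, the elementary evaluation of $\mathrm{OPT}(-(J-\Id))$, a citation for the $O(\log n)$-approximability of the Max-QP SDP, and the harmless parity bookkeeping for odd $k$ and for the case when $n$ is not a multiple of $k$.
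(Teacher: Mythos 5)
Your lower-bound half is exactly the paper's argument (Fact A.2): the instance $-(J-\Id)$, the exchangeable local distributions given by uniformly random balanced strings in $\{\pm 1\}^k$, consistency by symmetry, and $\pE[x_ix_j]=-1/(k-1)$ giving value $\Theta(n^2/k)$ against $\mathrm{OPT}\le n$. That part is fine as written.

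The upper-bound half is where you diverge, and it does not quite prove the stated bound. Your chain $\mathrm{SA}_k(A)\le \frac{n-1}{k-1}\,\mathrm{SDP}(A)\le O\bigl(\frac{n}{k}\log n\bigr)\,\mathrm{OPT}(A)$ is correct (the random-partition estimate $\lambda_{\min}(Y)\ge -\frac{n-k}{k-1}$ and the shift to an SDP-feasible $Y'$ both check out, using $\Tr A=0$), but the $\log n$ is not removable along this route: the basic SDP itself has a $\Theta(\log n)$ gap to $\mathrm{OPT}$, so any comparison that passes through the SDP can only give $O\bigl(\frac{n}{k}\log n\bigr)$, whereas the observation claims $\Theta(n/k)$ relative to the integral optimum. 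The paper's proof (Fact A.1) uses your same combinatorial fact --- a uniformly random partition of $[n]$ into blocks of size $k$, with a fixed pair colliding with probability $\frac{k-1}{n-1}$ --- but applies it to produce an assignment rather than an eigenvalue bound: fix a partition whose within-block contribution is $\Omega(k/n)\sum_{ij}A_{ij}\pE[X_iX_j]$; on each block the $k$-local marginal is a genuine distribution on $\{\pm1\}^k$, so some assignment on that block attains at least the block's local value exactly (no rounding loss); then multiply each block's assignment by an independent uniform sign, which zeroes out all cross-block terms in expectation. This yields an integral $x$ with $x^\top Ax\ge \Omega(k/n)\,\mathrm{SA}_k(A)$ directly, with no SDP and no logarithm. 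So to literally establish $\Theta(n/k)$ you should replace the SDP detour by this block-by-block rounding; what your route does buy, as a side remark, is the clean structural inequality $\mathrm{SA}_k\le \frac{n-1}{k-1}\,\mathrm{SDP}$, but it cannot substitute for the direct rounding in the worst-case approximation claim.
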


\noindent Despite the simplicity of the proof of Observtion~\ref{obs.main.qp}, we are not aware of a similar statement in the literature.
 Charikar and Wirth show that a basic (polynomial-size) semidefinite program provides an $O(\log n)$ approximation to this problem \cite{CW04}.
Thus, we have a simple example of an optimization problem for which there is a wide gap between SDP and Sherali-Adams performance even in the subexponential regime.

\subsection{Overview of the proof of our results}

\paragraph{The Sherali-Adams hierarchy.}
A feasible solution of a degree-$r$ Sherali-Adams Max-Cut LP with value $c$ describes a relaxation of a distribution over cuts that, on average, cut a $c$ fraction of the edges. 
The solution contains a complete description of the marginals of such a distribution over subsets of at most $r$ vertices.
 These marginal distributions are locally consistent with one another, in the sense that the distribution for a set $A$ of vertices and the distribution of a set $B$ have the same marginal distribution on $A\cap B$. 

Because the LP is a relaxation, there may in fact be no actual distribution over cuts which is consistent with the marginal distributions described by the feasible solution.
For example, if the graph underlying the Max-Cut instance is a clique, then if we represent cuts as $\pm 1$ assignments to vertices, a feasible degree-2 solution of value 1 is to define, for every pair $u,v$ of vertices, the distribution that sets $X_u=-1, X_v=1$ with probability $1/2$ and $X_u=1,X_v=-1$ with probability $1/2$. These local distributions agree on their intersections, because the local distributions on vertex signs $X_u,X_v$ and on $X_v,X_w$ agree on their marginal distribution on $X_v$ (which, in both cases, is equally likely to be $-1$ or $1$).
On the other hand, the max cut value is $\frac{1}{2} +o(1)$.

\paragraph{Local correlation, global correlation, and independent rounding.}
We now describe some of the ideas that we build on from prior work on rounding LP or SDP relaxations.

A tempting approach to round a Sherali-Adams Max-Cut relaxation is to assign each vertex randomly according to the local distribution for that vertex, treating each vertex independently. 
This {\em independent rounding} approach fails if, for a typical edge $(u,v)$, the local joint distribution for the pair of vertex signs $\{ X_u, X_v \}$ differs noticeably from the product distribution $\{X_u\}\{X_v\}$ in which $X_u$ and $X_v$ are assigned independently. 
In the example of the degree-2 solution for the clique that we discussed above, each edge $u,v$ is cut with probability $1/2$ by independent rounding but with probability $1$ by the local distribution on $X_u,X_v$.

If, however, for a typical edge $(u,v)$ the local distribution on $X_u,X_v$ is close (for example, in $\ell_1$ norm) to the product distribution on the marginals (that is, if $X_u$ and $X_v$ have ``low correlation'' in their local distribution), then independent rounding works. Thus, we would like to take an arbitrary solution and reduce it to a solution in which the correlation of $X_u,X_v$ according to their local distributions is small on average for a random edge $(u,v)$. Such a solution is said to have low {\em local correlation}.
It is not difficult to show that if a solution has value $c$, and the average, for a random edge $(u,v)$, of the $\ell_1$ distance between the local distribution on $X_u,X_v$ and the product distribution on the marginals is at most $\E_{u \sim v} \|\{X_u,X_v\}-\{X_u\}\{X_v\}\|_1 \le
\epsilon$, then independent rounding will cut at least a  $c-\epsilon$ fraction of edges (in expectation).

The {\em global correlation} of a feasible solution is the average, over all pairs $u,v$ of the $\ell_1$ distance: $\E_{u,v} \|\{X_u,X_v\}-\{X_u\}\{X_v\}\|_1$.
The global correlation of a feasible solution can be reduced using an operation called {\em conditioning}.
Starting from a feasible solution of degree $r$, conditioning over $t$ variables yields a new feasible solution of degree $r-t$, of the same value, and 
such that the global correlation is at most $O(1/\sqrt{t})$. (See Theorem \ref {thm:gcr-conditioning} for a more precise statement.)

Together these facts imply that, if we could argue that low global correlation implies low local correlation, then we would have a good rounding procedure:  first apply  conditioning (to reduce global, and therefore local, correlation) and then apply independent rounding. 
This scheme underlies several approximation algorithms using LP and SDP hierarchies; it was pioneered in \cite{BRS11,RagT12} and then used in a number of subsequent works.

\paragraph{Local to global correlation in low threshold rank graphs.}
Our first main result (Lemma \ref{lem:loc-to-t}) says the following: suppose that, for a random edge $(u,v)$ of $G$ the correlation (as defined above) between $X_u$ and $X_v$ is at least $\gamma$. Then, if we pick a random walk $v_0, v_1, \ldots, v_t$  of length $t$ from a random start vertex $v_0$, the average correlation between $X_{v_0}$ and $X_{v_t}$ is $\gamma^{\Omega(t)}$, provided that the degree of the Sherali-Adams relaxation
is  at least $O(1/\gamma)^t$.

Next, we argue that in graphs of {\em bounded threshold rank}, noticeably large correlation along sufficiently long random walks implies noticeably large global correlation.

For a parameter $\tau > 0$, we say that a graph has $\tau$-threshold rank at most $k$ if the normalized adjacency matrix of the graph has at most $k$ eigenvalues bigger than $\tau$. (Note that we only count the positive eigenvalues larger than $\tau$, and we do not count the number of negative eigenvalues whose absolute value is bigger than $\tau$.) Graphs of bounded threshold rank have been studied before from the point of view of the performance of SDPs on such graphs, but not, as far as we know, from the point of view of the performance of LPs.

The key property of graphs of $\tau$-threshold rank at most $k$ is this: for a typical start vertex, the distribution of the last vertex
of a random walk of length $t = O\left( \frac 1 {\log 1/\tau} \log n\right)$ has collision probability $O(k/n)$ (see Claim  \ref{claim:thresh-trace}).
In particular, if the expected correlation between the endpoints
of a $t$-step random walk is at least $\delta$, then the global correlation is at least $\Omega(\delta/k)$.

In summary, we have that: (i)  an upper bound on the correlation along $t$-steps random walks implies an upper bound on the local correlation; (ii) in graphs of bounded threshold rank, an upper bound on the global correlation implies an upper bound on the correlation along random walks; (iii) one can get a feasible solution of bounded global correlation by applying conditioning. Putting these facts together, if we have a graph of $\tau$-threshold rank $k$, and we have a Sherali-Adams solution, conditioning 
on $O(k^2 n ^{O( (\log (1/\gamma)) / \log (1/\tau))})$ variables (which we can do if the degree of the Sherali-Adams
solution is larger than the above bound), gives us a solution of local correlation at most $\gamma$.
The number of variables that we need to condition on is at most the desired bound $n^\alpha$ if, say,
$k < n^{\alpha/4}$ and $\tau$ is sufficiently small relative to $\gamma$ and $\alpha$.

\Lnote{Added this comparison to OS}
In comparing this approximation to the result of  O'Donnell-Schramm \cite{ODonnellS19}, we
see that they can guarantee a $1-\gamma$ approximation using a Sherali-Adams relaxation of degree $n^\alpha$,
provided that all the non-trivial eigenvalues of the normalized adjacency matrix are at most $\tau$ in magnitude, with
similar tradeoffs between $\gamma$, $\alpha$ and $\tau$ as we have above.
The requirement that all the non-trivial eigenvalues are at most $\tau$ in magnitude precludes graphs with large cuts (say, Max-Cut value $0.99$), meaning that \cite{ODonnellS19} cannot analyze the Sherali-Adams relaxation for all graphs.
In the analysis sketched
above, we can deal with graphs that have up to $n^{\alpha/4}$ positive eigenvalues larger than $\tau$, instead of just one, and arbitrarily many, even order of $n$, negative eigenvalues of magnitude larger than $\tau$, instead of none. The fact that we do not put any restriction on the number of large-in-magnitude negative eigenvalues enables us to extend our analysis to general graphs by partitioning general graphs into graphs of bounded threshold rank.

\paragraph{Partitioning into pieces of bounded threshold rank.}
Finally, we reduce the case of general graphs to the case of graphs of bounded threshold rank via a decomposition theorem (Theorem \ref{thm:partition}) that shows that every graph, after the removal of a bounded number of edges, breaks down into connected components each of bounded threshold rank. We employ such a decomposition theorem proved originally in~\cite{Steurer10}, which is closely related to a graph decomposition theorem proved Arora, Barak and Steurer~\cite{ABS}. The latter concerns thresholds $\tau$ close to $1$, while
we are interested in $\tau$ close to zero, because
eventually our running time will be exponential in $n^{O(1 / \log 1/\tau)}$.
Steurer \cite{Steurer10} shows that graphs can be decomposed into pieces of small $\tau$-threshold rank for $\tau$ close to $0$, although one has to remove a large fraction of edges to achieve it. 

\paragraph{Obtaining a solution.}
We first remove edges to make the residual graph have connected components of bounded threshold rank; under the assumption that the value of the Sherali-Adams solution is sufficiently high to begin with, the resulting components will have Sherali-Adams value close to $1$. 
Then we apply conditioning and rounding independently in each component, obtaining a cut of value close to $1$ in each component.
 Interpreting each cut as a $\pm 1$ assignment to the vertices, we then, independently for
each component, either leave the cut as is with probability $\frac{1}{2}$ or ``flip'' the cut (switching $-1$s with $1$s) with probability $\frac{1}{2}$. This last step ensures that
all the edges between components are cut with probability $\frac{1}{2}$. 
Thus our cut has value at least $\frac{1}{2} + \varepsilon$ for $\varepsilon$ proportional to the fraction of edges not cut by the partition.

\paragraph{Unique Games.}
Our result for Unique Games has a similar structure: we define a measure of correlation and prove that noticeably large
local correlation implies noticeably large correlation between the endpoints of random walks, which implies noticeably large
global correlation in graphs of bounded threshold rank. The main difference compared to the Max-Cut analysis is in the measure of correlation between the random variables corresponding to two vertices. We use a notion that we call ``permutation correlation,'' which has previously been used as a measure of correlation in the analysis of SDPs for Unique Games, and we are able to show that if the local permutation correlation along a random edge is $\gamma$, then the permutation correlation along the endpoint of a  $t$-step random walk is at least $\gamma^{O(t)}$, where the constant in the big-Oh is an absolute constant that does not depend on the size of the alphabet of the unique game.
Crucially, in the case of Unique Games, rounding a Sherali-Adams solution by indpendently sampling from the distributions $\{X_u\}$ succeeds if the solution only has small local permutation correlation (a weaker requirement than small local correlation).

\subsection{Discussion}

Our analysis of Sherali-Adams linear programs is similar to the rounding of SDPs in \cite{BRS11}: we partition the graph into components of bounded threshold rank, we show that noticeably large local correlation implies noticeably large global correlation in such graphs, and we observe that large global correlation is a property that can survive only a bounded number of conditionings.
However, in the case of SDPs, the spectrahedral constraints allow one to prove a local-to-global correlation lemma directly leveraging spectral properties of the underlying constraint graph in a direct way.
For linear programs, it is surprising that we are able to take advantage of spectral properties at all.\footnote{Though it is less surprising in the wake of the results of O'Donnell and Schramm \cite{ODonnellS19}.}

The main step is Lemma \ref{lem:loc-to-t}, which allows us to relate the local correlation along one edge to the correlation between the first and the last vertices of a random walk, and hence allows us to relate local correlation to global correlation in graphs of bounded threshold rank. 

The rounding algorithm that we use in this paper could be applied to a feasible Lov\'asz-Schrijver solution:
a Lov\'asz-Schrijver solution can be rounded near-optimally if it exhibits low local correlation. 
The operation of conditioning is well-defined for Lov\'asz-Schrijver solutions,
and conditioning on $t$ variables reduces the global correlation to $O(1/\sqrt t)$ at the cost of reducing the degree of the solution by $t$. 
Our algorithm, however, must fail, because of the known integrality gaps, and what fails is Lemma \ref{lem:loc-to-t}.

It would be interesting to see a similar phenomenon in the SDP setting: could there be a rounding algorithm for a Sum-of-Squares (SoS) relaxation that would be well-defined on weaker relaxations, but whose analysis relies on properties that hold only for SoS relaxations? 
For example, it is an open problem whether SoS relaxations of polynomial size can refute the Unique Games Conjecture, and it is known that weaker relaxations of polynomial (or even slightly superpolynomial) size cannot refute the Unique Games Conjecture \cite{RS09}. 
This is often interpreted as evidence that, in order to refute the Unique Games Conjecture via SoS relaxations, one would have to develop a radically new rounding technique that makes explicit use of SoS constraints. 
Could it be, instead, that there is a relatively simple rounding scheme for SoS relaxations of Unique Games, that is well defined on weaker relaxations and such that SoS-specific constraints are used only in the analysis?

Finally, we note that our subexponential algorithm for $(1-\e)$ vs $\e$ Unique Games (that is, distinguishing a $1-\e$-satisfiable instance from an $\e$-satisfiable one) requires $n^{\e'(\e)} \log q$ rounds of Sherali-Adams, while the prior algorithm based on hierarchies required $n^{\e'(\e)} q^3$ rounds.
As far as we know, this is the first subexponential algorithm allowing $q \geq n^{\Omega(1)}$.
However, our result is not a strict improvement on the prior algorithm \cite{BRS11} (which uses a stronger hierarchy), because that algorithm solves $(1-\e)$ vs $1/2$ Unique Games.


\paragraph{Open Problems}
We mention a few open problems; resolving any of these (affirmatively or negatively) would be of interest.

\begin{enumerate}
  \item \emph{Beating a random assignment for general $2$-CSPs?} Hastad \cite{H08} shows that for every $2$-CSP with alphabet size $q$, an SDP obtains an approximation ratio that is strictly better than the ratio obtained by randomly sampling assignment in $[q]^n$.
  Our work shows that subexponential LPs offer similar guarantees for some $2$-CSPs including Max-Cut and Unique Games.
  Do subexponentially-sized linear programs offer a nontrivial approximation for every $2$-CSP?

  \item \emph{Refined approximations for Max-Cut?} In addition to providing a $0.878\ldots$ approximation to Max-Cut, the Goemans-Williamson SDP also offers more refined guarantees.
  (1) In a graph with Max-Cut value at least $1-\epsilon$ it finds a cut of size $1-O(\sqrt{\epsilon})$ \cite{GW}, and (2) via an alternative rounding scheme due to Charikar and Wirth \cite{CW04}, in a graph with Max-Cut value $\tfrac 1 2 + \epsilon$ it finds a cut of size $\tfrac 1 2 + \Omega(\e / \log(1/\e))$.
  Can these guarantees be matched by subexponentially-sized linear programs?
  Our analysis cannot be extended to these settings as-is because the graph partitioning scheme we employ forces us to settle for randomly cutting most of the edges of the graph (all but, say, a $0.01$ fraction), which would seem to preclude ever finding a cut of size close to $0.99$.


  \item \emph{LP Certificates of Expansion?}
  Providing certificates of graph expansion is another combinatorial optimization problem for which spectral methods and SDPs appear to out-perform LPs.
  For instance, in every graph with expansion $0.99$, Cheeger's inequality says that the second eigenvalue of the adjacency matrix certifies that the graph has expansion at least $0.01$, while the best similar result for LPs loses a factor of $\log n$ \cite{LR99}.
  Can subexponential LPs offer certificates comparable to the second eigenvalue?
\end{enumerate}

\subsection*{Organization}
The definitions we need to work with CSPs and the Sherali-Adams hierarchy are in Section~\ref{sec:prelims}.
In Section~\ref{sec:max-cut} we prove Theorem~\ref{th.main.maxcut} and Theorem~\ref{th.low-threshold-rank} on Max-Cut.
In Section~\ref{sec:ug} we prove Theorem~\ref{th.main.ug} on Unique Games and describe a modification of the arguments to apply to a broader class of permutation-symmetric CSPs such as Max-$2$-Lin and Max-$k$-Cut.
In Appendix~\ref{sec:qp} we prove Observation~\ref{obs.main.qp}.
The remaining appendices contain reproductions of proofs from prior work for completeness (on spectral partitioning and global correlation rounding).

\section{Preliminaries}
\label{sec:prelims}
In this section we provide some preliminaries; the reader may wish to proceed directy to the proof of our main theorem in Section~\ref{sec:max-cut}.
For guidance, this section  is organized as follows:
Section~\ref{sec:graphs} introduces some definitions and notation regarding graphs. Section~\ref{sec:cor} introduces measures of correlation that we will use to track the local-vs.-global correlation in our pseudodistributions.
Section~\ref{sec:2csps} introduces constraint satisfaction problems, and defines Max Cut and Unique Games.
Finally, Section~\ref{sec:SA} introduces the Sherali-Adams LP and the notion of local random variables.

\subsection{Graphs}\label{sec:graphs}
A graph $G$ on $n$ vertices is a collection of nonnegative \emph{weights} $w_{ij} \geq 0$ for each pair $\{ij\} \in {n \choose 2}$.
In this work all graphs are simple, undirected, and contain no isolated vertices, but may be irregular.

\begin{definition}[Adjacency, Normalized Adjacency, and Walk Matrices]
  For a graph $G$ we often employ the \emph{adjacency matrix} $A$ with entries $A_{ij} = w_{ij}$, the \emph{degree matrix} with entries $D_{ii} = \sum_{j} w_{ij}$, the \emph{walk matrix} $D^{-1} A$, and the \emph{normalized adjacency matrix} $N = D^{-1/2} A D^{-1/2}$.
\end{definition}

We will use the following notation:

\begin{definition}[Random Walk Endpoints]
  For a graph $G$ on $n$ nodes, we often use $i \sim_\ell j$ to denote the distribution pairs $i,j$ where $i$ is first chosen according to the stationary measure $\pi$ of the random walk on $G$ and then $j$ is the result of an $\ell$-step random walk initialized at $i$.
For the special case $\ell = 1$, we will use the shorthand $i \sim j$.
\end{definition}

\begin{definition}[Threshold Rank]
A graph $G$ has {\em $\tau$-threshold rank $k$}, or $\rank_{\tau}(G) = k$, if the normalized adjacency matrix of $G$ has at most $k$ eigenvalues larger than $\tau$.
\end{definition}

\subsection{Random Variables and Measures of Correlation}\label{sec:cor}

For a random variable $X$ we denote by $\{X\}$ the associated density function.

\begin{definition}[Correlation]
If $X,Y$ are jointly distributed taking values in $[q] \times [q]$, we will often be interested in the following notion of correlation between $X,Y$:
\[
\Cov(X,Y) = \sum_{a,b \in [q]} |\Pr(X=a,Y=b) - \Pr(X=a)\Pr(y=b)| \, .
\]
Notice that this is exactly the $\ell_1$ distance $\|\{X,Y\} - \{X\} \times \{Y\} \|_1$.
Pinsker's inequality shows that $\Cov(X,Y) \leq \sqrt{2 I(X;Y)}$, where $I(\cdot ; \cdot)$ denotes the mutual information between $X$ and $Y$.
\end{definition}

\begin{definition}[Permutation Correlation]\label{def:pcor}
We will also use a related notion of correlation when proving our results for Unique Games.
Rather than measuring the correlation across all outcomes in the joint distribution over two variables, we measure only the correlation along the permutation from $[q]$ to $[q]$ that maximizes the difference between the joint distribution and the product of the marginal distributions.
For variables $X,Y \in [q]$, define
\[
\Cov_\pi(X,Y) = \max_{\pi \in \cS_q} \sum_{a \in [q]} \left|\Pr(X = a, Y = \pi(a))] - \Pr(X = a)\Pr(Y= \pi(a))\right|.
\]
We will refer to this as the ``permutation correlation''.
This notion was introduced in prior work on unique games on expanders \cite{AroraKKSTV08}, and later used as well in \cite{BRS11}.
\end{definition}
We note that since $\Cov_\pi(X,Y) \le \Cov(X,Y)$, as before we can relate $\Cov_\pi(X,Y) \le \sqrt{2I(X;Y)}$ via Pinsker's inequality.

\begin{definition}[Variance for Discrete Random Variables]
We also introduce a notion of variance for a $[q]$-valued random variable $X$.
For each $a \in [q]$, let $X_a$ be the $0/1$ variable such that $X_a = \Ind(X = a)$.
Then we let $\Var(X) = \sum_{a \in [q]} \Var(X_a)$.
We observe that since $\Var(X_a) \leq \E X_a$, we have $\Var(X) \leq \sum_{a \in [q]} \E X_a = 1$.
\end{definition}

\subsection{2CSPs}\label{sec:2csps}
An $n$-variable instance of 2CSP with alphabet size $q \in \N$ consists of a pair $(G,\Pi)$, where $G$ is an $n$-node weighted graph with weights $w_{ij}$ having $\sum_{ij} w_{ij} = 1$ and $\Pi$ is a collection of functions $\Pi_{ij} \, : \, [q] \times [q] \rightarrow \{0,1\}$ for every edge in $G$.
Without loss of generality throughout the paper we assume $w_{ij} \in [w_{\max}/n^3, w_{\max}]$, since low-weight edges of weight much less than $w_{\max} / n^2$ can be thrown out.

\begin{definition}[Objective Value]
The objective value of an assignment $x \in [q]^n$ for an instance is $\sum_{i \sim j} w_{ij} \Pi_{ij}(x_i,x_j) = \E_{i \sim j} \Pi(x_i,x_j)$, where $w_{ij}$ is the weight of edge $i,j$.
\end{definition}

\begin{definition}[Max-Cut]
An instance of Max-Cut is an instance of 2CSP where $q = 2$ and $\Pi_{ij}(x,x') = 1$ if and only if $x \neq x'$.
\end{definition}

\begin{definition}[Unique Games]
An instance of Unique Games is an instance of 2CSP where $\Pi_{ij}(x,x')$ represents a bijective map from $[q]$ to $[q]$.
\end{definition}

\subsection{Local Distributions and Sherali-Adams Linear Programs}\label{sec:SA}
We briefly discuss basic definitions involving the Sherali-Adams linear programming hierarchy.
For much more detail and proofs, see \cite{FKP19}.

\begin{definition}[Local pseudodistribution]
For $q,n \in \N$ and $t \leq n$, a $q$-ary $t$-local pseudodistribution is a collection $\mu = \{\mu_S\}_{S \in {n \choose t}}$ of probability distributions $\mu_S$ on $[q]^t$ such that for every $S' \subseteq [n]$ with $|S'| \leq t$, the marginal distributions of $\mu_{S}$ for $S \supseteq S'$ on the variables in $S'$ are all identical.

We often abuse notation and instead write $X_1,\ldots,X_n$ as {\em $t$-local random variables} induced by $\{\mu_S\}$, with the understanding that only probabilities and events concerning fewer than $t$ variables at once are well defined.
In particular, for two $t$-local random variables $X_i,X_j$, when $t \geq 2$ their correlation $\Cov(X_i,X_j)$ is well defined.
We often write $\pC(X_i,X_j)$ as a reminder that the underlying variables are only $t$-local.
\end{definition}

\begin{definition}[Sherali-Adams Polytope]
The set of all $q$-ary $t$-local pseudodistributions on $n$ variables is the \emph{degree-$t$ Sherali-Adams polytope} -- it is standard (see \cite{FKP19}) that this is a polytope involving $(qn)^{O(t)}$ variables and constraints.
\end{definition}

\begin{definition}[Pseudoexpectation]
The $t$-local pseudodistributions are in one-to-one correspondence with \emph{Sherali-Adams pseudoexpectations}, which are linear maps $\pE \, : \, \R[x_{ia}]^{\leq t}_{i \in [n], a \in [q]} \rightarrow \R$ which satisfy $\pE 1 = 1$ and $\pE f(x) \geq 0$ if $f$ is a nonnegative function depending on $\{x_{ja}\}_{j \in S, a \in [q]}$ for some $S \subseteq [n]$ with $|S| \leq t$.
(Here $\R[y]^{\leq t}$ denotes polynomials in variables $y$ with real coefficients and degree at most $t$.)

Again abusing notation, we often call such a pseudoexpectation $\pE$ or the corresponding pseudodistribution $\mu$ (alternatively written $X_1,\ldots,X_n$) a \emph{Sherali-Adams pseudodistribution}.
If $q=2$, we call it a \emph{Boolean} Sherali-Adams pseudodistribution.
\end{definition}

\begin{definition}[Objective Value of Local Distribution for 2CSP]
If $(G,\Pi)$ is a $q$-ary $2$CSP instance and $\{\mu_S\}$ is $t$-local pseudodistribution for $t \geq 2$, the objective value of $\{\mu_S\}$ for $(G,\Pi)$ is given by
\[
\sum_{i \sim j} w_{ij} \cdot \Pr_{X_i,X_j \sim \mu_{ij}}( \Pi_{ij}(X_i,X_j) = 1)
\]
where $\mu_{ij}$ denotes the marginal distribution of any $\mu_S$ for $S \supseteq \{i,j\}$ on the indices $i,j$.
If $(G,\Pi)$ is a $q$-ary $2$CSP, we write $SA_t(G,\Pi)$ for the maximum objective value achieved by any $q$-ary $t$-local distribution.
\end{definition}

\section{Subexponential Linear Programs for Max-Cut}\label{sec:max-cut}
In this section we prove our main result on approximating Max-Cut by subexponential-size LPs from the Sherali-Adams hierarchy.
We begin with an overview stating the three main ingredients of the proof.

\paragraph{Local-to-global correlation}
First we prove a local-to-global lemma in graphs of low threshold rank.
Our first step is to use the ``spider random walk'' technique of O'Donnell and Schramm to establish that nontrivial correlation across edges implies nontrivial correlation for the endpoints of random walks of length $t$:
\begin{lemma}\label{lem:loc-to-t}
Let $G$ be a graph on $n$ vertices, let $t$ be a power of two, and let $X_1,\ldots,X_n$ be $(2\lceil(\frac{1}{\gamma})^t\rceil + 1)$-local Boolean random variables.
Then $\E_{i \sim j} \pC(X_i,X_j) \ge 16\gamma$ implies that $\E_{i \sim_t j} \pC(X_i,X_j) \ge \gamma^t$.
\end{lemma}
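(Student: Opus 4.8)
The plan is to adapt the ``spider walk'' doubling technique of O'Donnell--Schramm. Since the $X_i$ are Boolean, for every pair $i,j$ we have $\pC(X_i,X_j)=4\,|\pE[\ol X_i\,\ol X_j]|$ where $\ol X_i:=X_i-\pE X_i$, and moreover $|\pE[\ol X_i\,\ol X_j]|\le\tfrac14$; so, writing $M_{ij}:=\pE[\ol X_i\,\ol X_j]$ and $a_s:=\E_{i\sim_s j}|M_{ij}|$, it suffices to show that $a_1\ge 4\gamma$ forces $a_t\ge\tfrac14\gamma^t$ (this gives $\E_{i\sim_t j}\pC(X_i,X_j)=4a_t\ge\gamma^t$). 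The only structural input from Sherali--Adams is that an $m$-local pseudodistribution satisfies $\pE[(\sum_{i\in S}c_i\ol X_i)^2]\ge 0$ for all coefficients $c$ and all $|S|\le m/2$ --- equivalently, every principal submatrix of $M$ on at most $m/2$ coordinates is PSD. Crucially this is \emph{weaker} than $M\succeq 0$ (which is what an SDP relaxation would furnish), and coping with that weakness is the crux of the proof.

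Since $t=2^\ell$, I would induct on $r=0,1,\dots,\ell$, proving $a_{2^r}\ge c_r\gamma^{2^r}$ with $c_0=4$ and $c_r=c_{r-1}^2/2$; one checks $c_r=2^{2^r+1}\ge 1$, so $c_\ell\ge 1$ and hence $a_t\ge\gamma^t$, as needed. The heart is the doubling step $a_s\mapsto a_{2s}$. Splitting a length-$2s$ walk at its midpoint $m$ --- which, by reversibility, is distributed according to the stationary measure, with the two half-walks conditionally independent given $m$ --- and using $N=D^{-1/2}AD^{-1/2}$, one gets the clean identity
\[
  \E_{i\sim_{2s}j} M_{ij} \;=\; \E_{m\sim\pi}\ \pE\!\left[\Big(\textstyle\sum_i p_s(m,i)\,\ol X_i\Big)^{\!2}\right],
\]
where $p_s(m,\cdot)$ is the distribution of the endpoint of an $s$-step walk from $m$. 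If the inner argument were a legal input for $\pE$ the right-hand side would be manifestly nonnegative; instead I would truncate the ``local field'' $F_m:=\sum_i p_s(m,i)\ol X_i$ to its heaviest coordinates, $F_m=y_m+z_m$, where $y_m$ keeps the set $S_m$ of the $K_r:=\lceil(1/\gamma)^{2^{r-1}}\rceil$ vertices $i$ maximizing $p_s(m,i)$. Because $|S_m|=K_r$, the term $\pE[y_m^2]\ge0$ is legitimate, and expanding $\pE[F_m^2]=\pE[y_m^2]+2\pE[y_m z_m]+\pE[z_m^2]$ one argues that (i) $\E_m\pE[y_m^2]$ is at least a positive quantity comparable to $a_s^2$, up to lower-order error (e.g.\ via Cauchy--Schwarz in $\pE$ against a well-chosen single variable, combined with $\pE[\ol X_\cdot^2]\le\tfrac14$), and (ii) the cross and tail terms are bounded in absolute value by $\tfrac12 a_s^2$, using that $\ol X_i\in[-1,1]$ together with a tail bound showing that the $s$-step walk distribution places only a $\gamma^{2^{r-1}}$-order fraction of its squared mass outside its top $K_r$ coordinates. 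This yields $a_{2s}\ge \big|\E_{i\sim_{2s}j}M_{ij}\big|\ge a_s^2/2$, closing the induction; and since the truncation sets grow like $(1/\gamma)^{2^r}$, the outermost level $r=\ell$ requires roughly $2(1/\gamma)^t$-locality, which is exactly the hypothesis (the generous constants in the bound $16\gamma$ and in the recursion absorb the truncation losses).

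I expect step (ii) --- the error analysis of the truncation --- to be the real obstacle, and the place where the constants and the exact locality bookkeeping must be handled with care. Concretely, one must show that a walk distribution which is ``responsible'' for correlation of order $\gamma^{2^{r-1}}$ is genuinely concentrated on about $(1/\gamma)^{2^{r-1}}$ vertices, and then feed this estimate back into the recursion; this self-referential tail bound is what forces the truncation budget (and hence the degree of the Sherali--Adams relaxation) to blow up like $(1/\gamma)^{2^r}$ as $r$ climbs to $\log_2 t$. Everything else --- the passage between $\pC$ and $|M_{ij}|$, the reversibility identity for splitting the walk, and the elementary recursion $c_r=c_{r-1}^2/2$ --- should be routine.
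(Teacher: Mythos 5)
Your high-level plan (split a $2s$-walk at its stationary midpoint via reversibility, square the correlation at each doubling, and use only PSD-ness of covariance matrices on locality-sized variable sets) is the same skeleton as the paper's proof, but the concrete mechanism you propose has two genuine gaps, both located exactly where you did not flag them as routine-breaking. First, your main term (i): the hypothesis of the doubling step controls $\E_{i\sim_s j}|M_{ij}|$, an average of \emph{absolute values}, while your unsigned field $F_m=\sum_i p_s(m,i)\ol X_i$ only ever sees signed sums. The natural Cauchy--Schwarz against the root gives $\pE[\ol X_m y_m]=\sum_{i\in S_m}p_s(m,i)M_{mi}$, which can be zero by cancellation even when $\sum_i p_s(m,i)|M_{mi}|$ is of order $\delta$; so $\E_m\pE[y_m^2]$ is not forced to be comparable to $a_s^2$. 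Inserting signs, i.e.\ $F_m=\sum_i p_s(m,i)\,\sgn(M_{mi})\,\ol X_i$, repairs this, but then the identity $\E_{i\sim_{2s}j}M_{ij}=\E_m\pE[F_m^2]$ on which your chain $a_{2s}\ge|\E_{i\sim_{2s}j}M_{ij}|\ge a_s^2/2$ rests is destroyed: with signs one only gets that $\E_m\pE[F_m^2]$ is \emph{upper} bounded by $\E_{i\sim_{2s}j}|M_{ij}|$ plus diagonal (variance) terms, which is a different argument structure. Second, your error analysis (ii) cannot be carried out: the cross term obeys only $2|\pE[y_m z_m]|\le \tfrac12\sum_{j\notin S_m}p_s(m,j)$, an $\ell_1$-tail quantity, and the $\ell_1$ mass of the $s$-step walk distribution outside its top $K_r=\lceil(1/\gamma)^{2^{r-1}}\rceil$ coordinates is in general close to $1$ (on any expanding piece the walk distribution is spread over far more than $K_r$ vertices); the trivial $\ell_2^2$ tail bound $\le 1/K_r$ does not help because the only pointwise bound on $|M_{ij}|$ is a constant. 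So the truncation error cannot be beaten down to $\tfrac12 a_s^2\approx\gamma^{2^r}$, and the ``self-referential concentration'' you hoped for is simply false, not merely hard.

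The paper's proof sidesteps both problems by replacing truncation of the field with \emph{sampling}: from a root $r\sim\pi$ it draws $k=\lceil 4/\delta^2\rceil$ i.i.d.\ legs $i_1,\dots,i_k$, each an $s$-step walk from $r$ (a ``spider''), and forms the covariance matrix of the $2k+1$ indicator variables of the legs together with the root --- genuinely PSD because only $2k+1$ local variables are involved. Testing it against a vector whose leg entries are the signs $\sgn\bigl(\Pr[X_{i_j}=a,X_r=b]-\Pr[X_{i_j}=a]\Pr[X_r=b]\bigr)$ and whose root entry is $-\alpha$ makes the root--leg cross terms pick up $\pC(X_{i_j},X_r)$ with the absolute value built in, while the leg--leg terms are bounded by $\pC(X_{i_j},X_{i_\ell})$, whose expectation over the spider is exactly an $\E_{i\sim_{2s}j}\pC$ quantity; the only error is variance terms, each at most $1$. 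Optimizing $\alpha=\delta k$ and taking $k\ge 4/\delta^2$ yields $\E_{i\sim_{2s}j}\pC(X_i,X_j)\ge\tfrac14\delta^2$ with no concentration assumption on the walk distribution whatsoever, and the recursion then proceeds as you outlined. If you want to salvage your write-up, the fix is precisely this replacement of ``top-$K$ truncation of $p_s(m,\cdot)$'' by ``$k$ i.i.d.\ samples from $p_s(m,\cdot)$,'' together with per-leg sign corrections; at that point you have reproduced the paper's argument.
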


We then use Lemma~\ref{lem:loc-to-t} to prove a second statement that allows us to relate the correlation of endpoints of long random walks in the graph to the correlation of a uniformly random pair of vertices.
The bound on the correlation crucially depends on the threshold rank of the graph (rather than on the second eigenvalue as a proxy for the mixing time, as is the case in O'Donnell-Schramm).
\begin{lemma}\label{lem:l-t-g}
If $G$ is a graph on $n$ vertices with $\rank_{\tau}(G) \le k$, $t$ is a power of two, and $X_1,\ldots,X_n$ are $2(\frac{1}{\gamma})^{t}$-local random variables, then an upper bound on the global squared correlation 
\[
\E_{i,j\sim \pi} \pC(X_i,X_j)^2 \le \frac{1}{2} \frac{\gamma^{2t}}{k + n\tau^{2t-1}}
\]
for $\pi$ the stationary measure of $G$ implies an upper bound on the local correlation 
\[
\E_{i \sim j} \pC(X_i,X_j) \le 16\gamma.
\]
\end{lemma}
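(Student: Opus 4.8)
The plan is to argue the contrapositive: assuming $\E_{i\sim j}\pC(X_i,X_j) > 16\gamma$, I will show that the global squared correlation must exceed $\tfrac12\cdot\tfrac{\gamma^{2t}}{k+n\tau^{2t-1}}$. The first step is free: since $X_1,\dots,X_n$ are local enough to apply Lemma~\ref{lem:loc-to-t}, a local correlation above $16\gamma$ forces large correlation between the endpoints of $t$-step walks, i.e.\ $\E_{i\sim_t j}\pC(X_i,X_j)\ge\gamma^t$. So everything reduces to a ``walk-to-global'' inequality: in any graph with $\rank_\tau(G)\le k$,
\[
\Big(\E_{i\sim_t j}\pC(X_i,X_j)\Big)^2 \;\le\; \big(k+n\tau^{2t-1}\big)\cdot\E_{i,j\sim\pi}\pC(X_i,X_j)^2 .
\]
Granting this, the hypothesis yields $\gamma^{2t}\le \big(k+n\tau^{2t-1}\big)\cdot\tfrac12\cdot\tfrac{\gamma^{2t}}{k+n\tau^{2t-1}}=\tfrac12\gamma^{2t}$, which is absurd for $\gamma>0$ --- the factor $\tfrac12$ in the statement is precisely the slack making this a strict contradiction.

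For the walk-to-global inequality I would work spectrally. Writing $N=D^{-1/2}AD^{-1/2}$ and using reversibility, the probability that $i\sim_t j$ equals $\pi(i)(D^{-1}A)^t_{ij}$, and after conjugating by $\sqrt\pi$ one gets $\E_{i\sim_t j}\pC(X_i,X_j)=\langle N^t,\widehat C\rangle$ where $\widehat C_{ij}=\sqrt{\pi(i)\pi(j)}\,\pC(X_i,X_j)$; note $\|\widehat C\|_F^2=\E_{i,j\sim\pi}\pC(X_i,X_j)^2$, that $\widehat C$ is entrywise nonnegative, and that so is $N^t$ (a power of a stochastic matrix, up to conjugation). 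A naive Cauchy--Schwarz gives $\langle N^t,\widehat C\rangle\le\|N^t\|_F\|\widehat C\|_F=\sqrt{\Tr(N^{2t})}\,\|\widehat C\|_F$, but this is too weak: $\Tr(N^{2t})=\sum_\ell\lambda_\ell^{2t}$ can be as large as $\Theta(n)$ when $N$ has many large-magnitude \emph{negative} eigenvalues --- which is exactly the regime we must handle (graphs with near-perfect cuts). The remedy is to arrange the Cauchy--Schwarz so that the spectral cost is the trace of an \emph{odd} power $\Tr(N^{2t-1})$, for instance by decoupling a single walk step from one end and leaving it ``unsquared'' (using that the walk matrix is nonnegative and row-stochastic), which effectively compares a $t$-walk against a $(t-1)$-walk. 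Making this decoupling produce an odd rather than an even exponent, while keeping the constants in line with the $\tfrac12$ above, is the step I expect to be the main obstacle.

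The last ingredient is the threshold-rank bound $\Tr(N^{2t-1})\le k+n\tau^{2t-1}$ (this is Claim~\ref{claim:thresh-trace}), which is where $\rank_\tau(G)\le k$ enters, and it is a one-line eigenvalue estimate: among the eigenvalues $\lambda_1,\dots,\lambda_n\in[-1,1]$ of $N$, at most $k$ exceed $\tau$ and each such contributes at most $1$ to $\sum_\ell\lambda_\ell^{2t-1}$; those in $[-\tau,\tau]$ contribute at most $n\tau^{2t-1}$ in total; and those below $-\tau$, raised to the odd power $2t-1$, are negative and only decrease the sum. This is the entire point of insisting on an odd exponent: it renders the (possibly $\Omega(n)$-many) large negative eigenvalues harmless, which is precisely what lets Lemma~\ref{lem:l-t-g} --- and hence Theorem~\ref{th.low-threshold-rank} --- apply to graphs with large Max-Cut value rather than only to graphs of small spectral radius as in \cite{ODonnellS19}. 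Combining the walk-to-global inequality, this trace bound, and Lemma~\ref{lem:loc-to-t} closes the contrapositive.
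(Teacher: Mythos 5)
Your reduction is on the right track: invoking Lemma~\ref{lem:loc-to-t} and then seeking a walk-to-global inequality, with the spectral reformulation $\E_{i\sim_t j}\pC(X_i,X_j)=\langle N^t,\widehat C\rangle$ where $\widehat C_{ij}=\sqrt{\pi_i\pi_j}\,\pC(X_i,X_j)$ and $\|\widehat C\|_F^2=\E_{i,j\sim\pi}\pC(X_i,X_j)^2$, is exactly how the paper proceeds (Lemma~\ref{lem:loc-to-glob}). But there is a genuine gap, and it stems from a false premise. You dismiss the plain Cauchy--Schwarz bound $\langle N^t,\widehat C\rangle\le\sqrt{\Tr(N^{2t})}\,\|\widehat C\|_F$ on the grounds that $\Tr(N^{2t})$ could be $\Theta(n)$ when $N$ has many large-magnitude negative eigenvalues, and you therefore require an odd-power inequality $(\E_{i\sim_t j}\pC(X_i,X_j))^2\le\Tr(N^{2t-1})\cdot\E_{i,j\sim\pi}\pC(X_i,X_j)^2$, to be obtained by an unspecified ``decoupling'' of one walk step --- a step you yourself flag as the main obstacle and never carry out. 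That is where the proof breaks: splitting $N^t$ inside the inner product and absorbing powers of $N$ into the $\widehat C$ factor only ever yields even traces (any $\Tr(N^a\,\cdot\,N^b)$-style Cauchy--Schwarz produces $\sqrt{\Tr(N^{2a})}$), so it is not clear your odd-exponent inequality is attainable by this route at all.

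Moreover the premise is wrong, and repairing it is precisely the content of Claim~\ref{claim:thresh-trace} (which you misquote as a bound on $\Tr(N^{2t-1})$; it is a bound on the \emph{even} trace). Because $N$ is entrywise nonnegative, all of its odd traces are nonnegative, so the negative spectrum is controlled by the positive one: when $\rank_\tau(G)\le k$ and $\|N\|\le 1$ one has $\sum_{\lambda<0}|\lambda|^{2t-1}\le\sum_{\lambda>0}\lambda^{2t-1}\le k+n\tau^{2t-1}$, and hence $\Tr(N^{2t})\le\sum_\lambda|\lambda|^{2t-1}\le 2\bigl(k+n\tau^{2t-1}\bigr)$. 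In other words, a graph cannot simultaneously have few positive eigenvalues above $\tau$ and enough large-magnitude negative eigenvalues to blow up the even trace; this is the paper's key observation, and with it the ``naive'' Cauchy--Schwarz you discarded already suffices. The factor $\tfrac12$ in the lemma statement is there exactly to absorb the factor $2$ in this trace bound, not merely to make your contradiction strict. So the correct completion of your argument keeps your even-power walk-to-global inequality and proves the bound $\Tr(N^{2t})\le 2(k+n\tau^{2t-1})$ via nonnegativity of odd traces, rather than hunting for an odd-power Cauchy--Schwarz.
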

This lemma is a composition of Lemma~\ref{lem:loc-to-glob} and Claim \ref{claim:thresh-trace}, which we prove in Sections~\ref{sec:loc-to-glo}~and~\ref{sec:trace-bd} below.

\paragraph{Global correlation rounding}
Once we establish a sufficiently strong relationship between local and global correlation in low-threshold rank graphs, we can apply the global correlation rounding technique pioneered by \cite{BRS11}.
We will use the following facts which originate in \cite{BRS11,RagT12} and are by now standard.
The first fact says that in expectation, {\em global} correlation drops under conditioning, while the objective value remains the same.
\begin{theorem} \label{thm:gcr-conditioning}
Let $q,n \in \N$, let $\kappa > 0$, and let $\mu$ be a $(\tfrac {6\log q}{\kappa} + 2)$-local pseudodistribution over $[q]$-valued random variables $X_1,\ldots,X_n$.
Let $\pi \in \Delta_n$ be a distribution on $[n]$.
Let $i_1,\ldots,i_k \in [n]$ be indices chosen i.i.d. from $\pi$, with $k = \tfrac{6\log q}{\kappa}$.
There exists $t \leq k$ such that
\[
\E_{i_1,\ldots,i_t} \E_{i,j \sim \pi} \E_{X_{i_1},\ldots,X_{i_t}} \pC(X_i, X_j | X_{i_1},\ldots,X_{i_{t}})^2 \leq \kappa \, .
\]
Furthermore, if $X_1,\ldots,X_n$ are variables in a 2CSP instance $(G,\Pi)$, then
\[
\E_{i_1,\ldots,i_k} \left(\E_{i\sim j} \E_{X_1,\ldots,X_k,X_i,X_j} [\Pi(X_i,X_j)~|~X_1,\ldots,X_k]\right) = \E_{i\sim j} \E_{X_i,X_j}[\Pi(X_i,X_j)].
\]
Note that the expectations taken above over $X_i$ are well-defined for local random variables because each depends on at most $k+2$ variables.
\end{theorem}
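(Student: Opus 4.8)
The plan is to reproduce the standard information-theoretic argument of \cite{BRS11,RagT12}. For the first (conditioning) part, I would track, for $s = 0,1,\dots,k$, the quantity $\Phi_s := \E_{i_1,\dots,i_s}\,\E_{j\sim\pi}\, H(X_j \mid X_{i_1},\dots,X_{i_s})$, where $H$ denotes Shannon entropy and the $i_r$ are drawn i.i.d.\ from $\pi$. Since $X_j$ is $[q]$-valued we have $0 \le \Phi_s \le \log q$, and since conditioning cannot increase entropy we have $\Phi_0 \ge \Phi_1 \ge \dots \ge \Phi_k$. All of these quantities are well defined: each refers only to the joint law of at most $s+1 \le k+1$ of the $X$'s, and every such marginal of a $(k+2)$-local pseudodistribution is an honest probability distribution, to which the chain rule and the monotonicity of entropy apply verbatim. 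Telescoping the non-negative increments gives $\sum_{s=0}^{k-1}(\Phi_s - \Phi_{s+1}) = \Phi_0 - \Phi_k \le \log q$, so there exists $t \le k$ with $\Phi_t - \Phi_{t+1} \le (\log q)/k$.

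Next I would rewrite this gap as a conditional mutual information. Because the $i_r$ are i.i.d.,
\[
\Phi_t - \Phi_{t+1} \;=\; \E_{i_1,\dots,i_{t+1}}\,\E_{j\sim\pi}\, I\big(X_j;X_{i_{t+1}} \,\big|\, X_{i_1},\dots,X_{i_t}\big),
\]
and renaming the fresh sample $i := i_{t+1}\sim\pi$ yields $\E_{i_1,\dots,i_t}\,\E_{i,j\sim\pi}\, I(X_i;X_j \mid X_{i_1},\dots,X_{i_t}) \le (\log q)/k$ (index coincidences such as $i \in \{i_1,\dots,i_t\}$ only make the integrand vanish, so they are harmless). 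Applying Pinsker's inequality pointwise in the conditioning, $\pC(X_i,X_j \mid X_{i_1}{=}a_1,\dots,X_{i_t}{=}a_t)^2 \le 2\,\mathrm{KL}\big(\{X_i,X_j\mid\cdot\}\,\big\|\,\{X_i\mid\cdot\}\times\{X_j\mid\cdot\}\big)$, and then averaging over $(a_1,\dots,a_t)\sim\mu_{\{i_1,\dots,i_t\}}$ turns the right-hand side into $2\,I(X_i;X_j\mid X_{i_1},\dots,X_{i_t})$. Combining the two estimates and substituting $k = (6\log q)/\kappa$ gives
\[
\E_{i_1,\dots,i_t}\,\E_{i,j\sim\pi}\,\E_{X_{i_1},\dots,X_{i_t}}\, \pC(X_i,X_j \mid X_{i_1},\dots,X_{i_t})^2 \;\le\; \frac{2\log q}{k} \;=\; \frac{\kappa}{3} \;\le\; \kappa,
\]
which is the claim; every marginal touched above involves at most $t+2 \le k+2$ variables, exactly matching the hypothesized degree $\tfrac{6\log q}{\kappa}+2$.

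For the ``furthermore'' part I would simply invoke the law of total expectation: for a fixed edge $i\sim j$ and fixed indices $i_1,\dots,i_k$, the marginal $\mu_{\{i,j,i_1,\dots,i_k\}}$ is a bona fide distribution on at most $k+2$ variables, so $\E_{X_{i_1},\dots,X_{i_k}}\big[\E_{X_i,X_j}[\Pi(X_i,X_j)\mid X_{i_1},\dots,X_{i_k}]\big] = \E_{X_i,X_j}[\Pi(X_i,X_j)]$ identically (events of probability $0$ carry weight $0$ in the outer average and may be dropped), and averaging this identity over $i\sim j$ and over $i_1,\dots,i_k$ — the right-hand side not depending on the latter — gives the stated equality. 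I do not expect a genuine obstacle here; the only point needing care is exactly this bookkeeping, namely verifying that every conditional entropy, mutual information, and correlation that appears is a function of a marginal on at most $k+2$ of the variables, so that the chain rule, entropy monotonicity, and Pinsker's inequality — theorems about honest distributions — may be applied without modification. This is precisely why the hypothesis asks for a $(\tfrac{6\log q}{\kappa}+2)$-local pseudodistribution rather than merely a $2$-local one.
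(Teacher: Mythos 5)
Your proposal is correct and follows essentially the same route as the paper: the entropy-potential/telescoping argument of \cite{BRS11,RagT12} (the paper's Lemma~\ref{lem:potential}, which sums the identity $I(X_a;X_b\mid X_{i_1},\dots,X_{i_\ell}) = H(X_a\mid X_{i_1},\dots,X_{i_\ell}) - H(X_a\mid X_b,X_{i_1},\dots,X_{i_\ell})$ and telescopes in expectation), followed by a conditional application of Pinsker's inequality, and the law of total expectation for conditioning a pseudodistribution (the paper's Fact~\ref{fact:obj-value}) for the ``furthermore'' clause. Your bookkeeping of locality (all quantities depend on marginals of at most $k+2$ variables) matches the paper's, and your constant ($\kappa/3$) is even slightly stronger than required.
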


The second fact states that in 2CSP instances with low {\em local} correlation, independent rounding produces a solution with high objective value.

\begin{lemma}[Rounding low-correlation Sherali-Adams]\label{lem:gcr-rounding}
Let $(G,\Pi)$ be an instance of 2CSP and let $X_1,\ldots,X_n$ be $q$-ary $T$-local random variables for $T \geq 2$.
Suppose that the local correlation $\E_{i\sim j} \pC(X_i,X_j) \le \delta$.
Let $Y_i \sim \{X_i\}$ be independent samples from the $1$-wise marginals of $X_1,\ldots,X_n$.
Then $\E_Y \E_{i \sim j} \Pi(Y_i,Y_j) \geq \E_{i \sim j} \E_{X_i,X_j} \Pi(X_i,X_j) - \delta$.
Furthermore, this rounding scheme can be derandomized in polynomial time.
\end{lemma}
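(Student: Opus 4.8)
The plan is to prove Lemma~\ref{lem:gcr-rounding} by a direct edge-by-edge comparison between the objective value of the independent rounding $Y$ and the objective value of the pseudodistribution. The key observation is that for a single edge $i \sim j$, the predicate $\Pi_{ij}$ depends only on the pair $(X_i,X_j)$, so the expected contribution of that edge under the true pair marginal $\mu_{ij}$ versus under the product of the one-wise marginals $\{X_i\} \times \{X_j\}$ differs by at most the $\ell_1$ distance between these two distributions, scaled by $\max |\Pi_{ij}| \le 1$. That $\ell_1$ distance is exactly $\pC(X_i,X_j)$ by the definition of $\Cov$ in Section~\ref{sec:cor}. Since $Y_i,Y_j$ are independent draws from $\{X_i\},\{X_j\}$, we have $\E_Y \Pi_{ij}(Y_i,Y_j) = \sum_{a,b} \Pr(X_i=a)\Pr(X_j=b)\Pi_{ij}(a,b)$, which is the ``product'' quantity, and this is well-defined because it only references the one-wise marginals (requiring $T \ge 2$ merely so that the pair marginal $\mu_{ij}$ is also defined for the comparison).

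First I would fix an edge $i\sim j$ and write
\[
\left| \E_{X_i,X_j \sim \mu_{ij}} \Pi_{ij}(X_i,X_j) - \E_{Y_i,Y_j} \Pi_{ij}(Y_i,Y_j) \right| = \left| \sum_{a,b \in [q]} \Pi_{ij}(a,b)\bigl( \Pr(X_i=a,X_j=b) - \Pr(X_i=a)\Pr(X_j=b)\bigr) \right|,
\]
and bound this by $\sum_{a,b} |\Pr(X_i=a,X_j=b) - \Pr(X_i=a)\Pr(X_j=b)| = \pC(X_i,X_j)$, using $0 \le \Pi_{ij} \le 1$. Then I would take the expectation over a random edge $i \sim j$ (with edge weights $w_{ij}$), apply linearity of expectation and the triangle inequality, and invoke the hypothesis $\E_{i\sim j}\pC(X_i,X_j) \le \delta$ to conclude $\E_Y \E_{i \sim j}\Pi(Y_i,Y_j) \ge \E_{i\sim j}\E_{X_i,X_j}\Pi(X_i,X_j) - \delta$.

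For the derandomization claim, I would note that the rounding is a product distribution over the independent choices $Y_1,\dots,Y_n$, and the objective $\E_{i\sim j}\Pi(Y_i,Y_j)$ is an average of terms each depending on only two of the $Y$'s; hence the standard method of conditional expectations applies: process vertices one at a time, at each step fixing $Y_i$ to the value maximizing the conditional expectation of the objective given the choices made so far, which can be computed exactly in polynomial time since each conditional expectation is a sum over edges of explicit marginal probabilities. This preserves the bound.

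The argument is essentially routine; the only thing to be slightly careful about is bookkeeping around well-definedness — specifically that all quantities appearing (the pair marginals $\mu_{ij}$, the one-wise marginals, and the conditional expectations in the derandomization) reference at most $T$ variables at a time so that they are legitimately defined for $T$-local random variables. This is the point where the hypothesis $T \ge 2$ enters, and I would state it explicitly rather than leave it implicit. There is no substantive obstacle; the lemma is a clean consequence of the definition of correlation as $\ell_1$ distance together with the boundedness of the predicates.
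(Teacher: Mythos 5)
Your proposal is correct and follows essentially the same route as the paper's proof: bound each edge's contribution by the $\ell_1$ distance $\|\{X_i,X_j\}-\{X_i\}\{X_j\}\|_1 = \pC(X_i,X_j)$ using $0 \le \Pi \le 1$, average over edges, and derandomize by the standard method of conditional expectations (which the paper only invokes by reference). No gaps.
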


We will prove both statements in Appendix \ref{app:gcr} for completeness.

\paragraph{Graph partitioning}
In a graph with high threshold rank, we will perform partitioning into low threshold rank parts in the style of \cite{ABS}.
The partitioning scheme of \cite{ABS} partitions a graph into parts of expansion at most $\epsilon$ when $\rank_{1-\epsilon'}(G)$ is large, for $\epsilon,\epsilon'$ close to $0$.
However, their result does not give guarantees for graphs that have large $\tau$-threshold rank when $\tau$ is close to $0$ rather than $1$.
A modification of the \cite{ABS} partitioning scheme for the small-$\tau$ regime appears in \cite{Steurer10}, which shows that if $\rank_{\epsilon}(G)$ is large, then one can obtain a partition of expansion at most $1 - \epsilon'$.

\begin{theorem}[Restatement of \cite{Steurer10} Theorem 2.2]\label{thm:partition}
Fix any $\tau,\alpha \in (0,1)$, and take $n$ sufficiently large.
Any $n$-vertex simple graph $G =(V,E)$ admits a partition into components $G_1,\ldots,G_m$ such that for all $i \in [m]$, the threshold rank $\rank_{\tau}(G_i) \le n^{\alpha}$, with the total fraction of edges cut in the partition bounded by $\frac{1}{|E|}|\cup_{i\neq j\in[m]} E[G_i,G_j]| \le 1 - \exp\left(-O\left(\frac{1}{\alpha^2} \log \frac{1}{\tau}\right)\right)$.
Furthermore, there is a polynomial-time algorithm to compute this partition.
\end{theorem}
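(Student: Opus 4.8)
The statement is a restatement of \cite[Theorem 2.2]{Steurer10}, so the cleanest route is to verify that our normalization of threshold rank and of the edge-cut measure matches Steurer's and then invoke his theorem directly. Below I sketch how one would instead prove it from scratch, along the lines of \cite{ABS,Steurer10}, in case a self-contained argument is preferred.

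The plan has two ingredients. The first is a \emph{spectral-to-combinatorial lemma}: if $\rank_{\tau}(G) > n^{\alpha}$, then $G$ contains a cut $(S,\bar S)$ with $\vol(S) \le \tfrac{1}{2}\vol(V)$ and edge expansion $\phi(S) = e(S,\bar S)/\vol(S) \le 1 - \beta_0$, where $\beta_0 = \exp\!\left(-O\!\left(\tfrac{1}{\alpha^2}\log\tfrac{1}{\tau}\right)\right)$. To prove it, start from the moment identity $\Tr(N^{2t}) = \sum_i \lambda_i(N)^{2t} \ge n^{\alpha}\cdot \tau^{2t}$; since $\Tr(N^{2t})/n$ is, up to normalization, the average over start vertices $v$ of the $2t$-step return probability of the random walk (equivalently, the $\ell_2^2$ mass of its $t$-step distribution), there is a start vertex $v$ whose $t$-step distribution has collision probability at least $\tfrac{1}{2}n^{\alpha - 1}\tau^{2t}$. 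Choosing $t$ of order $\tfrac{1}{\alpha^2}\cdot\tfrac{\log n}{\log(1/\tau)}$ makes $\tau^{2t}$ only polynomially small, so this collision probability exceeds the fully-mixed value $1/n$ by a polynomial factor; a sweep-cut (level-set) argument applied to the $t$-step distribution from $v$ then produces a cut that fails to be $(1-\beta_0)$-expanding --- intuitively, because if every sweep cut were $(1-\beta_0)$-expanding the walk would equidistribute within $t$ steps, contradicting the lower bound on its return probability. I would make this quantitative via a Dirichlet-form estimate bounding the $\ell_2$ decay of the $t$-step walk in terms of the minimum sweep-cut conductance.

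The second ingredient is a recursive application: while some current component has $\tau$-threshold rank exceeding $n^{\alpha}$, split it using the lemma and recurse on both sides; once every component has threshold rank at most $n^{\alpha}$, halt and output the partition. This runs in polynomial time, since each step only requires an eigenvalue/eigenvector computation and a sweep over $O(n)$ level sets. The delicate point --- and the step I expect to be the main obstacle --- is bounding the total fraction of edges removed by $1 - \beta_0$ \emph{without} the bound degrading by a $\mathrm{poly}(\log n)$ factor coming from the recursion depth: a naive per-level accounting loses a $\beta_0$ factor at each level, which is far too lossy since $\beta_0$ does not depend on $n$. One must instead charge the cut edges globally to a volume (or spectral) potential in the manner of \cite{ABS,Steurer10}, and choose the constant hidden in the exponent defining $\beta_0$ large enough to absorb the $O(1)$ overhead of the charging argument. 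With that accounting in place, the components are of threshold rank at most $n^{\alpha}$ by construction and the edge-loss bound is exactly the claimed one.
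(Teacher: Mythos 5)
Invoking \cite{Steurer10} directly is of course legitimate (the paper does exactly this, reproducing the proof only for completeness), and your self-contained plan has the right overall shape --- extract a non-expanding piece whenever the threshold rank is large, then recurse --- but both of your ingredients have genuine gaps. First, the spectral-to-combinatorial lemma as you sketch it does not go through. Your parameter choice is already off: with $t \asymp \frac{1}{\alpha^2}\frac{\log n}{\log(1/\tau)}$ you get $\tau^{2t} = n^{-\Theta(1/\alpha^2)}$, so $n^{\alpha}\tau^{2t} \ll 1$ and the claimed collision-probability advantage over $1/n$ disappears; one is forced to take $t \lesssim \alpha \log n / \log(1/\tau)$. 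More seriously, the step ``if every sweep cut were $(1-\beta_0)$-expanding the walk would equidistribute within $t$ steps'' is not available here: $\rank_\tau$ controls only the \emph{positive} spectrum, the negative eigenvalues are arbitrary, and the plain (non-lazy) walk need not equidistribute in $\ell_2$ no matter how expanding all cuts are (bipartite-type components of the spectrum). Dirichlet-form/Lov\'asz--Simonovits arguments require a lazy walk, and then conductance $1-\beta_0$ (which is at most $1$) only buys a \emph{constant} per-step $\ell_2$ decay; over the $t \ll \log n$ steps you can afford before $\tau^{2t}$ eats the $n^{\alpha}$ advantage, this cannot force the collision probability down to $O(1/n)$, so no contradiction arises in the relevant regime of small $\alpha$ and small $\tau$. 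The paper's proof (Lemma~\ref{lem:nibble}) is structured precisely to avoid forcing mixing: it picks a ``good starting vertex'' $e_1$ with $\|\Pi e_1\|^2/\langle v,e_1\rangle^2 \ge k$, telescopes $e_1^\top N^{2T}e_1$ to find a \emph{single} step at which the $\ell_2$ mass drops by at most $\tau^{2+c}$, converts this (via $y = z + \tfrac12 Nz$, using nonnegativity, which is insensitive to negative eigenvalues) into a nonnegative vector $u$ with $u^\top A u \ge \Omega(\tau^{2+c})\, u^\top D u$ and controlled $\|Du\|_1$, and only then rounds with a \emph{sparse} Cheeger inequality (Lemma~\ref{lem:local-cheeg}).

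Second, the recursion accounting --- which you explicitly flag as the main obstacle --- is not supplied, and it cannot be repaired by a charging argument if your lemma only guarantees $\vol(S)\le\tfrac12\vol(V)$: the recursion depth is then $\Theta(\log n)$ and a per-level loss of $\beta_0$ compounds to $n^{-\Omega(\log(1/\beta_0))}$, which is vacuous. The paper's resolution is quantitative and comes from the same sparse Cheeger step you omitted: the extracted set has volume at most an $O\bigl(\tau'^{-1} k^{-1/4}\bigr) = n^{-\Omega(\alpha)}$ fraction of its parent's volume (this is what the $\ell_1$ control $\|Du\|_1 \le \theta\, u^\top D u$ together with the good-starting-point condition buys), while retaining an $\Omega((\tau')^2)$ fraction of its incident edges internally. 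Consequently each final component is created by only $O(1/\alpha)$ such subdivisions, each costing a multiplicative factor $\tau^{O(1/\alpha)}$ in retained edges, which is exactly how the bound $\exp\bigl(-O(\alpha^{-2}\log\tfrac1\tau)\bigr)$ arises. Without a small-volume (or equivalent potential-based) guarantee on the extracted piece, your plan does not yield the stated edge-loss bound.
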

We provide a proof in Appendix~\ref{app:partition} for completeness.

\paragraph{Putting things together}
With these three pieces in place, we can prove Theorem~\ref{th.main.maxcut}: we apply the partitioning theorem (Theorem~\ref{thm:partition}) to partition our graph into pieces of small threshold rank, cutting at most a $1-\eps$ fraction of edges in the partition.
Then, within each piece, we apply global correlation rounding (Theorem~\ref{thm:gcr-conditioning}) until the global correlation is low.
From our local-to-global lemma (Lemma~\ref{lem:l-t-g}), we can conclude that the local correlation within each piece is small. 
Furthermore, under the assumption that the objective value is $\ge 1-\eps'$ for $\eps' \ll \eps$, on average the objective value within the parts will be large and thus from Lemma~\ref{lem:gcr-rounding} independent rounding will return a solution satisfying (say) a $\ge .75$-fraction of edges within each piece, which (when aggregated across the parts) gives objective value $\ge 0.75 \eps$.
Finally, by applying a random sign change to the solution within each piece, the $(1-\eps)$ fraction of edges crossing the partition are cut with probability $\frac{1}{2}$, for a solution of objecive value $\frac{1}{2} + .25 \eps$. 
We give a formal version of this argument in Section~\ref{sec:together}.

\subsection{Local-to-Global Lemma for Sherali--Adams}
\label{sec:loc-to-glo}
In this section, we prove a lemma in the style of O'Donnell-Schramm \cite{ODonnellS19} that allows us to relate local correlations to the correlations of independently sampled vertices.
One key difference between our proof and that of \cite{ODonnellS19} is that we track the distance (in $\ell_1$) between joint distribution $\{X,Y\}$ and the product of marginals $\{X\}\{Y\}$, rather than the correlation $\E XY$.\footnote{
It turns out that this is crucial for use with global correlation rounding, as the correlation is on average unaffected by conditioning.}
Another difference is that rather than measuring the distance of the random walk to mixing in terms of the second eigenvalue, we measure it in terms of a trace of a power of the random walk matrix, which allows us to take advantage of graphs of low threshold rank.

\begin{lemma*}[Restatement of Lemma~\ref{lem:loc-to-t}]
Let $G$ be a graph on $n$ vertices, let $t = 2^\ell$ be a power of two, and let $X_1,\ldots,X_n$ be $(2\lceil(\frac{16}{\gamma})^t\rceil + 1)$-local Boolean variables.
Then $\E_{i \sim j} \pC(X_i,X_j) \ge \gamma$ implies that $\E_{i \sim_t j} \pC(X_i,X_j) \ge (\frac{1}{16}\gamma)^t$.
\end{lemma*}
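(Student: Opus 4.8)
The plan is to run the ``spider walk'' induction of O'Donnell--Schramm in the language of $\ell_1$ correlation. Since $t = 2^\ell$, I would prove the statement by induction on $\ell$, doubling the walk length at each step. The base case $\ell = 0$ ($t=1$) is trivial: the hypothesis and conclusion coincide up to the constant, with $(\tfrac{1}{16}\gamma)^1 \le \gamma$ comfortably. For the inductive step, suppose we know that walks of length $s = 2^{\ell-1}$ carry correlation at least $\beta_s := (\tfrac{1}{16}\gamma)^{s}$ on average; I want to conclude that walks of length $2s$ carry correlation at least $\beta_s^2 = (\tfrac{1}{16}\gamma)^{2s}$. The natural object is a ``spider'': a center vertex $w$ drawn from the stationary measure, with two independent length-$s$ walks emanating from $w$ to endpoints $i$ and $j$. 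The marginal on $(i,j)$ is exactly the length-$2s$ walk distribution $i \sim_{2s} j$ (using reversibility of the walk with respect to $\pi$). The key inequality I want is a ``local-to-spider'' bound: conditioned on the center $w$ and on the value $X_w = a$, the variables $X_i$ and $X_j$ become (approximately) independent of each other through $w$, so that $\pC(X_i, X_j)$ is lower-bounded by something like the product of the ``one-armed'' correlations $\pC(X_i, X_w)$ and $\pC(X_j, X_w)$, suitably aggregated over the value $a$ of $X_w$.

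Concretely, the main lemma to extract is: for Boolean $t'$-local variables with $t'$ large enough to condition on $X_w$, $\pC(X_i, X_j) \ge c \cdot \E_{X_w}\big[\pC(X_i, X_w \mid X_w)\,\pC(X_j, X_w \mid X_w)\big]$ for an absolute constant $c$ — this is where the spider structure and Booleanness are used (over $\{0,1\}$ there are essentially only two conditioning events, so Cauchy--Schwarz / convexity manipulations are clean). Taking expectations over the spider and applying Cauchy--Schwarz in the form $\E[YZ] \ge (\E\sqrt{Y})^2 \cdot (\text{stuff})$ — or more carefully, using that the two arms are conditionally i.i.d. given $w$ so $\E_w \E_{X_w}[\cdots] = \E_w \E_{X_w}[(\text{arm})^2] \ge (\E_w \E_{X_w}[\text{arm}])^2$ by Jensen — reduces the length-$2s$ bound to the square of the average length-$s$ one-armed correlation, which is $\ge \beta_s$ by the inductive hypothesis. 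The factors of $16$ are there precisely to absorb the absolute constant $c$ at each of the $\ell = \log_2 t$ levels; this is why the final constant is $(\tfrac{1}{16}\gamma)^t$ rather than $\gamma^t$, and why the locality requirement compounds to roughly $(\tfrac{16}{\gamma})^t$ (we need to condition on about that many vertices over the course of the recursion, or rather at the deepest level we need the variables to be local enough to expand a depth-$\ell$ spider, which has $O((16/\gamma)^t)$-ish total size when unrolled — more precisely the locality doubles-with-slack at each level, giving the stated $2\lceil (16/\gamma)^t\rceil + 1$).

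The step I expect to be the main obstacle is the local-to-spider inequality itself — i.e., showing that correlation composes multiplicatively through the center of a spider, $\pC(X_i,X_j) \gtrsim \E_{X_w}[\pC(X_i,X_w\mid X_w)\,\pC(X_j,X_w\mid X_w)]$, with only local consistency (no genuine joint distribution on $i,j,w$ beyond $3$-local marginals, which is fine here since we have enough locality). One has to be careful that $\pC$ is an $\ell_1$ distance to the product of marginals, not a signed correlation, so I cannot just multiply signed quantities; the right move is to write $\pC(X_i,X_w\mid X_w{=}a)$ in terms of $|\Pr(X_i{=}1\mid X_w{=}a) - \Pr(X_i{=}1)|$ (Booleanness collapses the sum over $b$ to a single term up to factor $2$), note that $\Pr(X_i{=}1, X_j{=}1) - \Pr(X_i{=}1)\Pr(X_j{=}1) = \E_{X_w}[(\Pr(X_i{=}1\mid X_w) - \Pr(X_i{=}1))(\Pr(X_j{=}1\mid X_w) - \Pr(X_j{=}1))]$ when $X_i, X_j$ are conditionally independent given $X_w$ (which holds for the spider), and then bound the absolute value of this expectation below by... — actually one needs a lower bound on $|\E[\cdots]|$, which does not follow from the triangle inequality. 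The resolution is to not take absolute values too early: instead lower-bound $\E_{i\sim_{2s}j}\pC(X_i,X_j)$ directly by $\E_{\text{spider}}|\E_{X_w}[(\cdots)(\cdots)]|$ and then, since $i$ and $j$ play symmetric roles and the inner quantities have matching sign structure as functions of $X_w$ (both are the deviation of the same arm-type), the expectation over $X_w$ is of a quantity that is a sum of squares in the relevant sense, so no cancellation occurs and $|\E_{X_w}[\cdots]| = \E_{X_w}[(\text{arm deviation})^2] \ge (\E_{X_w}|\text{arm deviation}|)^2$. Making this sign bookkeeping precise — in particular exploiting that the two arms of the spider are conditionally i.i.d. given $w$ so the cross term is literally a second moment — is the crux, and once it is in hand the rest is the clean Jensen/induction bookkeeping sketched above.
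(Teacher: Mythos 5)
Your high-level plan (double the walk length by induction, use a spider rooted at a stationary vertex, let constants of $16$ absorb per-level losses) matches the paper's outline, but the central step of your argument does not go through. Your ``local-to-spider'' inequality rests on the identity
\[
\Pr(X_i{=}1,X_j{=}1)-\Pr(X_i{=}1)\Pr(X_j{=}1)
=\E_{X_w}\bigl[(\Pr(X_i{=}1\mid X_w)-\Pr(X_i{=}1))(\Pr(X_j{=}1\mid X_w)-\Pr(X_j{=}1))\bigr],
\]
i.e.\ on $X_i\perp X_j\mid X_w$. This conditional independence is simply not available: $X_1,\ldots,X_n$ are Sherali--Adams local random variables, and the spider is a structure on the \emph{graph} (it determines which pairs of vertices are distributed as $s$-step or $2s$-step walk endpoints), not on the pseudodistribution, whose $3$-local marginal on $\{i,j,w\}$ is adversarial subject only to consistency. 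Worse, correlation does not compose through a single middle vertex even for genuine distributions: take $X_i,X_j$ independent uniform bits and $X_w$ equal to $X_i$ or $X_j$ each with probability $\tfrac12$; then $\pC(X_i,X_w)$ and $\pC(X_j,X_w)$ are both constants while $\pC(X_i,X_j)=0$. So a two-armed spider cannot yield any lower bound of the form $\pC(X_i,X_j)\gtrsim \E_{X_w}[\,\cdot\,]$, and the sign/second-moment bookkeeping you flag as ``the crux'' cannot be repaired within that template.

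The missing idea in the paper's proof is to use a spider with \emph{many} arms, $k=\lceil 4/\delta^2\rceil$ independent $s$-step walks $i_1,\ldots,i_k$ from the root $r$, and to exploit only the fact that the covariance matrix of the $2k+1$ local indicator variables $\Ind[X_{i_j}{=}a],\Ind[X_r{=}b]$ is a true covariance matrix (by locality) and hence PSD. Testing it against the vector whose arm-entries are the signs of the root--arm covariances and whose root-entry is $-\alpha$, expanding $0\le u^\top C u$, and averaging over the spider gives
\[
0\le 2k(k-1)\,\E_{i\sim_{2s}j}\pC(X_i,X_j)-2\alpha k\,\E_{i\sim_s j}\pC(X_i,X_j)+(\alpha^2+2k),
\]
so with $\alpha=\delta k$ and $k\ge 4/\delta^2$ one gets $\E_{i\sim_{2s}j}\pC\ge\tfrac14\delta^2$, which is the per-level recursion. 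The point is that if the root were $\delta$-correlated with each of $k\gg 1/\delta^2$ arms while the arms were pairwise nearly uncorrelated, the covariance matrix would fail to be PSD; this second-moment pigeonhole replaces the Markov property you assumed, and it is exactly why the arm count (hence the locality $2\lceil(16/\gamma)^t\rceil+1$) scales like $1/\delta^2$ at the deepest level rather than arising from ``unrolling'' a binary spider.
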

\begin{proof}
We will prove the following claim:
\begin{claim*}
For any integer $s$ and $(2\lceil \frac{4}{\delta^2}\rceil + 1)$-local Boolean random variables $X_1,\ldots,X_n$, if $\E_{i \sim_s j}\pC(X_i,X_j) \ge \delta$, 
then $\E_{i \sim_{2s} j} \pC(X_i,X_j) \ge \frac{1}{4}\delta^2$.
\end{claim*}
We have assumed that $t = 2^\ell$, so we now apply this claim recursively $\ell$ times starting with $\delta = \gamma$, and we end up with a lower bound of
\[
\E_{i \sim_t j} \pC(X_i,X_j) \ge \frac{1}{4}\left(\frac{1}{4} \left(\cdots\frac{1}{4}\left( \frac{1}{4}\gamma^2\right)^2 \cdots\right)^2\right)^2 = \gamma^{2^\ell} \cdot \frac{1}{4} \cdot \frac{1}{4^2} \cdot \frac{1}{4^4} \cdots \frac{1}{4^{2^\ell}} \ge \gamma^{2^\ell} 4^{-2\cdot 2^\ell},
\]
which will give us the lemma.

Now, we prove the claim.
Choose a root vertex $r \sim \pi$, and let $i_1,\ldots,i_k$ be a set of $k = \lceil\frac{4}{\delta^2}\rceil$ vertices sampled independently, for each by taking an $s$-step random walk from $r$, $i_\ell \sim_s r$. 
Define the vectors $Z^{(1)},Z^{(-1)} \in \R^{2k+1}$ with the first $2k$ entries indexed by pairs $(j,a)$ for $j \in [k]$ and $a \in \{\pm 1\}$ and one singleton entry indexed by $r$, with $Z^{(b)}_{j,1} = \Ind[X_{i_j} = 1]$ and $Z^{(b)}_{j,-1} = \Ind[X_{i_j} = -1]$ for $j \in [k]$ and $Z^{(b)}_{r} = \Ind[X_r = b]$.
Let $C^{(1)},C^{(-1)} \in \R^{(2k+1) \times (2k+1)}$ be the covariance matrices
\[
C^{(b)} = \pE[Z^{(b)}(Z^{(b)})^\top] - \pE[Z^{(b)}]\pE[Z^{(b)}]^\top,
\]
for $b \in \{\pm 1\}$.
Since each $Z$ involves only $2k+1$ variables, the $C$ are covariance matrices of a true distribution and are thus positive semidefinite.
Further, let $u^{(1)},u^{(-1)} \in \R^{2k+1}$ be the vectors with 
\[
u^{(b)}_{S} =\begin{cases} \sgn\left(\Pr[X_{i_j} = a, X_r = b] - \Pr[X_{i_j} = a]\Pr[X_r = b]\right) & S = (j,a) \text{ for }j \in [k], a \in \{\pm 1\}\\
- \alpha & S = r 
\end{cases}
\]
for some $\alpha > 0$ to be chosen later.
Then we have that
\begin{align*}
0 &\le (u^{(1)})^\top C^{(1)} u^{(1)} + (u^{(-1)})^\top C^{(-1)} u^{(-1)}\\
& = 
\sum_{j,\ell \in [k]}\sum_{b \in \{\pm 1\}}\sum_{a,a' \in \{\pm 1\}}\left(\Pr[X_{i_j} = a, X_{i_{\ell}} = a'] - \Pr[X_{i_j} = a]\Pr[X_{i_{\ell}} = a']\right)u^{(b)}_{(j,a)}u^{(b)}_{\ell,a'}\\
&\qquad -2\alpha\sum_{j \in [k]} \sum_{a,b \in \{\pm 1\}} \left|\Pr[X_{i_j} = a, X_{r} = b] - \Pr[X_{i_j} = a]\Pr[X_{r} = b]\right|\\
&\qquad + \alpha^2 \sum_{b \in \{\pm 1\}} \left|\Pr[X_{r} = b] - \Pr[X_r = b]^2\right| \\
&\le 2\left(\sum_{j \neq \ell \in [k]} \pC(X_{i_j},X_{i_\ell}) \right) 
- 2\alpha \left(\sum_{j \in [k]} \pC(X_{i_j},X_r)\right)
+ \left(\alpha^2 \pV(X_r) + 2\sum_{j \in [k]} \pV(X_{i_j})\right),
\end{align*}
where the final inequality follows from the fact that the first $k$ entries of $u^{(b)}$ are signs.
Each pair $X_{i_j}, X_r$ is distributed identically to a pair from $i_j \sim_s r$, each pair $X_{i_j},X_{i_\ell}$ with $j \neq \ell$ is distributed identically to a pair from $i_j \sim_{2s}i_\ell$, and each $i_j,r$ are distributed according to $\pi$.
Thus, taking the expectation of the above inequality over a random choice of $r, i_1,\ldots,i_k$, we have
\[
0 \le
2k(k-1)\E_{i \sim_{2s} j} \pC(X_i,X_j) 
- 2\alpha k \E_{i \sim_s j} \pC(X_{i},X_j)
+ (\alpha^2 + 2k) \E_{i \sim \pi} \pV(X_i).
\]
Rearranging and simplifying, using $\pV(X_i) \leq 1$,
\[
\frac{\alpha \delta}{k} - \frac{\tfrac{1}{2}\alpha^2 + k}{k^2} \le \E_{i\sim_{2s} j} \pC(X_i,X_j)
\]
And choosing $\alpha = \delta k$ to maximize the left-hand side, as well as the assumption that $k \ge \frac{4}{\delta^2}$, we have our desired bound.
\end{proof}

Using Lemma~\ref{lem:loc-to-t} we can lower bound global correlation in terms of the local correlation and the trace.

\begin{lemma}\label{lem:loc-to-glob}
Let $\tau > 0$, and suppose $G$ is an $n$-vertex graph with no isolated vertices and with symmetric normalized adjacency matrix $N = D^{-\hf} A D^{-\hf}$. 
Let $X_1,\ldots,X_n$ be Boolean $T$-local random variables with $T \ge 2\left(\frac{16}{\delta}\right)^t+1$ for $t$ some power of two.
If the local correlation $\E_{(i,j)\in E} \pC(X_i,X_j) \ge \delta$, 
then the global squared correlation is lower bounded by
\[
\E_{i,j \sim \pi} \pC(X_i,X_j)^2 \ge \frac{\left(\frac{1}{16}\delta\right)^{2t}}{\Tr(N^{2t})}.
\]
\end{lemma}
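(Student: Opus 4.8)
The plan is to feed Lemma~\ref{lem:loc-to-t} into a single Cauchy--Schwarz step, where the loss incurred in passing from ``$t$-step walk correlation'' to ``global correlation'' is exactly the spectral quantity $\Tr(N^{2t})$. First I would invoke Lemma~\ref{lem:loc-to-t} with $\gamma = \delta$: the hypothesis $\E_{(i,j)\in E}\pC(X_i,X_j) = \E_{i\sim j}\pC(X_i,X_j) \ge \delta$ together with the locality bound $T \ge 2(\tfrac{16}{\delta})^t + 1$ gives
\[
\E_{i \sim_t j}\pC(X_i,X_j) \ge \left(\tfrac{1}{16}\delta\right)^t .
\]
So it remains to show $\E_{i,j\sim\pi}\pC(X_i,X_j)^2 \ge \left(\E_{i\sim_t j}\pC(X_i,X_j)\right)^2 / \Tr(N^{2t})$.

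Next I would set up notation: write $d_i = D_{ii}$, $\mathrm{vol} = \sum_i d_i$, so the stationary measure is $\pi_i = d_i/\mathrm{vol}$, and let $W = D^{-1}A$ be the walk matrix, so that the joint law of $i\sim_t j$ assigns mass $\pi_i (W^t)_{ij}$ to the pair $(i,j)$ (both marginals equal $\pi$, since $W^t$ is row-stochastic and $\pi^\top W^t = \pi^\top$). The key identity is $W = D^{-1/2}ND^{1/2}$, hence $W^t = D^{-1/2}N^tD^{1/2}$ and $(W^t)_{ij} = \sqrt{d_j/d_i}\,(N^t)_{ij}$. Then, splitting $\pi_i (W^t)_{ij}\,\pC(X_i,X_j) = \bigl(\pi_i (W^t)_{ij}/\sqrt{\pi_i\pi_j}\bigr)\cdot\bigl(\sqrt{\pi_i\pi_j}\,\pC(X_i,X_j)\bigr)$ and applying Cauchy--Schwarz over the index set of pairs $(i,j)$,
\[
\left(\E_{i\sim_t j}\pC(X_i,X_j)\right)^2 \le \left(\sum_{i,j}\frac{\pi_i^2 (W^t)_{ij}^2}{\pi_i\pi_j}\right)\cdot \E_{i,j\sim\pi}\pC(X_i,X_j)^2 .
\]
The first factor simplifies cleanly: $\dfrac{\pi_i^2 (W^t)_{ij}^2}{\pi_i\pi_j} = \dfrac{d_i}{d_j}\cdot\dfrac{d_j}{d_i}\,(N^t)_{ij}^2 = (N^t)_{ij}^2$, so the factor equals $\sum_{i,j}(N^t)_{ij}^2 = \|N^t\|_F^2 = \Tr(N^{2t})$ using that $N$ is symmetric. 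Combining with Step~1 gives the claimed bound $\E_{i,j\sim\pi}\pC(X_i,X_j)^2 \ge (\tfrac{1}{16}\delta)^{2t}/\Tr(N^{2t})$.

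The argument is short and the only place requiring care is the bookkeeping in the second paragraph: correctly relating the walk matrix $W$, the symmetric normalized matrix $N$, and the degree-weighted stationary measure $\pi$ for a possibly irregular weighted graph, and choosing the Cauchy--Schwarz weights (namely $\sqrt{\pi_i\pi_j}$) so that the leakage factor collapses to precisely $\Tr(N^{2t})$ rather than some weighted variant. There is no real obstacle beyond keeping the degree factors straight; the spectral content is entirely packaged inside Lemma~\ref{lem:loc-to-t} and the identity $W^t = D^{-1/2}N^tD^{1/2}$.
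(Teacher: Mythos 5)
Your proposal is correct and follows essentially the same route as the paper: apply Lemma~\ref{lem:loc-to-t} to pass from local correlation to $t$-step walk correlation, then a single Cauchy--Schwarz with weights $\sqrt{\pi_i\pi_j}$ and the identity $D^{1/2}P^tD^{-1/2}=N^t$ to collapse the leakage factor to $\|N^t\|_F^2=\Tr(N^{2t})$. The degree bookkeeping you carry out explicitly is exactly the computation in the paper's proof.
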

\begin{proof}
By Lemma \ref{lem:loc-to-t}, the correlation of endpoints of $t$-length random walks is bounded by the local correlation,
\begin{align}
 \left(\frac{1}{16}\E_{i \sim j} \pC(X_i,X_j)\right)^{t}\label{eq:local-to-t}
\le \E_{i\sim_t j} \pC(X_i,X_j)
\end{align}
Further, letting $P = D^{-1} A$ be the transition matrix for the random walk on $G$ and letting $\pi$ be the stationary measure, we have that
\begin{align}
\E_{i\sim_t j} \pC(X_i,X_j)
&= \sum_{i,j} \pi_i\cdot (P^{t})_{i,j}\cdot \pC(X_i,X_j)\nonumber \\
&\le \sqrt{\left(\sum_{i,j} \frac{\pi_i}{\pi_j}(P^t)_{i,j}^2\right) \left(\sum_{i,j}\pi_i \pi_j \pC(X_i,X_j)^2\right)}\nonumber \\
&= \|D^{\hf} P^t D^{-\hf}\|_F \left(\E_{i,j \sim \pi}  \pC(X_i,X_j)^2 \right)^{\hf},\label{eq:frob-global}
\end{align}
where to obtain the first inequality we have applied Cauchy-Schwarz and the assumption that $G$ has no isolated vertices (which ensures $\pi_j \neq 0$).
Now, since the symmetric normalized adjacency matrix $N$ has the property that $N^t = D^{\hf} P^t D^{-\hf}$.
Therefore, $\|D^{\hf} P^t D^{-\hf}\|_F = \sqrt{\Tr(N^{2t})}$, and we can combine (\ref{eq:frob-global}) with (\ref{eq:local-to-t}) to deduce that
\[
\left(\frac{1}{16} \E_{i \sim j}\pC(X_i,X_j)\right)^{2t} \le \Tr(N^{2t}) \cdot \E_{i,j \sim \pi} \pC(X_i,X_j)^2,
\]
from which the lemma follows.
\end{proof}

\subsection{Trace bound for low threshold rank graphs}
\label{sec:trace-bd}
In the previous subsection, we obtained a bound on the global correlation in terms of the local correlation and the trace of a power of the normalized adjacency matrix.
Now, we will show a bound on the trace of powers of graphs in terms of the threshold rank.
Low threshold rank requires that there are few large-magnitude {\em positive} eigenvalues but does not provide explicit control over large-magnitude negative eigenvalues. This lemma provides a consequence of bounded threshold rank for the trace of powers of the normalized adjacency matrix.
\begin{claim}\label{claim:thresh-trace}
Let $N = D^{-\hf} A D^{-\hf}$ be the symmetric normalized adjacency matrix of a graph $G$.
If $\rank_{\tau}(G) \le k$, then for any integer $t \ge 1$, $\Tr(N^{2t}) \le 2(k + n \tau^{2t -1})$.
\end{claim}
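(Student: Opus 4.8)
The plan is to reduce everything to two elementary facts about $N = D^{-\hf}AD^{-\hf}$. First, because $A$ has nonnegative entries, so does $N$, and hence so does every power $N^m$; in particular $\Tr(N^m) = \sum_i (N^m)_{ii} \ge 0$ for all $m \ge 1$ (equivalently, $\Tr(N^m)$ is a nonnegative combination of weights of closed walks of length $m$). Second, $I + N \succeq 0$: substituting $x_i = \sqrt{d_i}\,z_i$ gives $x^\top(I+N)x = \tfrac12\sum_{ij} w_{ij}(z_i+z_j)^2 \ge 0$, so every eigenvalue of $N$ lies in $[-1,1]$.

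With these in hand I would start from the identity
\[
\Tr(N^{2t}) = \Tr\!\big(N^{2t-1}(I+N)\big) - \Tr(N^{2t-1}).
\]
By the first fact $\Tr(N^{2t-1}) \ge 0$, so it suffices to show $\Tr\!\big(N^{2t-1}(I+N)\big) \le 2(k + n\tau^{2t-1})$. Diagonalizing the symmetric matrix $N$ with eigenvalues $\lambda_1 \ge \cdots \ge \lambda_n$ (all in $[-1,1]$), this trace equals $\sum_i \lambda_i^{2t-1}(1+\lambda_i)$. Every term with $\lambda_i \le 0$ is nonpositive, since then $\lambda_i^{2t-1} \le 0$ while $1+\lambda_i \ge 0$ by the second fact. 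Every term with $\lambda_i > 0$ is at most $2\lambda_i^{2t-1}$, because $1+\lambda_i \le 2$. Hence $\Tr\!\big(N^{2t-1}(I+N)\big) \le 2\sum_{i : \lambda_i > 0}\lambda_i^{2t-1}$.

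Finally I would bound $\sum_{i:\lambda_i > 0}\lambda_i^{2t-1}$ using the threshold rank: at most $k$ of the positive eigenvalues exceed $\tau$, and each of those contributes at most $1$ (as $\lambda_i \le 1$); every other positive eigenvalue is at most $\tau$, so contributes at most $\tau^{2t-1}$, and there are at most $n$ of them. Thus $\sum_{i:\lambda_i>0}\lambda_i^{2t-1} \le k + n\tau^{2t-1}$, and combining the last three displays gives $\Tr(N^{2t}) \le 2(k + n\tau^{2t-1})$.

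There is no real obstacle here; the only point requiring care is the sign bookkeeping in $\sum_i \lambda_i^{2t-1}(1+\lambda_i)$, which is exactly where the one-sided definition of threshold rank is used: the large-magnitude \emph{negative} eigenvalues of $N$ — which threshold rank does not control — contribute nothing, precisely because $N^{2t-1}$ is an odd power and $I+N$ is positive semidefinite. Everything else is the observation $\Tr(N^{2t-1}) \ge 0$, immediate from the entrywise nonnegativity of $N$.
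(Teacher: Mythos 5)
Your proof is correct and takes essentially the same route as the paper: both hinge on the same three ingredients --- entrywise nonnegativity of $N$ giving $\Tr(N^{2t-1})\ge 0$, the spectrum of $N$ lying in $[-1,1]$, and the threshold-rank bound on the positive eigenvalues --- and your identity $\Tr(N^{2t})\le\Tr\bigl(N^{2t-1}(I+N)\bigr)$ with its per-eigenvalue sign analysis is a streamlined repackaging of the paper's decomposition $N=N_++N_-$. One cosmetic slip: $I+N\succeq 0$ gives only $\lambda_i\ge -1$; the upper bound $\lambda_i\le 1$, which you also use for the eigenvalues above $\tau$, needs the companion fact $I-N\succeq 0$, obtained from the same substitution with $(z_i-z_j)^2$ in place of $(z_i+z_j)^2$.
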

\begin{proof}
Any symmetric matrix can be written as the sum of its projections on to the positive semidefinite and negative definite cones.
Let $N = N_+ + N_-$ where $N_+$ is the projection of $N$ to the positive semidefinite cone.
Then,
\[
\Tr(N_+^{2t + 1}) \le k + n\cdot \tau^{2t + 1},
\]
since there are at most $k$ eigenvalues $\ge \tau$ and $\|N\|\le 1$.
We also have that $\Tr(N^\ell) \ge 0$ for any integer $\ell$, since the entries of $N$ are nonnegative.
So, since $N = N_+ + N_-$, we have that
\[
0 \le \Tr(N^{2t + 1}) = \Tr(N_+^{2t+1}) + \Tr(N_-^{2t + 1}) \le (k + n \cdot \tau^{2t + 1}) + \Tr(N_-^{2t + 1}),
\]
which implies $\Tr(N_-^{2t + 1}) \ge - (k + n \cdot \tau^{2t + 1}) $.

Since $\|N\| \le 1$, $\Tr(N_-^{2t + 2}) \le |\Tr(N_-^{2t+1})|$.
The same is true for $N_+$, and from this we have
\[
\Tr(N^{2t+2}) \le 2(k + n \cdot \tau^{2t + 1})
\]
as desired.
\end{proof}

With Lemma \ref{lem:loc-to-glob}, this shows that the local and global correlation are related in low-threshold-rank graphs.

\subsection{Proof of main theorem}\label{sec:together}


With all these pieces we are ready to prove Theorem~\ref{th.main.maxcut}, restated more formally below.
(Theorem~\ref{th.low-threshold-rank} follows from a subset of the proof below, omitting the graph partitioning argument.)

\begin{theorem}[Sherali-Adams for Max-Cut]\label{thm:main-mc}
For every $\alpha > 0$ there is an $\e_\alpha = \exp(-O(1/\alpha^3))$ such that if $G$ is an $n$ node graph with $n$ sufficiently large and $\SA_{n^{\alpha}}(G) \geq 1 -\e_\alpha$, then there is a cut which cuts at least a $(\tfrac 1 2 + \eps_\alpha)$-fraction of the edges of $G$, and there is a polynomial time rounding algorithm for the degree-$n^{\alpha}$ Sherali-Adams LP that produces such a cut.
Consequently, the degree-$n^{\alpha}$ Sherali-Adams LP value provides a $(\frac{1}{2} +\eps_\alpha)$-approximation to Max-Cut.
\end{theorem}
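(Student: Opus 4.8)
The plan is to assemble the three ingredients exactly as indicated in the overview, being careful with the quantitative bookkeeping that links $\alpha$, the threshold $\tau$, the walk length $t$, and the final $\e_\alpha$. Fix $\alpha > 0$. First I would choose the parameters of the partitioning step: apply Theorem~\ref{thm:partition} with threshold rank target $n^{\alpha/4}$ and a threshold $\tau$ small enough (a function of $\alpha$ only) that the conditioning budget below will fit. This partitions $G$ into components $G_1,\ldots,G_m$, each with $\rank_\tau(G_i) \le n^{\alpha/4}$, while cutting at most a $1 - \beta$ fraction of edges, where $\beta = \exp(-O(\alpha^{-2}\log\tfrac1\tau))$. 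Since $\SA_{n^\alpha}(G) \ge 1 - \e_\alpha$, a feasible Sherali--Adams pseudodistribution $\mu$ of value $\ge 1-\e_\alpha$ restricts to a pseudodistribution on each $G_i$; a Markov argument shows that on a $(1-O(\sqrt{\e_\alpha}/\beta))$-fraction of the edge mass (weighted by the internal edge mass of each component), the component-restricted objective value is $\ge 1 - O(\sqrt{\e_\alpha}/\beta)$. Choosing $\e_\alpha \ll \beta^2$ makes this error negligible compared to $\beta$.

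Next, within each component $G_i$ I would run global correlation rounding. Pick a target local correlation $\delta$ (a small absolute constant, say $\delta = \tfrac1{100}$), and a walk length $t$ a power of two chosen so that $\tau^{2t-1} \le k/n$ with $k = n^{\alpha/4}$ — this needs $t = \Theta(\log(n/k)/\log(1/\tau)) = \Theta(\alpha^{-1}\log n/\log(1/\tau))$, which with $\tau$ a constant depending on $\alpha$ is $O_\alpha(\log n)$. By Claim~\ref{claim:thresh-trace}, $\Tr(N_i^{2t}) \le 2(k + n\tau^{2t-1}) \le 4k \le 4 n^{\alpha/4}$. By Lemma~\ref{lem:l-t-g} (the composition of Lemma~\ref{lem:loc-to-glob} and Claim~\ref{claim:thresh-trace}), it suffices to drive the global squared correlation of the component pseudodistribution down to $\kappa := \tfrac12 (\delta/16)^{2t}/(4k)$, and then Lemma~\ref{lem:l-t-g} yields local correlation $\le 16\cdot(\delta/16) = \delta$. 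By Theorem~\ref{thm:gcr-conditioning} with $q=2$, conditioning on $k' = O(1/\kappa) = O(k\cdot(16/\delta)^{2t})$ variables achieves this, provided the Sherali--Adams degree exceeds $k' + (2\lceil(16/\delta)^t\rceil+1)$ — the local-variable count needed by Lemma~\ref{lem:loc-to-t}. Plugging in $t = O_\alpha(\log n)$ and $k = n^{\alpha/4}$, this budget is $n^{\alpha/4}\cdot(16/\delta)^{O_\alpha(\log n)} = n^{\alpha/4 + O_\alpha(1)}$, which is $\le n^\alpha$ once $\tau$ is chosen small enough that the exponent $O_\alpha(1)$ coming from $(16/\delta)^{O(\log n/\log(1/\tau))}$ is below $\alpha/2$; this is the step that pins down the dependence $\e_\alpha = \exp(-O(1/\alpha^3))$, since $\tau$ must be $\exp(-O(1/\alpha))$-ish, $\beta = \exp(-O(1/\alpha^3))$, and $\e_\alpha \ll \beta^2$.

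With low local correlation in hand on each good component, Lemma~\ref{lem:gcr-rounding} gives an (efficiently derandomizable) independent rounding that cuts at least (value $-\delta$) $\ge 1 - O(\sqrt{\e_\alpha}/\beta) - \delta \ge \tfrac34$ of the internal edges of each good component; aggregating, the cut cuts $\ge \tfrac34\cdot(1 - o(1))\cdot\beta$ of the total edge mass that lies inside components. Finally, I would independently flip the sign of the assignment on each component with probability $\tfrac12$: edges crossing the partition (a $\le 1-\beta$ fraction) are then cut with probability exactly $\tfrac12$, while internal edges keep their cut status in expectation. In expectation the resulting cut has value $\ge \tfrac12(1-\beta) + \tfrac34\beta(1-o(1)) \ge \tfrac12 + \tfrac15\beta =: \tfrac12 + \e_\alpha$ for $n$ large, and a fixing/derandomization (conditional expectations over the $m$ sign bits) makes this constructive in polynomial time. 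Combined with the partition and conditioning, all steps run in $\mathrm{poly}(n)\cdot 2^{\tilde O(n^\alpha)}$ time on the LP of degree $n^\alpha$, and since the LP value upper bounds the true Max-Cut, producing a cut of value $\ge \tfrac12 + \e_\alpha$ whenever the LP value is $\ge 1-\e_\alpha$ gives the claimed $(\tfrac12+\e_\alpha)$-approximation.

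The main obstacle I anticipate is the parameter juggling in the second paragraph: one must simultaneously keep the conditioning budget $k\cdot(16/\delta)^{O(t)}$ below $n^\alpha$ (which forces $t$, hence $\tau$, hence the edge-loss parameter $\beta$, to be controlled by $\alpha$) while keeping $\beta$ large enough that the final gain $\tfrac15\beta$ is a genuine positive constant for each fixed $\alpha$ and that $\e_\alpha \ll \beta^2$ can be satisfied. Getting the chain $\alpha \to \tau \to t \to \beta \to \e_\alpha$ to close consistently, and verifying the $\exp(-O(1/\alpha^3))$ bound falls out, is where the real care is needed; everything else is a direct invocation of the cited lemmas.
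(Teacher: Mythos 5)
Your proposal is correct and follows essentially the same route as the paper: partition via Theorem~\ref{thm:partition}, condition within each low-threshold-rank component using Theorem~\ref{thm:gcr-conditioning}, invoke Lemma~\ref{lem:loc-to-glob} plus Claim~\ref{claim:thresh-trace} with $t=\Theta(\log n/\log(1/\tau))$ and $\tau=\exp(-O(1/\alpha))$ to get constant local correlation, round independently (Lemma~\ref{lem:gcr-rounding}), and apply random per-component sign flips, yielding the same parameter chain $\e_\alpha=\exp(-O(1/\alpha^3))$. The only cosmetic difference is that the paper first case-splits on whether $\rank_\tau(G)\le n^{\alpha/2}$ (which also yields Theorem~\ref{th.low-threshold-rank}), whereas you partition unconditionally, which is equally valid for this theorem.
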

\begin{proof}
Let $\e_\alpha$ be a small number to be set later.
Let $X_1',\ldots,X_n'$ be $T$-local random variables for $T \ge n^\alpha$, and let $\delta = \tfrac{1}{64}$ and $\tau = \left(\frac{\delta}{16}\right)^{2/\alpha}$.
Assume that the objective value is $\SA(G) = \E_{i \sim j} \E_{X_i',X_j'} \Ind(X_i' \neq X_j') \geq 1 - \e_\alpha$.

If $\rank_\tau(G) \le n^{\alpha/2}$, then we may apply global correlation rounding.
We sample $i_1,\ldots,i_k \sim \pi$, where $\pi$ is the stationary measure on $G$, for some $k \leq n^{\alpha}/2$, and then sample values for $X_{i_1}',\ldots,X_{i_k}'$ according to their local distribution.
Conditioning on those values, we obtain $(T - k)$-local Boolean random variables $X_1,\ldots,X_{n}$, we may assume that both $\E_{i,j \sim \pi} \pC (X_i,X_j)^2 \leq 12 n^{-\alpha}$ and that $\E_{i \sim j} \E_{X_i,X_j} \Ind(X_i \neq X_j) \geq 1 - 10\e_\alpha$, as guaranteed by Theorem \ref{thm:gcr-conditioning} together with Markov's inequality.
We now apply Lemma \ref{lem:loc-to-glob} with $t$ the largest power of two such that
\[
\frac{(1-\tfrac{1}{2}\alpha)\log n}{2\log \frac{1}{\tau}} \le t \le \frac{\alpha\log n}{2\log \frac{16}{\delta}}.
\]
It may be checked that our choice of $\delta$ and $\tau$ ensures that such a $t$ exists.
Further, this guarantees that $2\left(\frac{16}{\delta}\right)^t \le n^\alpha$ (so that the locality of our local random variables is high enough to apply \ref{lem:loc-to-glob}) and also $n \tau^{2t-1} \le n^{\alpha/2}$.
From Lemma \ref{lem:loc-to-glob} we have
\[
\left(\frac{1}{16}\E_{i \sim j}\pC(X_i,X_j)\right)^{2t} 
\le \Tr(N^{2t}) \cdot \E_{i,j} \pC(X_i,X_j)^2
\le \Tr(N^{2t}) \cdot 12 n^{-\alpha}. 
\]
Since $\rank_\tau(G) \le n^{\alpha/2}$ and since by our choice of $t$ we have $n\tau^{2t -1} \le n^{\alpha/2}$ we can apply Claim \ref{claim:thresh-trace} to the above to obtain
\[
\left(\frac{1}{16}\E_{i \sim j}\pC(X_i,X_j)\right)^{2t}
\le
2(\rank_\tau(G) + n\tau^{2t-1})\cdot 12 n^{-\alpha}
\le 48 n^{-\alpha/2}.
\]
This implies that the local correlation is at most
\[
\E_{i\sim j} \pC(X_i,X_j) \le 16\cdot \left(48 n^{-\alpha/2}\right)^{1/t} \le 2\delta,
\] 
for $n$ sufficiently large.
Applying Lemma \ref{lem:gcr-rounding} we can round to obtain a solution of value at least $\SA(G) - 2\delta$.

Otherwise, if $\rank_\tau(G) > n^{\alpha/2}$, we apply Theorem \ref{thm:partition} to obtain a partition of $G$ into pieces $G_1,\ldots,G_m$ of threshold rank at most $n^{\alpha/2}$, such that the partition has expansion at most $1 - \exp(-O(\frac{1}{\alpha^2} \log \frac{1}{\tau})) = 1 - \exp(-C\frac{1}{\alpha^3})) \le 1 - \sqrt{\eps_{\alpha}}$ (where $C$ is a universal constant that comes from the partitioning theorem, and we have chosen the constant in the theorem statement to make this equality hold).
Since each piece has threshold rank at most $n^{\alpha/2}$, we may apply global correlation rounding to each piece as above to obtain an assignment $x^{(i)}$ on the variables of $G_i$ which obtains value $\ge \SA(G_i) - 2\delta$ within $G_i$.
Furthermore, because the objective value $\SA(G)$ remains at least $1 - 2\eps_{\alpha}$ even after conditioning, and since $\cup_{i \in [m]} |E[G_i]|$ accounts for a $\ge \sqrt{\eps_\alpha}$ fraction of the total edges in the graph, on average $\SA(G_i) \ge 1 - 2\sqrt{\eps_\alpha}$ and so we can round the Sherali--Adams solution within each piece to obtain a solution $x^{(i)}$ so that on average, the value of $x^{(i)}$ within $G_i$ is at least $1 - 2\sqrt{\eps_\alpha} - 2\delta$. 
That is, on the $\Omega(\eps_{\alpha})$ fraction of edges {\em not} cut by the partition, we have produced a max-cut of value close to the original LP value, which itself is close to $1$.
Now, we will use randomization to make sure that the edges cut by the partition get value at least $\frac{1}{2}$.

Choosing random signs $s_1,\ldots,s_m$ for each piece of the partition and choosing $s_1 x^{(1)},\ldots, s_m x^{(m)}$ as our global solution will give a solution of expected value $\ge \frac{1}{2}\cdot (1-\eps_\alpha) + (1-2\sqrt{\eps_{\alpha}} - 2\delta)\cdot \eps_{\alpha} = \frac{1}{2} + (\frac{1}{2}-2\delta)\eps_{\alpha} - 2\eps_{\alpha}^{3/2}$, since each edge crossing a partition is cut with probability $\frac{1}{2}$ and the value within the union of the parts is at least $1-2\sqrt{\eps_\alpha} - 2\delta$.
We can think of the partition as defining a new max-cut instance in which the $s^{(i)}$ are the max cut variables, so this can be derandomized in polynomial time by applying standard arguments (e.g. using the greedy algorithm).
By our choice of $\eps_{\alpha}$, we have also that $2\eps_{\alpha}^{3/2} < \frac{1}{8}\eps_\alpha$, and we chose $\delta \le \frac{1}{64}$, from which we obtain the objective value promised in the theorem statement (after rescaling $\eps_{\alpha}$).
This completes the proof.
\end{proof}

\section{Unique Games}
\label{sec:ug}
For the case of unique games, we need a slight modification of the proof of the Max-Cut result.
Rather than using the $\ell_1$ distance between joint and marginal distributions of variables as our notion of correlation, we will use the notion of permutation-correlation defined in Definition~\ref{def:pcor}.
We will make use of the following lemma, which we reproduce here for completeness:
\begin{lemma}\label{lem:gcr-ug}
Suppose that $(G,\Pi)$ is an instance of unique games on $2$-local variables $X_1,\ldots,X_n$ with constraints $\Pi$ given by $\sigma_{ij}:[q]\to[q]$ on each edge $(i,j) \in G$.
Suppose that $\E_{i\sim j} \pC_\pi(X_i,X_j) \le \delta$.
Then a random assignment $Y_1,\ldots,Y_n$ in which each $Y_i$ is sampled independently according to the marginals of $\{X_i\}$ has expected value at least
\[
\E_Y \val_G(Y) \ge \pE[\val_G(X)] - \delta.
\]
\end{lemma}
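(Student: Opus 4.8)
The plan is to compare the pseudoexpected value of the instance with the expected value of the independent rounding edge-by-edge, mirroring the argument for Lemma~\ref{lem:gcr-rounding} in the Max-Cut case but exploiting that a unique-games predicate ``sees'' the joint distribution only through a single permutation. Fix an edge $i \sim j$ with bijection $\sigma_{ij} : [q] \to [q]$, so that $\Pi_{ij}(x_i,x_j) = \Ind[x_j = \sigma_{ij}(x_i)]$. Since $X_1,\dots,X_n$ are $2$-local, the marginal $\mu_{ij}$ of $(X_i,X_j)$ is well-defined, and the contribution of this edge to $\pE[\val_G(X)]$ is $\sum_{a \in [q]} \Pr[X_i = a,\, X_j = \sigma_{ij}(a)]$, while the contribution to $\E_Y[\val_G(Y)]$ under independent sampling $Y_i \sim \{X_i\}$, $Y_j \sim \{X_j\}$ is $\sum_{a \in [q]} \Pr[X_i = a]\Pr[X_j = \sigma_{ij}(a)]$.

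Next I would bound the per-edge loss. Subtracting the two quantities above and then dropping signs,
\[
\sum_{a \in [q]} \Bigl( \Pr[X_i = a,\, X_j = \sigma_{ij}(a)] - \Pr[X_i = a]\Pr[X_j = \sigma_{ij}(a)] \Bigr) \ \le\ \sum_{a \in [q]} \bigl| \Pr[X_i = a,\, X_j = \sigma_{ij}(a)] - \Pr[X_i = a]\Pr[X_j = \sigma_{ij}(a)] \bigr| .
\]
The key observation is that $\sigma_{ij} \in \cS_q$, so the right-hand side is exactly the term realized by $\pi = \sigma_{ij}$ in the maximum defining $\pC_\pi(X_i,X_j)$; hence the loss on edge $i\sim j$ is at most $\pC_\pi(X_i,X_j)$.

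Finally, averaging over edges with their weights gives $\pE[\val_G(X)] - \E_Y[\val_G(Y)] \le \E_{i \sim j}\pC_\pi(X_i,X_j) \le \delta$, which rearranges to $\E_Y \val_G(Y) \ge \pE[\val_G(X)] - \delta$. There is no substantive obstacle; the only point deserving care is the observation that the unique-games constraint permutation on each edge is one of the candidates in the maximization defining permutation correlation. This is precisely why the weaker quantity $\pC_\pi$ (rather than the full correlation $\pC$) is enough to control independent rounding for Unique Games, as noted in the overview, and it is what makes the rest of the Unique Games argument go through with a local-to-global statement for $\pC_\pi$ in place of $\pC$.
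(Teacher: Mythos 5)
Your proposal is correct and follows essentially the same argument as the paper: compare the rounded value to the pseudoexpected value edge-by-edge, bound the per-edge loss by the sum of absolute deviations along the constraint permutation $\sigma_{ij}$, and observe that since $\sigma_{ij} \in \cS_q$ this is at most the maximum defining $\pC_\pi(X_i,X_j)$, then average over edges.
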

\begin{proof}
By definition of unique games we have that
\begin{align*}
\E_Y \val_G(Y)
&= \E_{i\sim j} \sum_{a \in [q]} \E_Y[\Ind(Y_i = a, Y_j = \sigma_{ij}(a))]\\
&= \E_{i \sim j} \sum_{a \in [q]} \pE[\Ind(Y_i = a)]\pE[\Ind(Y_j=\sigma_{ij}(a))]\\
&\ge \pE[\val_G(X)] - \E_{i \sim j} \sum_{a \in [q]} \left|\pE[\Ind(Y_i = a, Y_j = \sigma_{ij}(a))] - \pE[\Ind(Y_i = a)]\pE[\Ind(Y_j=\sigma_{ij}(a)] \right|\\
&\ge \pE[\val_G(X)] - \E_{i \sim j}\max_{\pi \in \cS_q} \sum_{a \in [q]} \left|\pE[\Ind(Y_i = a, Y_j = \pi(a))] - \pE[\Ind(Y_i = a)]\pE[\Ind(Y_j=\pi (a)]\right|\\
&= \pE[\val_G(X)] - \E_{i\sim j}\pC_\pi(X_i,X_j).
\end{align*}
The conclusion follows immediately.
\end{proof}

We can relate the local permutation-correlation to the global permutation-correlation via a slight modification of the proof of Lemma \ref{lem:loc-to-glob}.

\begin{lemma}\label{lem:loc-t-ug}
Let $G$ be a graph on $n$ vertices, let $\ell = 2^t$ be a power of two, and let $X_1,\ldots,X_n$ be $T$-local $q$-ary random variables variables with $T \ge \lceil(\frac{4}{\gamma})^\ell\rceil + 1$.
Then $\E_{i\sim j} \pC_\pi(X_i,X_j) \ge \gamma$ implies that $\E_{i \sim_\ell j}\pC_\pi(X_i,X_j) \ge (\frac{1}{4}\gamma)^\ell$.
\end{lemma}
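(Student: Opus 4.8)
The plan is to transcribe the proof of Lemma~\ref{lem:loc-to-t} with the $q$ indicator coordinates $\Ind[X_i = a]$, $a \in [q]$, playing the role of the two $\pm 1$ coordinates there, and with composition of permutations playing the role of the sign-doubling. As in Lemma~\ref{lem:loc-to-t}, the core is a \emph{doubling claim}: for every integer $s \ge 1$ and every $(\lceil 4/\delta^2 \rceil + 1)$-local $q$-ary family $X_1,\dots,X_n$, if $\E_{i \sim_s j} \pC_\pi(X_i,X_j) \ge \delta$ then $\E_{i \sim_{2s} j} \pC_\pi(X_i,X_j) \ge \tfrac14 \delta^2$. Granting the claim, I apply it $t$ times (recall $\ell = 2^t$) starting from $s = 1$ and $\delta_0 = \gamma$, so that after $m$ steps $\E_{i \sim_{2^m} j} \pC_\pi(X_i,X_j) \ge \delta_m$ where $\delta_m = \tfrac14 \delta_{m-1}^2$; the recursion solves to $\delta_m = 4^{-(2^m - 1)} \gamma^{2^m}$, so after $t$ steps $\E_{i \sim_\ell j} \pC_\pi(X_i,X_j) \ge 4^{-(\ell-1)} \gamma^\ell \ge (\tfrac14 \gamma)^\ell$. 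The localities required are for $\delta \in \{\delta_0,\dots,\delta_{t-1}\}$, and since $\delta_m$ decreases the binding one is $\lceil 4/\delta_{t-1}^2 \rceil + 1 = \lceil \tfrac14 (4/\gamma)^\ell \rceil + 1 \le \lceil (4/\gamma)^\ell \rceil + 1 \le T$, so the hypothesis on $T$ covers all $t$ applications.

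To prove the doubling claim I run the PSD argument of Lemma~\ref{lem:loc-to-t}. Pick a root $r \sim \pi$, sample $i_1,\dots,i_k \sim_s r$ independently with $k = \lceil 4/\delta^2 \rceil$, and for each $j$ fix a permutation $\sigma_j \in \cS_q$ with $\sum_a |\Pr[X_{i_j} = a, X_r = \sigma_j(a)] - \Pr[X_{i_j} = a]\Pr[X_r = \sigma_j(a)]| = \pC_\pi(X_{i_j}, X_r)$. For each $c \in [q]$ let $Z^{(c)}$ be the vector with entries $Z^{(c)}_{(j,a)} = \Ind[X_{i_j} = a]$ (for $j \in [k]$, $a \in [q]$) and $Z^{(c)}_r = \Ind[X_r = c]$; since $Z^{(c)}$ depends on only the $k+1$ variables $X_{i_1},\dots,X_{i_k},X_r$, its covariance matrix $C^{(c)}$ is a genuine covariance matrix, hence PSD. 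The one genuinely new ingredient is the test vector $u^{(c)}$: set $u^{(c)}_r = -\alpha$ (with $\alpha > 0$ chosen later) and, for each $j$, set $u^{(c)}_{(j,a)} = \sgn\bigl(\Pr[X_{i_j} = \sigma_j^{-1}(c), X_r = c] - \Pr[X_{i_j} = \sigma_j^{-1}(c)]\Pr[X_r = c]\bigr)$ when $a = \sigma_j^{-1}(c)$ and $u^{(c)}_{(j,a)} = 0$ otherwise, so that $u^{(c)}$ sees each block $j$ only through the single symbol that $\sigma_j$ matches to $c$. Expanding $0 \le \sum_{c \in [q]} (u^{(c)})^\top C^{(c)} u^{(c)}$ block by block: the $r$--$r$ block contributes $\alpha^2 \sum_c \Var(\Ind[X_r = c]) = \alpha^2 \pV(X_r) \le \alpha^2$; the $j$--$r$ cross terms, after re-indexing $c = \sigma_j(a)$, collapse to exactly $-2\alpha \sum_j \pC_\pi(X_{i_j}, X_r)$; each $j = j'$ block contributes $\sum_c \Var(\Ind[X_{i_j} = \sigma_j^{-1}(c)]) = \pV(X_{i_j}) \le 1$, for a total of at most $k$; and each ordered pair $j \ne j'$ contributes at most $\sum_c | \Cov(\Ind[X_{i_j} = \sigma_j^{-1}(c)], \Ind[X_{i_{j'}} = \sigma_{j'}^{-1}(c)]) |$, which, re-indexing by $a = \sigma_j^{-1}(c)$, equals $\sum_a | \Pr[X_{i_j} = a, X_{i_{j'}} = \sigma_{j'}^{-1}\sigma_j(a)] - \Pr[X_{i_j} = a]\Pr[X_{i_{j'}} = \sigma_{j'}^{-1}\sigma_j(a)] | \le \pC_\pi(X_{i_j}, X_{i_{j'}})$, since $\sigma_{j'}^{-1}\sigma_j \in \cS_q$ is an admissible permutation.

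Collecting the pieces gives, for every realization of $r, i_1,\dots,i_k$, the inequality $0 \le \sum_{j \ne j'} \pC_\pi(X_{i_j}, X_{i_{j'}}) - 2\alpha \sum_j \pC_\pi(X_{i_j}, X_r) + \alpha^2 + k$. Taking expectations and using, exactly as in Lemma~\ref{lem:loc-to-t}, that $(i_j, r)$ is distributed as $i \sim_s j$ and that $(i_j, i_{j'})$ for $j \ne j'$ is distributed as $i \sim_{2s} j$ (by reversibility of $P = D^{-1}A$ with respect to $\pi$), together with the hypothesis $\E_{i \sim_s j}\pC_\pi(X_i,X_j) \ge \delta$, yields $k^2 \E_{i\sim_{2s}j}\pC_\pi(X_i,X_j) \ge k(k-1)\E_{i\sim_{2s}j}\pC_\pi(X_i,X_j) \ge 2\alpha k \delta - \alpha^2 - k$. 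Choosing $\alpha = k\delta$ and using $k \ge 4/\delta^2$ gives $\E_{i\sim_{2s}j}\pC_\pi(X_i,X_j) \ge \delta^2 - \tfrac1k \ge \tfrac34 \delta^2 \ge \tfrac14 \delta^2$, which proves the claim and hence the lemma. The only step demanding real care is setting up the supports of the $u^{(c)}$ so that the cross terms collapse to $\pC_\pi(X_{i_j}, X_r)$ while the off-diagonal blocks stay controlled by $\pC_\pi(X_{i_j}, X_{i_{j'}})$ via the composed permutation $\sigma_{j'}^{-1}\sigma_j$; everything else is the Max-Cut argument verbatim. Note that the locality bound is independent of the alphabet size $q$ precisely because each $u^{(c)}$ is supported on a single coordinate in each block, so no factor of $q$ ever enters the counting.
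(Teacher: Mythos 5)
Your proposal is correct and follows essentially the same route as the paper's proof: the same root-plus-spider sampling, the same choice of maximizing permutations per leg, the same PSD covariance argument with sign/$-\alpha$ test vectors, the off-diagonal blocks controlled by the composed permutation $\sigma_{j'}^{-1}\sigma_j$, and the same choice $\alpha = \delta k$. Your embedding of $Z^{(c)}$ in $\R^{kq+1}$ with test vectors supported on one coordinate per block is just a cosmetic repackaging of the paper's vectors $Z^{(a)}_j = \Ind[X_{i_j}=p_j(a)] \in \R^{k+1}$, and your slightly different constants ($k=\lceil 4/\delta^2\rceil$, doubling loss $\tfrac14\delta^2$ versus the paper's $\lceil 2/\delta^2\rceil$ and $\tfrac12\delta^2$) still land within the stated locality bound and conclusion.
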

\begin{proof}
As for Lemma \ref{lem:loc-to-t}, the proof boils down to the following claim:
\begin{claim}
If $X_1,\ldots,X_n$ are $(\lceil \frac{2}{\delta^2}\rceil + 1)$-local random variables then $\E_{i \sim_t j} \pC_\pi(X_i,X_j) \ge \delta$ implies $\E_{i \sim_{2t} j} \pC_\pi(X_i,X_j) \ge \frac{1}{2}\delta^2$.
\end{claim}
From this claim we conclude the lemma in a manner identical to the proof of Lemma \ref{lem:loc-to-glob}.

To prove the claim, consider the following procedure for sampling $k+1$ variables: first the root vertex is sampled $r \sim \pi$, then $i_1,\ldots,i_k$ are sampled by taking a random walk in $G$ of length $\ell$ from $r$, so that $i_{\ell} \sim_t r$ for each $\ell \in [k]$.
We now define $q$ vectors $Z^{(1)},\ldots,Z^{(q)} \in \R^{k+1}$,
and $k$ permutations $p_j:[q]\to[q]$, one for each $j \in [k]$, so that 
\[
p_j = \argmax_\pi \sum_{a \in [q]} |\Ind[X_r = a, X_{i_j} = \pi(a)] - \Ind[X_r = a]\Ind[X_{i_j} = \pi(a)]|,
\]
For each $j \in [k]$,
we let $Z^{(a)}_{j} = \Ind[X_{i_j} = p_j(a)]$, and $Z^{(a)}_{k+1} = \Ind[X_r = a]$.
If the variables are $(k+1)$-local, we have that
\[
M^{(a)} = \pE[Z^{(a)}(Z^{(a)})^\top] - \pE[Z^{(a)}]\pE[Z^{(a)}]^\top \succeq 0.
\]

Now, we define the $q$ vectors $u^{(1)},\ldots,u^{(q)} \in \R^{k+1}$ by taking
\[
u^{(a)}_{j} = \begin{cases}
\sgn\left(\pE[\Ind[X_{r} = a, X_{i_j} = p_j(a)]] - \pE[X_r = a]\pE[X_{i_j} = p_j(a)]\right) &  j \in [k]\\
- \alpha & j = k+1.
\end{cases}
\]

For shorthand, let $\pC(X_{r,a},X_{i_j,b}) = |\pE[\Ind[X_r = a,X_{i_j} = b] - \pE[\Ind[X_r = a]]\pE[X_{i_j} = b]|$.
By the positive-semidefiniteness of the $M^{(a)}$, we have that
\begin{align*}
 0 
&\le \sum_{a = 1}^q (u^{(a)})^\top M^{(a)} u^{(a)} \\
&\le \sum_{a = 1}^q \left( \alpha^2 \pV(X_{r,a}) - 2\alpha\sum_{j \in [k]}\pC(X_{r,a},X_{i_j,p_j(a)}) + \sum_{j,\ell \in [k]} \pC(X_{i_j,p_j(a)},X_{i_\ell,p_\ell(a)})\right)\\
&= \alpha^2 \pV(X_{r}) - 2\alpha\sum_{j \in [k]} \pC_\pi(X_r,X_j) + \sum_{j,\ell \in [k]} \sum_{a = 1}^q \left( \pC(X_{i_j,p_j(a)},X_{i_\ell,p_\ell(a)})\right)\\
&\le \alpha^2 \pV(X_{r}) + \sum_{j \in [k]} \pV(X_{i_j}) - 2\alpha\sum_{j \in [k]} \pC_\pi(X_r,X_j) + \sum_{j\neq \ell \in [k]} \pC_\pi(X_{i_j},X_{i_\ell}),
\end{align*}
where in the third line we have used that $p_j$ were selected to be the maximizers of the permutation-correlation, and in the final line we have used that the permutation-correlation is defined with respect to the maximum permutation (which achieves a value at least as high as $p_j \circ p_\ell^{-1}$).
We have also used the shorthand $\pV(X_i) = \sum_{a \in [q]} \pV(X_{i_a})$.
Taking the expectation over $r,i_1,\ldots,i_k$,
\begin{align*}
0 &\le k(k-1) \E_{i \sim_{2t} j} \pC(X_i,X_j).
- 2\alpha k \E_{i \sim_t j} \pC_\pi(X_i,X_j) + (\alpha^2 + k)\E_{i \sim \pi} \pV(X_i)
\end{align*}
Now, using that $\pV(X_i) \le 1$ as well as our lower bound $\delta \le \E_{i\sim_t j} \pC_\pi(X_i,X_j)$, we re-arrange and simplify
\begin{align*}
\frac{2\alpha \delta}{k^2} - \frac{\alpha^2 + k}{k^2} 
&\le \E_{i \sim_{2t} j} \pC_\pi(X_i,X_j).
\end{align*}
Taking $\alpha = \delta k$ and using the assumption that $k \ge 2 \delta^{-2}$ gives us our desired bound.
\end{proof}

We can then obtain the following lower bound as a corollary:
\begin{lemma}\label{lem:loc-glob-ug}
Let $\tau > 0$, and suppose $G$ is a simple $n$-vertex graph with no isolated vertices with symmetric normalized adjacency matrix $N = D^{-\hf} A D^{-\hf}$. 
Let $X_1,\ldots,X_n$ be $(\left(\frac{4}{\delta}\right)^t+1)$-local variables supported on $[q]$ (for $t$ some power of $2$). 
If the local permutation correlation $\E_{(i,j)\in E} \pC_\pi (X_i,X_j) \ge \delta$, 
Then the global squared permutation correlation is lower bounded by
\[
\E_{i,j \sim \pi} \pC_\pi(X_i,X_j)^2 \ge \frac{\left(\frac{1}{4}\delta\right)^{2t}}{\Tr(N^{2t})}.
\]
\end{lemma}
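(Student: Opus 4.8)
The plan is to follow the proof of Lemma~\ref{lem:loc-to-glob} essentially verbatim, replacing the use of Lemma~\ref{lem:loc-to-t} by its permutation-correlation analogue Lemma~\ref{lem:loc-t-ug}, and the ordinary correlation $\pC$ by the permutation correlation $\pC_\pi$ throughout. The only genuinely ``new'' input is Lemma~\ref{lem:loc-t-ug}, which is already in hand, so this proof is short.

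First I would apply Lemma~\ref{lem:loc-t-ug} with $\gamma = \delta$ and walk length equal to the given power of two $t$ (note the harmless clash of notation: what the present lemma calls $t$ is what Lemma~\ref{lem:loc-t-ug} calls $\ell$; correspondingly what Lemma~\ref{lem:loc-t-ug} calls $t$ is $\log_2 t$ here). The locality hypothesis $T \ge (\tfrac{4}{\delta})^{t}+1$ is exactly what Lemma~\ref{lem:loc-t-ug} requires, so we obtain that the permutation correlation along the endpoints of a $t$-step random walk is bounded below by (a power of) the local permutation correlation:
\[
\E_{i \sim_t j} \pC_\pi(X_i,X_j) \;\ge\; \Big(\tfrac14 \,\E_{i\sim j}\pC_\pi(X_i,X_j)\Big)^{t} \;\ge\; \big(\tfrac14\delta\big)^{t}.
\]

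Next, with $P = D^{-1}A$ the random walk transition matrix and $\pi$ its stationary measure, I would write the left-hand side as $\sum_{i,j}\pi_i (P^t)_{i,j}\,\pC_\pi(X_i,X_j)$ and apply Cauchy--Schwarz, treating $\pC_\pi(X_i,X_j)$ simply as a nonnegative number attached to the pair $(i,j)$ — the maximization over $\cS_q$ in the definition of $\pC_\pi$ plays no role here — splitting the weight $\pi_i (P^t)_{i,j}$ as $\sqrt{\pi_i/\pi_j}\cdot (P^t)_{i,j}$ times $\sqrt{\pi_i\pi_j}$. Since $G$ has no isolated vertices, $\pi_j \neq 0$ and this is legitimate, yielding
\[
\E_{i\sim_t j}\pC_\pi(X_i,X_j) \;\le\; \Big(\sum_{i,j}\tfrac{\pi_i}{\pi_j}(P^t)_{i,j}^2\Big)^{\hf}\Big(\E_{i,j\sim\pi}\pC_\pi(X_i,X_j)^2\Big)^{\hf}.
\]
Using $N^t = D^{\hf} P^t D^{-\hf}$ and the symmetry of $N$, the first factor equals $\|N^t\|_F = \sqrt{\Tr(N^{2t})}$. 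Combining the two displays, squaring, and rearranging gives $\E_{i,j\sim\pi}\pC_\pi(X_i,X_j)^2 \ge (\tfrac14\delta)^{2t}/\Tr(N^{2t})$, which is the claim.

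There is essentially no obstacle here: the content beyond Lemma~\ref{lem:loc-to-glob} is entirely carried by Lemma~\ref{lem:loc-t-ug} (whose only difference from Lemma~\ref{lem:loc-to-t} is the absolute-constant base $\tfrac14$ in place of $\tfrac{1}{16}$, because the root vertex contributes a single indicator coordinate rather than two signed ones). The mild points to keep straight are (i) the notational roles of ``walk length'' versus ``its base-two logarithm'' between the two lemmas, and (ii) the observation that the Cauchy--Schwarz step and the identification of the Frobenius norm with $\sqrt{\Tr(N^{2t})}$ never interact with the permutation maximization, so they transfer unchanged from the Max-Cut proof.
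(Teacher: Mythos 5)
Your proposal is correct and matches the paper's proof, which literally states that Lemma~\ref{lem:loc-glob-ug} is proved identically to Lemma~\ref{lem:loc-to-glob} with Lemma~\ref{lem:loc-t-ug} substituted for Lemma~\ref{lem:loc-to-t}; your handling of the Cauchy--Schwarz step, the identity $\|D^{\hf}P^tD^{-\hf}\|_F = \sqrt{\Tr(N^{2t})}$, and the $t$-versus-$\ell$ notational clash is exactly what the paper intends.
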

\begin{proof}
The proof is identical to that of Lemma \ref{lem:loc-to-glob}, except that it uses Lemma \ref{lem:loc-t-ug} in place of Lemma \ref{lem:loc-to-t}.
\end{proof}

We can now prove a theorem for unique games, with an argument almost identical to that of our theorem \ref{thm:main-mc} for max cut.

\begin{theorem}[Sherali-Adams for Unique Games]
For every $\alpha > 0$ there exists $\e_\alpha = \exp(-O(1/\alpha^3))$ such that for every $q \in \N$,
and big-enough $n$, if $(G,\Pi)$ is an $n$-variable instance of Unique Games with alphabet size $q$ and $SA_{2n^\alpha\log q }(G,\Pi) \geq 1-\e_\alpha$ then there exists $x_1,\ldots,x_n \in [q]^n$ such that $\E_{i \sim j} \Pi(x_i,x_j) \geq \e_\alpha/8$.
Furthermore, there is a polynomial-time rounding algorithm which takes a Sherali-Adams pseudodistribution of degree $2\log q\cdot n^\alpha$ and objective value at least $1-\e_\alpha$ and finds such a solution $x$.
\end{theorem}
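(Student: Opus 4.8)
The plan is to follow the proof of Theorem~\ref{thm:main-mc} almost verbatim, replacing the $\ell_1$ correlation $\pC$ by the permutation correlation $\pC_\pi$, the local-to-global Lemma~\ref{lem:loc-to-glob} by its permutation-correlation analogue Lemma~\ref{lem:loc-glob-ug}, and the rounding Lemma~\ref{lem:gcr-rounding} by Lemma~\ref{lem:gcr-ug}. I would fix a small absolute constant $\delta$, and pick the threshold $\tau$ and the (power of two) walk length $\ell$ exactly as in Theorem~\ref{thm:main-mc}, so that such an $\ell$ exists, the locality $(O(1/\delta))^\ell = n^{O(\alpha)}$ that it forces is within the degree budget, $n\tau^{2\ell-1}\le n^{\alpha/2}$, and the final root-extraction step closes. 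Then choose $\e_\alpha=\exp(-O(1/\alpha^3))$ small enough that $\sqrt{\e_\alpha}$ lies below the surviving-edge fraction $\exp(-O(\tfrac1{\alpha^2}\log\tfrac1\tau))$ guaranteed by Theorem~\ref{thm:partition} at threshold $\tau$ and exponent $\alpha/2$. Starting from the $q$-ary, $2n^\alpha\log q$-local pseudodistribution $X_1,\dots,X_n$ with $\SA_{2n^\alpha\log q}(G,\Pi)\ge1-\e_\alpha$, I would split on whether $\rank_\tau(G)\le n^{\alpha/2}$.

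In the low-threshold-rank case I would apply Theorem~\ref{thm:gcr-conditioning} with parameter $\kappa=O(n^{-\alpha})$: this conditions on $k=\tfrac{6\log q}{\kappa}=O(n^\alpha\log q)$ vertices drawn from the stationary measure $\pi$ -- this is exactly where the $\log q$ factor in the degree is needed -- and it leaves at least $(O(1/\delta))^\ell+1$ levels of locality afterwards for Lemma~\ref{lem:loc-glob-ug}. Two applications of Markov's inequality to the two conclusions of Theorem~\ref{thm:gcr-conditioning} yield a conditioned pseudodistribution of objective value $\ge1-O(\e_\alpha)$ with $\E_{i,j\sim\pi}\pC(X_i,X_j)^2=O(n^{-\alpha})$, hence also $\E_{i,j\sim\pi}\pC_\pi(X_i,X_j)^2=O(n^{-\alpha})$ since $\pC_\pi\le\pC$ pointwise. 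Feeding this global bound into the contrapositive of Lemma~\ref{lem:loc-glob-ug}, together with the trace estimate $\Tr(N^{2\ell})\le2(\rank_\tau(G)+n\tau^{2\ell-1})\le4n^{\alpha/2}$ from Claim~\ref{claim:thresh-trace}, and extracting $2\ell$-th roots exactly as in Theorem~\ref{thm:main-mc}, gives $\E_{i\sim j}\pC_\pi(X_i,X_j)\le2\delta$ for $n$ large. Lemma~\ref{lem:gcr-ug} then rounds independently from the $1$-wise marginals to an assignment of value $\ge1-O(\e_\alpha)-2\delta$, comfortably above $\e_\alpha/8$, and the rounding is derandomized by conditional expectations.

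In the high-threshold-rank case I would first invoke Theorem~\ref{thm:partition} at threshold $\tau$ and exponent $\alpha/2$ to decompose $G$ into pieces $G_1,\dots,G_m$ with $\rank_\tau(G_i)\le n^{\alpha/2}$ while cutting at most a $1-\exp(-O(1/\alpha^3))\le1-\sqrt{\e_\alpha}$ fraction of the edge weight, so the edges inside the pieces carry $\ge\sqrt{\e_\alpha}$ of the total weight. Restricting the pseudodistribution to each $G_i$ (which remains a valid local pseudodistribution of the same degree) and running the low-threshold-rank argument inside $G_i$ with $\pi_{G_i}$ in place of $\pi$ produces an assignment $x^{(i)}$ whose expected value within $G_i$ is $\ge v_i-2\delta$, where $v_i$ is the restricted objective value on $G_i$. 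Because conditioning preserves the objective in expectation and $\SA_{2n^\alpha\log q}(G,\Pi)\ge1-\e_\alpha$, writing $\beta_i$ for the fraction of total edge weight inside $G_i$ one has $\sum_i\beta_i(1-v_i)\le\e_\alpha$ and $\sum_i\beta_i\ge\sqrt{\e_\alpha}$, so the concatenation $x=(x^{(1)},\dots,x^{(m)})$ satisfies in expectation at least $\sum_i\beta_i(v_i-2\delta)\ge(1-2\delta)\sqrt{\e_\alpha}-\e_\alpha\ge\e_\alpha/8$ of all edges -- the edges crossing the partition simply contribute nonnegatively and are ignored. Fixing the randomness to a good outcome (again by conditional expectations) produces the desired $x$, and every step -- solving the subexponential-size LP, computing the partition, conditioning, and rounding -- runs in time polynomial in the size of the LP.

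The only genuinely new ingredient, namely that nontrivial local permutation correlation propagates to nontrivial permutation correlation along $t$-step random walks with a constant that does not depend on $q$, is already packaged in Lemmas~\ref{lem:loc-t-ug} and~\ref{lem:loc-glob-ug}, so the rest is bookkeeping. I expect the two places needing the most care to be: (i) simultaneously fitting the $\tfrac{6\log q}{\kappa}$ conditioning steps and the $n^{O(\alpha)}$ levels of locality demanded by Lemma~\ref{lem:loc-glob-ug} inside the degree budget $2n^\alpha\log q$ (which is why the degree carries the $\log q$ factor); and (ii) the partition step -- unlike Max-Cut, where a uniformly random cut already achieves value $\tfrac12$ on the crossing edges, here those edges give nothing, so one must instead lean on the fact that the surviving $\ge\sqrt{\e_\alpha}$ fraction of edges carries value close to $1$ and that $2\delta\ll1$; this still beats the target $\e_\alpha/8$ only because $\sqrt{\e_\alpha}\gg\e_\alpha$.
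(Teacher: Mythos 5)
Your proposal is correct and follows essentially the same route as the paper's proof: the same case split on $\rank_\tau(G)$, the same use of Theorem~\ref{thm:gcr-conditioning} (with the $\log q$ factor absorbed into the degree), Lemmas~\ref{lem:loc-glob-ug} and~\ref{lem:gcr-ug} in place of their Max-Cut analogues, and the partition of Theorem~\ref{thm:partition} with the crossing edges simply discarded so the value comes from the $\ge\sqrt{\e_\alpha}$ fraction of surviving edges. The only differences are cosmetic (deriving the global permutation-correlation bound via $\pC_\pi \le \pC$ rather than via Pinsker, and slightly different constants such as $2\delta$ versus $2\sqrt{\delta}$), and your bookkeeping at the end matches the paper's.
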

\begin{proof}
Let $X_1,\ldots,X_n$ be $T$-local random variables supported on $[q]$ with $T \ge (2\log q) n^\alpha$, and let $\delta = \tfrac{1}{16}$ and $\tau = \left(\frac{\delta}{4}\right)^{1/\alpha}$.

If $\rank_\tau(G) \le n^{\alpha/2}$, then we may apply global correlation rounding.
We condition on $n^{\alpha}\log q$ variables chosen uniformly at random from $\pi$, then in expectation the global mutual information drops so that $\E_{i,j \sim \pi} I(X_j;X_i) \le 3 n^{-\alpha}$ as guaranteed by Theorem \ref{thm:gcr-conditioning}, and the objective value remains at least $1 - 2\eps$.
From Pinsker's inequality, we then have that $\E_{i,j \sim \pi} \pC_\pi(X_i,X_j)^2\le 6 n^{-\alpha}$.
We now apply Lemma \ref{lem:loc-glob-ug} with $t$ the largest power of two such that
\[
\frac{(1-\tfrac{1}{2}\alpha)\log n}{2\log \frac{1}{\tau}} \le t \le \frac{\alpha\log n}{\log \frac{4}{\delta}}.
\]
It may be checked that our choice of $\delta$ and $\tau$ ensures that such a $t$ exists.
Further, this guarantees that $\left(\frac{4}{\delta}\right)^t + 1 \le \log q \cdot n^\alpha$ (so that the Sherali--Adams pseudodistribution has sufficiently high degree to apply Lemma \ref{lem:loc-glob-ug}) and also $n \tau^{2t-1} \le n^{\alpha/2}$.
Combining Lemma \ref{lem:loc-glob-ug}, our bound $\rank_\tau(G) \le n^{\alpha/2}$, and the fact that by our choice of $t$ we have $n\tau^{2t -1} \le n^{\alpha/2}$ we can apply Claim \ref{claim:thresh-trace} to the above to obtain
\[
\left(\frac{1}{4}\E_{i \sim j}\pC_\pi(X_i,X_j)\right)^{2t}
\le
2(\rank_\tau(G) + n\tau^{2t-1})\cdot 6 n^{-\alpha}
\le 24 n^{-\alpha/2}.
\]
This implies that the local permutation correlation is at most 
\[
\E_{i\sim j} \pC_\pi(X_i,X_j) \le 4\cdot \left(24 n^{-\alpha/2}\right)^{1/t} \le 2\sqrt{\delta}.
\] 
Applying Lemma \ref{lem:gcr-ug} we can round to obtain a solution of value at least $\SA(G,\Pi) - 2\sqrt{\delta}$.

Otherwise, if $\rank_\tau(G) > n^{\alpha/2}$, we apply Theorem \ref{thm:partition} to obtain a partition of $G$ into pieces $G_1,\ldots,G_m$ of threshold rank at most $n^{\alpha/2}$, such that the partition has expansion at most $1 - \exp(-O(\frac{1}{\alpha^2} \log \frac{1}{\tau})) \le 1 - \exp(-C\frac{1}{\alpha^3})) = 1 - \sqrt{\eps_{\alpha}}$ (where $C$ is the universal constant mentioned in the theorem statement, which comes from the partitioning theorem).
Since each piece has threshold rank at most $n^{\alpha/2}$, we may apply global correlation rounding to each piece as above to obtain an assignment $x^{(i)}$ on the variables of $G_i$ which obtains value $\ge \SA(G_i,\Pi) - 2\sqrt{\delta}$ within $G_i$.
Furthermore, because the objective value $\SA(G,\Pi)$ remains at least $1 - 2\eps$ even after conditioning for $\eps \le \eps_{\alpha}$, and since $\cup_{i \in [m]} |E[G_i]|$ accounts for a $\ge \sqrt{\eps_\alpha}$ fraction of the total edges in the graph, on average $\SA(G_i,\Pi) \ge 1 - 2\sqrt{\eps_\alpha}$ and so we can round the Sherali--Adams solution within each piece to obtain a solution $x^{(i)}$ so that on average, the value of $x^{(i)}$ within $G_i$ is at least $1 - 2\sqrt{\eps_\alpha} - 2\sqrt{\delta}$. 
Now, we take the global solution $x^{(1)},\ldots, x^{(m)}$ which will have value at least $\ge (1-2\sqrt{\eps_{\alpha}} - 2\sqrt{\delta})\cdot \eps_{\alpha}$.
By our choice of $C$, we have also that $2\eps_{\alpha}^{3/2} < \frac{1}{8}\eps_\alpha$, and we chose $\delta \le \frac{1}{16}$, from which we obtain the objective value promised in the theorem statement.
This completes the proof.
\end{proof}

\subsection{Generalization to Permutation-Symmetric CSPs}

In this section we describe how to modify the arguments in preceding sections to prove the following theorem.

\begin{theorem}
  Suppose $P_1,\ldots,P_k \, : \, [q] \times [q] \rightarrow \{0,1\}$ is a family of predicates such that there is a set of permutations $S \subseteq S_q$ such that
  \begin{itemize}
  \item for every $a \in [q]$, if $\pi \sim S$ is chosen uniformly at random then $\pi(a)$ is uniform in $[q]$, and
  \item for every $\pi \in S$ and every $P_i$ and $a,b$ such that $P_i(a,b) = 1$, also $P_i(\pi(a),\pi(b)) = 1$.
  \end{itemize}
  For every $\alpha > 0$ there is $\epsilon(\alpha,q) > 0$ such that Sherali-Adams of degree $n^{\alpha}$ provides a $c + \epsilon$ approximation to Max-$P$, (i.e. the special case of 2-CSP-$q$ where all the predicates $\Pi$ are chosen from $P_1,\ldots,P_m$) where $c = \min_i \E_{a,b \sim [q]} P_i(a,b)$ is the approximation ratio achieved by the random assignment.
\end{theorem}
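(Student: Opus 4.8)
The plan is to run the proof of Theorem~\ref{thm:main-mc} almost verbatim, using the full $\ell_1$ correlation $\pC(X_i,X_j)=\|\{X_i,X_j\}-\{X_i\}\{X_j\}\|_1$ throughout, and to replace the random $\pm1$ sign-flip applied to each piece at the very end by a random ``twist'' by a permutation drawn from $S$. The only genuinely new ingredient is a $[q]$-ary version of the local-to-global machinery (Lemmas~\ref{lem:loc-to-t} and~\ref{lem:loc-to-glob}); Theorem~\ref{thm:gcr-conditioning}, Lemma~\ref{lem:gcr-rounding}, Claim~\ref{claim:thresh-trace}, and Theorem~\ref{thm:partition} are already stated for general alphabets or are alphabet-agnostic, so they can be used as is.

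For the $[q]$-ary local-to-global step I would first prove the following analogue of Lemma~\ref{lem:loc-to-t}: for $t$ a power of two and $(Cq/\gamma)^t$-local $[q]$-valued $X_1,\dots,X_n$ (with $C$ an absolute constant), $\E_{i\sim j}\pC(X_i,X_j)\ge\gamma$ implies $\E_{i\sim_t j}\pC(X_i,X_j)\ge(\gamma/Cq)^t$. The proof is the recursion in the proof of Lemma~\ref{lem:loc-to-t}, with the vectors $Z^{(b)}$ now indexed by pairs $(j,a)$ for $a\in[q]$ together with one root coordinate, and with $b$ ranging over the $q$ root-values. Expanding $0\le\sum_{b\in[q]}(u^{(b)})^\top C^{(b)}u^{(b)}$ as in that proof, the $j\neq\ell$ term and the ``diagonal'' term each pick up a factor of $q$ from the sum over $b$, while the cross term still equals $-2\alpha\sum_j\pC(X_{i_j},X_r)$ --- summing the absolute values over \emph{both} coordinates' values recovers the full correlation --- so taking $k=\lceil 2q/\delta^2\rceil$ and $\alpha=\delta k$ gives the one-step estimate $\E_{i\sim_{2s}j}\pC\ge\delta^2/(4q)$, which iterates $\log_2 t$ times to the claim. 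Substituting this into the proof of Lemma~\ref{lem:loc-to-glob} verbatim (Cauchy--Schwarz does not see the alphabet) gives $\E_{i,j\sim\pi}\pC(X_i,X_j)^2\ge(\tfrac1{Cq}\E_{i\sim j}\pC(X_i,X_j))^{2t}/\Tr(N^{2t})$, and Claim~\ref{claim:thresh-trace} still bounds $\Tr(N^{2t})\le 2(\rank_\tau(G)+n\tau^{2t-1})$.

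Now I would transcribe Theorem~\ref{thm:main-mc}. Set $c=\min_i\E_{a,b\sim[q]}P_i(a,b)$. If $\SA_{n^\alpha}(G,\Pi)<1-\e_\alpha$, the uniform random assignment has value $\ge c\ge(c+c\e_\alpha)\cdot\SA_{n^\alpha}(G,\Pi)$ and is derandomized by conditional expectations; so assume $\SA_{n^\alpha}(G,\Pi)\ge1-\e_\alpha$. Take $\delta=\Theta(1/q)$ and $\tau=(\delta/Cq)^{\Theta(1/\alpha)}$, and condition on at most $n^\alpha/2$ coordinates via Theorem~\ref{thm:gcr-conditioning} so that $\E_{i,j\sim\pi}\pC(X_i,X_j)^2=O(\log q\cdot n^{-\alpha})$ while the objective stays $\ge1-O(\e_\alpha)$. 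If $\rank_\tau(G)\le n^{\alpha/2}$, choose $t$ a power of two in the analogue of the interval of Theorem~\ref{thm:main-mc} with $16/\delta$ replaced by $Cq/\delta$; as there this forces $(Cq/\delta)^t\le n^\alpha/2$, $n\tau^{2t-1}\le n^{\alpha/2}$, and makes $n^{-\alpha/(2t)}$ small enough that the $[q]$-ary lemma together with Claim~\ref{claim:thresh-trace} give $\E_{i\sim j}\pC(X_i,X_j)\le 2\delta$ once $n$ is large relative to $q$, and then Lemma~\ref{lem:gcr-rounding} rounds to value $\ge1-O(\e_\alpha)-2\delta\ge c+\tfrac1{2q}$. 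If $\rank_\tau(G)>n^{\alpha/2}$, apply Theorem~\ref{thm:partition} to cut a $\le1-\exp(-O(\alpha^{-2}\log\tfrac1\tau))\le 1-\sqrt{\e_\alpha}$ fraction of edges into pieces of $\tau$-threshold rank $\le n^{\alpha/2}$ (which pins down $\e_\alpha=q^{-O(1/\alpha^3)}$), run the previous case inside each piece, and conclude that on average over pieces the within-piece value of the rounded assignments $x^{(i)}$ is $\ge1-2\sqrt{\e_\alpha}-2\delta\ge c+\tfrac1{2q}$. Finally, in place of random signs, independently for each piece $G_i$ draw $\pi^{(i)}\sim\mathrm{Unif}(S)$ and output $v\mapsto\pi^{(p(v))}(x_v)$ where $p(v)$ is $v$'s piece: by the second property of $S$, twisting a piece cannot unsatisfy a within-piece constraint, so the within-piece value is still $\ge c+\tfrac1{2q}$; by the first property of $S$ and independence, every cross-piece edge $(u,v)$ sees the uniform pair $(\pi^{(p(u))}(x_u),\pi^{(p(v))}(x_v))$ on $[q]^2$, hence has expected value $\E_{a,b\sim[q]}P_{uv}(a,b)\ge c$. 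Writing $w_{\mathrm{in}}\ge\sqrt{\e_\alpha}$ for the weight of within-piece edges, the twisted assignment has expected value $\ge w_{\mathrm{in}}(c+\tfrac1{2q})+(1-w_{\mathrm{in}})c\ge c+\Omega(\sqrt{\e_\alpha}/q)$, which is the theorem with $\epsilon(\alpha,q)=\Omega(\sqrt{\e_\alpha}/q)=q^{-O(1/\alpha^3)}$, derandomized by conditional expectations when $S$ can be enumerated in polynomial time (e.g.\ one may take $|S|=q$ for Max-$k$-Cut and Max-$2$-Lin).

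The main obstacle is the $[q]$-ary local-to-global step: checking that the spider-walk recursion of Lemma~\ref{lem:loc-to-t} degrades only by a factor polynomial in $q$ per level --- so that the total loss is $(q/\gamma)^{O(t)}$ rather than a loss that is exponential in $q$ or has $q$ in the exponent of $n$ --- which hinges precisely on the cross term remaining the \emph{full} correlation $\pC(X_{i_j},X_r)$ after summing over the root value $b$, exactly as in the Boolean case; this is what keeps the degree requirement at $n^\alpha$ and confines all $q$-dependence to $\e_\alpha$ and hence to $\epsilon(\alpha,q)$. Once this lemma is in hand, the remaining steps are a line-by-line transcription of Theorem~\ref{thm:main-mc} with the sign-flips replaced by the $S$-twist.
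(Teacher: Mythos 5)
Your proposal is correct and follows essentially the same route as the paper: partition into pieces of threshold rank $n^{\alpha/2}$, condition to kill global correlation, use a $q$-ary local-to-global bound plus the trace bound to get low local correlation, round independently within each piece, and then twist each piece by an independent uniform $\pi \in S$, using the two properties of $S$ exactly as you do for within-piece and cross-piece edges. The only (cosmetic) difference is how the $q$-ary local-to-global bound is obtained: instead of redoing the spider-walk recursion for the full $\ell_1$ correlation, the paper invokes the already-proved permutation-correlation lemma (Lemma~\ref{lem:loc-t-ug}) together with the averaging observation $\Cov_\pi(X,Y) \geq \Cov(X,Y)/q$, which yields the same conclusion $\E_{i \sim_\ell j} \Cov(X_i,X_j) \geq O(\gamma/q)^{\ell}$ at locality $q^{O(\ell)}/\gamma^{O(\ell)}$; your direct generalization is a valid substitute.
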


\begin{remark}
Notice that Max-2-Lin and Max-$k$-Cut both fit into this framework; for $2$-Lin solutions are invariant under linear shifts, and for $k$-Cut solutions are invariant under relabeling the partition.
\end{remark}

The proof proceeds as in the case of Max-Cut, with the twist that we use the following local-to-global correlation bound: if $G$ is a graph on $[n]$, $\ell = 2^t$ is a power of two, and $X_1,\ldots,X_n$ are $T$-local $q$-ary random variables with $T \geq q^{O(\ell)} /\gamma^{O(\ell)}$ and $\E_{i \sim j} \Cov(X_i,X_j) \geq \gamma$ then $\E_{i \sim_\ell j} \Cov(X_i,X_j) \geq O(\gamma/q)^\ell$.
To prove this fact it suffices to apply Lemma~\ref{lem:loc-t-ug} together with the observation that by a simple averaging argument, for any $[q]$-valued random variables $X,Y$ it holds that $\Cov_\pi(X,Y) \geq \Cov(X,Y)/q$.

The proof then proceeds as before.
We may assume given an instance $(G,\Pi)$ of Max-CSP where all predicates $\Pi$ are chosen from among $P_1,\ldots,P_k$ and a collection of $n^{\alpha}$-local random variables $X_1,\ldots,X_n$ such that $\E_{i \sim j} \E_{X_i,X_j} \Pi_{ij}(X_i,X_j) \geq 1 - \e'$ for as small an $\e'(\e,q)$ as we like.
We partition $G$ into subgraphs of threshold-rank $n^{\alpha/2}$ with threshold $\tau = \tau(\alpha,q)$ a small-enough quantity to be chosen later.
This means that at an least $\e''(\alpha,q)$-fraction of edges of $G$ are preserved in the partition.
On each component $H$ we can perform conditioning to obtain conditioned $n^{\alpha/2}$-local random variables $Y_1,\ldots,Y_n$ with global correlation $\E_{i,j \sim \pi_H} \Cov(Y_i,Y_j) \leq \log q / n^{-\alpha/2}$.
Via the local-to-global correlation result above together with small-enough choice of $\tau$, this implies that coordinate-wise rounding of the $Y$'s produces an assignment which satisfies $1-\e'''$ of the edges in $G_1,\ldots,G_m$.
Let $Z_1,\ldots,Z_n$ be this coordinate-wise rounding.
Let $Z_1',\ldots,Z_n'$ be the result of choosing a permutation $\pi \in S$ to apply to each of the components $G_i$.
By hypothesis $Z_1',\ldots,Z_n'$ now satisfy a $1-\e'''$ fraction of edges in $G_1,\ldots,G_m$ and at least a $\min_i \E_{a,b \sim [q]} P_i(a,b)$ fraction of edges between $G_i$'s.

\section{Acknowledgments}
TS would like to thank Ryan O'Donnell for the collaboration on \cite{ODonnellS19} and the questions and follow-up conversations which were crucial for the present work.
SBH thanks Prasad Raghavendra for helpful conversations as this work was being prepared.
SBH was supported by a UC Berkeley Miller Fellowship.
TS was supported by NSF grants CCF 1565264 and CNS 1618026.
LT  was supported by the NSF under grant CCF 1815434 and his work on this project has received funding from the European Research Council (ERC) under the European Union’s Horizon 2020 research and innovation programme (grant agreement No. 834861).

\bibliographystyle{amsalpha}
\bibliography{refs}

\newcommand{\etalchar}[1]{$^{#1}$}
\providecommand{\bysame}{\leavevmode\hbox to3em{\hrulefill}\thinspace}
\providecommand{\MR}{\relax\ifhmode\unskip\space\fi MR }
\providecommand{\MRhref}[2]{%
  \href{http://www.ams.org/mathscinet-getitem?mr=#1}{#2}
}
\providecommand{\href}[2]{#2}
\begin{thebibliography}{AKK{\etalchar{+}}08}

\bibitem[ABS15]{ABS}
Sanjeev Arora, Boaz Barak, and David Steurer, \emph{Subexponential algorithms
  for unique games and related problems}, J. {ACM} \textbf{62} (2015), no.~5,
  42:1--42:25.

\bibitem[AKK{\etalchar{+}}08]{AroraKKSTV08}
Sanjeev Arora, Subhash Khot, Alexandra Kolla, David Steurer, Madhur Tulsiani,
  and Nisheeth~K. Vishnoi, \emph{Unique games on expanding constraint graphs
  are easy: extended abstract}, Proceedings of the 40th Annual {ACM} Symposium
  on Theory of Computing, Victoria, British Columbia, Canada, May 17-20, 2008
  (Cynthia Dwork, ed.), {ACM}, 2008, pp.~21--28.

\bibitem[BFPS12]{BFPS}
Gabor Braun, Samuel Fiorini, Sebastian Pokutta, and David Steurer,
  \emph{Approximation limits of linear programs (beyond hierarchies)},
  Proceedings of the 2012 IEEE 53rd Annual Symposium on Foundations of Computer
  Science (Washington, DC, USA), FOCS '12, IEEE Computer Society, 2012,
  pp.~480--489.

\bibitem[BRS11]{BRS11}
Boaz Barak, Prasad Raghavendra, and David Steurer, \emph{Rounding semidefinite
  programming hierarchies via global correlation}, {IEEE} 52nd Annual Symposium
  on Foundations of Computer Science, {FOCS} 2011, Palm Springs, CA, USA,
  October 22-25, 2011 (Rafail Ostrovsky, ed.), {IEEE} Computer Society, 2011,
  pp.~472--481.

\bibitem[CMM09]{CMM09}
Moses Charikar, Konstantin Makarychev, and Yury Makarychev, \emph{Integrality
  gaps for {S}herali-{A}dams relaxations}, Proceedings of the 41st ACM
  Symposium on Theory of Computing, 2009, pp.~283--292.

\bibitem[CW04]{CW04}
Moses Charikar and Anthony Wirth, \emph{Maximizing quadratic programs:
  Extending grothendieck's inequality}, 45th Annual IEEE Symposium on
  Foundations of Computer Science, IEEE, 2004, pp.~54--60.

\bibitem[FKP19]{FKP19}
Noah Fleming, Pravesh Kothari, and Toniann Pitassi, \emph{Semialgebraic proofs
  and efficient algorithm design}.

\bibitem[GS11]{GS11}
Venkatesan Guruswami and Ali~Kemal Sinop, \emph{Lasserre hierarchy, higher
  eigenvalues, and approximation schemes for graph partitioning and quadratic
  integer programming with psd objectives}, 2011 IEEE 52nd Annual Symposium on
  Foundations of Computer Science, IEEE, 2011, pp.~482--491.

\bibitem[GW95]{GW}
Michel~X. Goemans and David~P. Williamson, \emph{Improved approximation
  algorithms for maximum cut and satisfiability problems using semidefinite
  programming}, J. ACM \textbf{42} (1995), no.~6, 1115--1145.

\bibitem[H{\aa}s08]{H08}
Johan H{\aa}stad, \emph{Every 2-csp allows nontrivial approximation},
  Computational Complexity \textbf{17} (2008), no.~4, 549--566.

\bibitem[Kho02]{K02}
Subhash Khot, \emph{On the power of unique 2-prover 1-round games}, Proceedings
  of the thiry-fourth annual ACM symposium on Theory of computing, ACM, 2002,
  pp.~767--775.

\bibitem[KMR17]{KMR}
Pravesh~K. Kothari, Raghu Meka, and Prasad Raghavendra, \emph{Approximating
  rectangles by juntas and weakly-exponential lower bounds for {LP} relaxations
  of {CSP}s}, Proceedings of the 49th Annual {ACM} {SIGACT} Symposium on Theory
  of Computing, {STOC} 2017, 2017, pp.~590--603.

\bibitem[LR99]{LR99}
Tom Leighton and Satish Rao, \emph{Multicommodity max-flow min-cut theorems and
  their use in designing approximation algorithms}, Journal of the ACM (JACM)
  \textbf{46} (1999), no.~6, 787--832.

\bibitem[LS91]{LS91}
L{\'a}szl{\'o} Lov{\'a}sz and Alexander Schrijver, \emph{Cones of matrices and
  set-functions and 0--1 optimization}, SIAM journal on optimization \textbf{1}
  (1991), no.~2, 166--190.

\bibitem[OS19]{ODonnellS19}
Ryan O'Donnell and Tselil Schramm, \emph{Sherali - adams strikes back}, 34th
  Computational Complexity Conference, {CCC} 2019, July 18-20, 2019, New
  Brunswick, NJ, {USA.} (Amir Shpilka, ed.), LIPIcs, vol. 137, Schloss Dagstuhl
  - Leibniz-Zentrum fuer Informatik, 2019, pp.~8:1--8:30.

\bibitem[RS09]{RS09}
Prasad Raghavendra and David Steurer, \emph{Integrality gaps for strong sdp
  relaxations of unique games}, Proceedings of the 2009 50th Annual IEEE
  Symposium on Foundations of Computer Science (Washington, DC, USA), FOCS '09,
  IEEE Computer Society, 2009, pp.~575--585.

\bibitem[RT12]{RagT12}
Prasad Raghavendra and Ning Tan, \emph{Approximating csps with global
  cardinality constraints using {SDP} hierarchies}, Proceedings of the
  Twenty-Third Annual {ACM-SIAM} Symposium on Discrete Algorithms, {SODA} 2012,
  Kyoto, Japan, January 17-19, 2012 (Yuval Rabani, ed.), {SIAM}, 2012,
  pp.~373--387.

\bibitem[SA90]{SA90}
Hanif~D Sherali and Warren~P Adams, \emph{A hierarchy of relaxations between
  the continuous and convex hull representations for zero-one programming
  problems}, SIAM Journal on Discrete Mathematics \textbf{3} (1990), no.~3,
  411--430.

\bibitem[Ste10a]{Steurer-thesis}
David Steurer, \emph{On the complexity of unique games and graph expansion},
  Ph.D. thesis, Princeton University, 2010.

\bibitem[Ste10b]{Steurer10}
David Steurer, \emph{Subexponential algorithms for d-to-1 two-prover games and
  for certifying almost perfect expansion}, Available at the author’s website
  \textbf{1} (2010), 2--1.

\bibitem[STT07]{STT}
Grant Schoenebeck, Luca Trevisan, and Madhur Tulsiani, \emph{Tight integrality
  gaps for {L}ovasz-{S}chrijver {LP} relaxations of vertex cover and max cut},
  Proceedings of the 39th ACM Symposium on Theory of Computing, 2007,
  pp.~302--310.

\bibitem[Tre09]{T}
Luca Trevisan, \emph{Max {C}ut and the smallest eigenvalue}, Proceedings of the
  41st ACM Symposium on Theory of Computing, 2009, pp.~263--272.

\end{thebibliography}

\appendix
\section{Sherali-Adams for Max-QP}
\label{sec:qp}
\newcommand{\mper}{\, .}

In this section we prove Observation~\ref{obs.main.qp}, which follows from the following two facts.

\begin{fact}
  Let $A \in \R^{n \times n}$ be a matrix with $A_{ii} = 0$ for $i \in [n]$.
  Let $X_1,\ldots,X_n$ be Boolean $k$-local random variables.
  Then there exists $x \in \{ \pm 1 \}^n$ such that $x^\top A x \geq \Omega(\tfrac k n) \sum_{i,j \leq n} A_{ij} \E X_i X_j$.
\end{fact}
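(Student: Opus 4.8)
The plan is to build $x$ by a simple randomized rounding and then pass to the best outcome. Write $\pE$ for the $k$-local pseudoexpectation attached to $X_1,\dots,X_n$, so the target quantity is $\sum_{i,j}A_{ij}\,\pE X_iX_j = \pE[x^\top Ax]$. First I would reduce to the case $k \mid n$ by padding $[n]$ with isolated coordinates (harmless since $A_{ii}=0$ and the new entries are zero), then draw a uniformly random partition of $[n]$ into $m=n/k$ blocks $B_1,\dots,B_m$ of size $k$ (say, a uniformly random permutation of $[n]$ cut into consecutive runs of length $k$). Independently in each block $B_\ell$ I would sample $Y^{(\ell)}\in\{\pm1\}^{B_\ell}$ from the honest distribution $\mu_{B_\ell}$ — legitimate precisely because $|B_\ell|=k$ — and an independent uniform sign $s_\ell\in\{\pm1\}$, and set $x_i = s_\ell Y^{(\ell)}_i$ for $i\in B_\ell$, so that $x\in\{\pm1\}^n$.

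The point of this construction is the expectation calculation $\E[x^\top Ax] = \sum_{i\ne j}A_{ij}\,\E[x_ix_j]$ (the $i=j$ terms vanish as $A_{ii}=0$), split according to whether $i,j$ share a block. For $i,j$ in distinct blocks, $\E[x_ix_j]=0$ because the two independent block signs multiply in; this is exactly what disposes of the between-block terms, about which the local distributions tell us nothing useful and whose sum could otherwise be negative and large in magnitude. For $i,j$ in a common block $B$ the signs cancel ($s_\ell^2=1$) and, by the consistency of the local distributions, $\E[x_ix_j\mid i,j\text{ in block }B] = \pE X_iX_j$ irrespective of which block they fell in. Since a uniformly random size-$k$ partition puts a fixed pair $i\ne j$ in the same block with probability $\tfrac{k-1}{n-1}$, this yields $\E[x^\top Ax] = \tfrac{k-1}{n-1}\sum_{i,j}A_{ij}\,\pE X_iX_j$, and $\tfrac{k-1}{n-1}\ge\tfrac{k}{2n}=\Omega(k/n)$ for $k\ge2$ finishes the case $\sum_{i,j}A_{ij}\pE X_iX_j\ge0$ (some realization $x$ attains at least the expectation). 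The remaining case $\sum_{i,j}A_{ij}\pE X_iX_j<0$ is immediate, since $\E_{x\sim\{\pm1\}^n}[x^\top Ax]=\sum_iA_{ii}=0$ gives $\max_x x^\top Ax\ge0$.

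There is no serious obstacle here; the whole content is choosing blocks of size exactly $k$ — so that sampling the true joint law $\mu_{B_\ell}$ stays within the locality budget and hence reproduces $\pE X_iX_j$ on every within-block pair — together with the per-block random sign flip, so that the uncontrolled cross-block contribution vanishes in expectation rather than working against us. The only place the hypothesis $A_{ii}=0$ enters is in passing between $\sum_{i\ne j}$ and $\sum_{i,j}$: without it the diagonal $\sum_iA_{ii}$ would appear with coefficient $1$ on the integral side (since $x_i^2=1$) but fail to scale with $\tfrac{k-1}{n-1}$, breaking the clean ratio.
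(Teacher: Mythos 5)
Your proposal is correct and follows essentially the same route as the paper: a uniformly random partition of $[n]$ into blocks of size $k$ (so locality lets within-block pairwise expectations be realized by a genuine distribution), an independent random sign per block to kill the cross-block terms, the pair-in-same-block probability $\Theta(k/n)$, and the same reduction to the case $\sum_{i,j}A_{ij}\pE X_iX_j \ge 0$. The only cosmetic difference is that the paper first fixes a good partition and a best deterministic within-block assignment before randomizing the signs, whereas you carry the block-marginal sampling and the partition inside a single expectation; both yield the same bound.
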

\begin{proof}
  Without loss of generality we can assume $\sum_{i,j \leq n} A_{ij} \E X_i X_j \geq 0$, otherwise any $x$ such that $x^\top A x \geq 0$ will do. (Such $x$ must exist since $\E_{x \sim \{\pm 1\}^n} x^\top Ax = 0$.)
  Let $S_1,\ldots,S_k$ be a uniformly random partition of $[n]$ into sets of size $n/k$.
  Then
  \[
  \E_{S_1,\ldots,S_k} \sum_{\ell \leq n/k} \sum_{i,j \in S_\ell} A_{ij} \E X_i X_j = \sum_{ij \leq n} \Pr(i,j \text{ in same } S_\ell) A_{ij} \E X_i X_j \geq \Omega(k/n) \sum_{ij \leq n} A_{ij} \E X_i X_j\mper
  \]
  Fix some $S_1,\ldots,S_k$ such that
  \[
  \sum_{\ell \leq n/k} \sum_{ij \in S_\ell} A_{ij} \E X_i X_j \geq \Omega(k/n) \sum_{ij \leq n} A_{ij} \E X_i X_j\mper
  \]
  Since $|S_\ell| = k$ and $X_1,\ldots,X_n$ are $k$-local, for each $\ell \leq n/k$ there exists $z \in \{\pm 1\}^{S_\ell}$ such that $\sum_{ij \in S_\ell} A_{ij} z_i z_j \geq \sum_{ij \in S_\ell} A_{ij} \E X_i X_j$.

  We define a family of $\{ \pm 1\}$-valued random variables $Z_1,\ldots,Z_n$ as follows.
  First for each $\ell \leq n/k$, sample a random $s_\ell \in \{ \pm 1\}$.
  Then for each $i \in S_\ell$, let $Z_i = s_\ell \cdot z_i$, where $z \in \{ \pm 1\}^{S_\ell}$ is as above.

  Note that
  \[
  \E_Z \sum_{ij} A_{ij} Z_i Z_j = \sum_{\ell \leq n/k} \sum_{ij \in S_\ell} A_{ij} z_i z_j \geq \Omega(k/n)\sum_{i,j \leq n} A_{ij} \E X_i X_j\mper
  \]
  The fact follows.
\end{proof}

\begin{fact}
  For $n \in \N$ and $k \leq n$ there exist Boolean $k$-local random variables $X_1,\ldots,X_n$ such that $\sum_{i \neq j \leq n} - \E X_i X_j \geq \Omega(n^2/k)$.
  Moreover, $\max_{x \in \{ \pm 1\}^n} \sum_{i \neq j \leq n} - x_i x_j \leq O(n)$.
\end{fact}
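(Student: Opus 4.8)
The plan is to prove the two assertions separately; both are short, and the content is entirely in the first one.

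\medskip
\noindent\emph{The ``moreover'' assertion.} This is immediate: for any $x \in \{\pm 1\}^n$ we have $\sum_{i,j \le n} x_i x_j = \big(\sum_{i \le n} x_i\big)^2 \ge 0$, and since $\sum_{i=j} x_i x_j = n$ we get $\sum_{i \ne j \le n} x_i x_j \ge -n$, i.e. $\sum_{i \ne j \le n} - x_i x_j \le n = O(n)$.

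\medskip
\noindent\emph{Constructing the pseudodistribution.} The plan is to exhibit an explicit symmetric degree-$k$ Sherali--Adams pseudodistribution. Assume $k \ge 2$ (the objective $\sum_{i\ne j}\E X_iX_j$ is not defined for $1$-local variables), and let $r \in \{0,1\}$ be congruent to $k \pmod 2$. For each $S \subseteq [n]$ with $|S| = k$, I would let $\mu_S$ be the uniform distribution on the balanced slice $\{x \in \{\pm 1\}^S : \sum_{i \in S} x_i = r\}$, which is nonempty because $r \equiv k \pmod 2$ and $|r| \le k$. Each $\mu_S$ is exchangeable (invariant under permuting the coordinates of $S$). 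Hence, for any fixed $T \subseteq [n]$ with $|T| \le k$ and any two size-$k$ sets $S, S' \supseteq T$, a coordinate bijection $S \to S'$ fixing $T$ carries the slice in $\{\pm1\}^S$ onto the slice in $\{\pm1\}^{S'}$, so $\mu_S$ and $\mu_{S'}$ have the same marginal on $T$. This is exactly the local-consistency requirement in the definition of a degree-$k$, $q=2$ local pseudodistribution, so $\{\mu_S\}_{|S|=k}$ is valid; let $X_1,\ldots,X_n$ be the induced $k$-local Boolean random variables.

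\medskip
\noindent\emph{Bounding the objective.} For $i \ne j$, fix any $S$ of size $k$ with $i,j \in S$. Under $\mu_S$ we have $\sum_{\ell \in S} X_\ell = r$ identically, so, using $X_\ell^2 = 1$ and exchangeability of $\mu_S$,
\[
r^2 = \E\Big(\sum_{\ell \in S} X_\ell\Big)^2 = \sum_{\ell,\ell' \in S} \E X_\ell X_{\ell'} = k + k(k-1)\,\E X_i X_j .
\]
Therefore $\E X_i X_j = \frac{r^2 - k}{k(k-1)} \le \frac{1-k}{k(k-1)} = -\frac1k$ for every $i \ne j$, and summing over the $n(n-1)$ ordered pairs gives $\sum_{i \ne j \le n} - \E X_i X_j \ge \frac{n(n-1)}{k} = \Omega(n^2/k)$, which is the claim. (Taken together with the ``moreover'' bound, this exhibits the $\Omega(n/k)$ integrality gap for Max-QP promised in Observation~\ref{obs.main.qp}.)

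\medskip
\noindent There is essentially no obstacle; the only points requiring care are the parity bookkeeping for $r$ (so that the slice is nonempty and $r^2 \le 1$), and the observation that it is the exchangeability of the balanced-slice distributions, rather than any underlying genuine distribution on all $n$ coordinates, that supplies the Sherali--Adams consistency constraints for free.
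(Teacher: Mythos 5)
Your proposal is correct and follows essentially the same route as the paper: take the uniform distribution on a (near-)balanced slice of $\{\pm 1\}^k$ as the common local distribution for every size-$k$ set, use its exchangeability to get Sherali--Adams consistency for free, and expand $\E\bigl(\sum_{\ell \in S} X_\ell\bigr)^2$ to deduce $\E X_i X_j \le -1/k$, together with the identity $\sum_{i \ne j} -x_i x_j = -\langle x,\mathbf{1}\rangle^2 + n \le n$ for the upper bound. Your explicit parity bookkeeping (taking the slice $\sum_{i\in S} x_i = r$ with $r \equiv k \bmod 2$) is a small refinement over the paper, which states the construction only for the exactly balanced slice.
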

\begin{proof}
  We start with the second claim.
  If $1$ is the $n$-dimensional all-$1$'s vector, then we have $\sum_{i \neq j} -x_i x_j = -\iprod{x,1}^2 + \sum_i x_i^2 \leq n$.

  Now we construct the $k$-local random variables $X_1,\ldots,X_n$.
  We start with the following claim.
  Suppose $\mu$ is a distribution on $\{ \pm 1\}^k$ with the following symmetry property: for all $t \leq k$ and $S,S' \subseteq [k]$ with $|S| = |S|' = t$ we have $\E_{x \sim \mu} x^S = \E_{x \sim \mu} x^{S'} = c_t$ for some numbers $c_0,\ldots,c_k \in \R$.
  Then there are $k$-local random variables $X_1,\ldots,X_n$ such that for every $S \subseteq [n]$ with $|S| \leq k$ we have $\E X^S = c_{|S|}$.
  This follows by letting the local distribution on $S \subseteq [n]$ be given by $\mu$.

  Now let $\mu$ be the uniform distribution on $\{ x \in \{ \pm 1\}^k \, : \, \iprod{x,1} = 0 \}$.
  By symmetry, $\E_{x \sim \mu} x^S$ can depend only on $|S|$.
  Furthermore, for any $i \neq j \in [k]$ we have $\E x_i x_j = -1/(k-1)$, since
  \[
  0 = \E \iprod{x,1}^2 = \sum_{ij \in [k]} \E x_i x_j = k + \sum_{i \neq j} \E x_j x_j\mper
  \]
  Extending $\mu$ to a $k$-local distribution $X_1,\ldots,X_n$, we have
  \[
  \sum_{i \neq j} - \E X_i X_j \geq \Omega(n^2 / k)\mper\qedhere
  \]
\end{proof}

\section{Spectral partitioning into modestly non-expanding parts}\label{app:partition}

In this section, we prove an analog of the result of Arora et al. \cite{ABS} for partitioning graphs with high {\em threshold rank}.
The result of Arora et al. gives an algorithm for partitioning the graph into parts of expansion $\le \epsilon$ when $\rank_{1-\epsilon'}(G)$ is large, for $\epsilon,\epsilon'$ close to $0$.
However, their result does not give guarantees for graphs that have large $\tau$-threshold rank when $\tau$ is close to $0$ rather than $1$.
Later, \cite{Steurer10} uses very similar techniques to prove an analogous result for the small-$\tau$ parameter regime,
showing that if $\rank_{\epsilon}(G)$ is large, then one can obtain a partition of expansion $\le 1 - \epsilon'$.
We provide a proof here for completeness.
We provide the proof for the case of unweighted graphs $G$; the proof remains essentially unchanged if the graph has weights $w_{ij}$ such that $w_{ij} \geq w_{\max}/\text{poly}(n)$, which we can assume in the CSP context by throwing out low-weight edges.

\begin{theorem*}[Restatement of Theorem~\ref{thm:partition}]
Fix any $\tau,\alpha \in (0,1)$, and take $n$ sufficiently large.
Then there is a polynomial-time spectral algorithm that partitions an $n$-vertex simple graph $G =(V,E)$ into components $G_1,\ldots,G_m$ such that for all $i \in [m]$, the threshold rank $\rank_{\tau}(G_i) \le n^{\alpha}$, with the total fraction of edges cut in the partition bounded by $\frac{1}{|E|}|\cup_{i\neq j\in[\ell]} E[G_i,G_j]| \le 1 - \exp\left(-O\left(\frac{1}{\alpha^2} \log \frac{1}{\tau}\right)\right)$.
\end{theorem*}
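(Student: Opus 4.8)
The plan is to follow the spectral graph-partitioning strategy of Arora--Barak--Steurer \cite{ABS}, in the small-$\tau$ form due to Steurer \cite{Steurer10}. The engine is a one-shot partitioning lemma: \emph{if $\rank_\tau(G) > n^\alpha$, then one can compute in polynomial time a nonempty proper subset $S \subsetneq V$ with $\rank_\tau(G[S]) \le n^\alpha$ and conductance $\Phi_G(S) := |E(S,\bar S)|/\vol(S) \le 1 - \delta$, where $\delta = \exp(-O(\tfrac1{\alpha^2}\log\tfrac1\tau))$.} Granting this, the partition is built greedily: put $G_0 := G$; while $\rank_\tau(G_i) > n^\alpha$, apply the lemma to $G_i$ to obtain $S_i$, declare $G_i[S_i]$ a finished component, and set $G_{i+1} := G_i[\bar S_i]$; as soon as $\rank_\tau(G_i) \le n^\alpha$, declare $G_i$ the last component. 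Each iteration removes at least one vertex, so the process halts after at most $n$ iterations, runs in polynomial time, and every output component has $\tau$-threshold rank at most $n^\alpha$ by construction.

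It remains to bound the fraction of cut edges. An edge is cut in at most one iteration --- once one of its endpoints lands in a finished component it is never examined again --- so the cut edges form $\bigsqcup_i E_{G_i}(S_i, \bar S_i)$ and the preserved edges contain $\bigsqcup_i E_{G_i}(S_i)$. From $\vol_{G_i}(S_i) = 2|E_{G_i}(S_i)| + |E_{G_i}(S_i, \bar S_i)|$ together with the conductance bound $|E_{G_i}(S_i, \bar S_i)| \le (1-\delta)\,\vol_{G_i}(S_i)$ we get $\delta\,|E_{G_i}(S_i,\bar S_i)| \le 2(1-\delta)\,|E_{G_i}(S_i)|$, hence $|E_{G_i}(S_i,\bar S_i)| \le \tfrac2\delta\,|E_{G_i}(S_i)|$. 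Summing over $i$, with $C$ the total number of cut edges and $P$ the total number of preserved edges, $C \le \tfrac2\delta P = \tfrac2\delta(|E| - C)$, so $C \le \tfrac{2}{2+\delta}|E| = \bigl(1 - \Omega(\delta)\bigr)|E|$, which is exactly the claimed $1 - \exp(-O(\tfrac1{\alpha^2}\log\tfrac1\tau))$ bound on the cut fraction.

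For the one-shot lemma, the starting point is a trivial lower bound complementary to Claim~\ref{claim:thresh-trace}: since $N = D^{-\hf}AD^{-\hf}$ has more than $n^\alpha$ eigenvalues exceeding $\tau$, $\Tr(N^{2t}) = \sum_i \lambda_i^{2t} \ge n^\alpha\tau^{2t}$ for every $t$. Taking $t = \Theta\bigl(\tfrac{\alpha\log n}{\log(1/\tau)}\bigr)$ makes the right side at least $n^{\Omega(\alpha)}$, so some vertex $v$ has $(N^{2t})_{vv} = \|N^t e_v\|^2 = \pi_v\sum_u (P^t_{vu})^2/\pi_u$ much larger than its stationary value; equivalently, the $t$-step walk distribution $p := P^t e_v$ satisfies $\sum_u p_u^2/\pi_u \ge n^{\Omega(\alpha)}$, so the walk from $v$ is quantitatively far from $\pi$. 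One then runs a Cheeger-type sweep: order vertices by $p_u/\pi_u$ and take the prefix sets $S_1 \subseteq S_2 \subseteq \cdots$; if none of the $S_j$ satisfied $\Phi_G(S_j) \le 1-\delta$, then (working with the lazy walk, which has no negative eigenvalue and so disposes of the possibly numerous large negative eigenvalues of $N$) the walk would contract geometrically in the $\ell_2(1/\pi)$ norm, and could not be this far from $\pi$ after $t$ steps --- a contradiction. This yields a cut of conductance at most $1-\delta$; to also force $\rank_\tau(G[S]) \le n^\alpha$, one recurses the sweep inside $S$ while its threshold rank is still too large, an inner recursion of depth $O(1/\alpha)$ losing a further $\tau^{O(1/\alpha)}$ factor in $\delta$ per level, which accounts for the extra $1/\alpha$ in the exponent. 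The main obstacle is precisely this one-shot lemma --- converting ``many positive eigenvalues above $\tau$'' into an explicit, polynomial-time, modestly non-expanding cut of bounded threshold rank with the correct dependence on $\alpha$ and $\tau$ --- and we prove it by reproducing the (ABS-style) argument of \cite[Theorem~2.2]{Steurer10}.
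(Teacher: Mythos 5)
Your outer greedy loop and the edge-accounting (the computation giving $C \le \tfrac{2}{2+\delta}|E|$) are fine, and in fact cleaner than the paper's final accounting; the gap is in the engine, the ``one-shot lemma,'' and specifically in the spectral step you sketch for it. You propose to dispose of the (possibly numerous) large negative eigenvalues by passing to the lazy walk and running a Cheeger sweep, deriving a contradiction with the fact that the $t$-step walk from a good vertex is far from $\pi$. But the largeness premise you extracted from $\Tr(N^{2t}) \ge n^{\alpha}\tau^{2t}$ concerns the \emph{non-lazy} walk, and large negative eigenvalues can keep the non-lazy walk far from stationarity forever while the lazy walk mixes rapidly, so fast lazy mixing contradicts nothing. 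If instead you restate the premise for the lazy walk, then $\rank_\tau(G) > n^{\alpha}$ with $\tau$ near $0$ only yields many lazy eigenvalues just above $\tfrac{1+\tau}{2} \approx \hf$, whose contribution to the lazy trace decays like $4^{-t}$ --- faster than the roughly $(7/8)^{2t}$ decay that Cheeger's inequality forces when every sweep cut has conductance at least $1-\delta$ --- so no contradiction can be reached in the small-$\tau$ regime. This is precisely why the paper (following Steurer) avoids laziness: in Lemma~\ref{lem:nibble} it selects a starting vertex by comparing $\|\Pi e_i\|^2$ to $\deg(i)/\Tr(D)$ (not to $1/n$; your uniform-averaging selection is also off for irregular graphs), telescopes to find a step where $z = N^{t-1}e_1$ has Rayleigh quotient at least $\tau^{2+c}$ against $N^2$, and then uses entrywise nonnegativity --- $y = z + \hf N z$ satisfies $y^\top N y \ge z^\top N^2 z$ because $z^\top N z,\ z^\top N^3 z \ge 0$ --- to obtain a nonnegative vector with Rayleigh quotient $\Omega(\tau^{2+c})$ against $N$ itself, which the sparse Cheeger rounding of Lemma~\ref{lem:local-cheeg} then converts into a set.

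The second gap is the inner recursion inside your one-shot lemma: you assert it has depth $O(1/\alpha)$, but a Cheeger sweep gives no control on the size of the set it returns, and without a volume guarantee nothing bounds that depth, so the compounded loss need not stay at $\exp\bigl(-O(\tfrac{1}{\alpha^2}\log\tfrac1\tau)\bigr)$. In the paper the depth bound is exactly where the volume clause of Lemma~\ref{lem:nibble} is used: each nibbled set has relative volume about $n^{-\alpha/4}$, so along any chain of nested subdivisions the volume drops below $n^{\alpha}$ (forcing $\rank_\tau \le n^{\alpha}$ trivially) after $O(1/\alpha)$ steps, and compounding the per-level edge retention $\Omega(\tau^{O(1/\alpha)})$ over those $O(1/\alpha)$ levels is what produces the stated bound. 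Note also that the paper's recursion never demands that the split-off piece have low threshold rank --- it simply keeps subdividing any high-rank component and terminates by volume --- so the stronger one-shot statement you posit is essentially the theorem itself; deferring its proof to ``reproducing \cite{Steurer10}'' is where all the content lies, and the sketch you give of that argument (lazy walk, no volume control) is not the argument that works.
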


We will prove this theorem by showing that if $\rank_\tau(G) \ge n^{\alpha}$, then we can in polynomial time find a subset of vertices that account for a $\approx n^{-\alpha/2}$ fraction of the volume with expansion $\le 1-f(\tau)$ for $f$ a function going to $0$ with $\tau$.
Formally,

\begin{lemma}\label{lem:nibble}
Let $c > 0$, let  $k \ll n$ be integers, let $G$ be an $n$-vertex graph, and suppose $\rank_{\tau}(G) \ge k$.
Then in polynomial time we can find a subset of vertices $S$ with $\frac{\vol(S)}{\vol(G)} \le O\left(\frac{1}{\tau^{c +2}\sqrt{k}} \left(\frac{n^2}{k}\right)^{1/c}\right)$ such that $1_S^\top A 1_S \ge \eta 1_S^\top D 1_S$, for $\eta = \Omega\left(\tau^{2c + 4}\right)$.
\end{lemma}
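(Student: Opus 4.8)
The plan is to prove Lemma~\ref{lem:nibble} by the local spectral-partitioning (``nibble'') argument of Arora--Barak--Steurer, in the small-$\tau$ form due to Steurer. Write $N = D^{-\hf}AD^{-\hf}$ for the normalized adjacency matrix and $\Pi$ for the projection onto its eigenvectors with eigenvalue $>\tau$, so $\Tr\Pi = \rank_\tau(G)\ge k$ and $N^{2c}\succeq\tau^{2c}\Pi$, whence $\Tr(N^{2c})\ge\tau^{2c}k$. The first observation is that every power $N^\ell$ is entrywise nonnegative (a product of nonnegative matrices), so $\Tr(N^{2c}) = \sum_v\|N^ce_v\|_2^2$ and there is a vertex $v$ for which $u := N^ce_v$ is a nonnegative vector with $\|u\|_2^2 = (N^{2c})_{vv}\ge\tau^{2c}k/n$; equivalently the $2c$-step walk from $v$ has return probability $(P^{2c})_{vv}\ge\tau^{2c}k/n$, i.e.\ it is far from mixed. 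Reversibility also gives the pointwise bound $u_i = \sqrt{d_v/d_i}\,(P^c)_{vi}\le\sqrt{d_v/d_i}$, and $N$-invariance of $\sqrt{\pi}$ gives $\sum_i u_i\sqrt{d_i} = \sqrt{d_v}$; these two facts are what will convert vertex counts into volumes and bound $\vol(S)$.

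From the walk vector I would produce a small non-expanding set as follows. Because a column of a large power of $N$ is, up to normalization, concentrated on the eigenspaces with $\lambda^2$ close to $1$ (the small-$|\lambda|$ components being exponentially suppressed by $N^c$), $u$ is nearly invariant under one more application of $N^2$. Working throughout with the square graph $G^2$, whose adjacency matrix $B := AD^{-1}A$ is symmetric, entrywise nonnegative, and has the \emph{same} degree sequence $D$ as $G$, this translates into a Rayleigh-quotient bound $f^\top B f\ge\rho\,f^\top D f$ for $f := D^{-\hf}u$, where $f^\top B f = \|Nu\|_2^2 = (N^{2c+2})_{vv}$, $f^\top D f = \|u\|_2^2$, and $\rho = \tau^{O(c)}$ after bookkeeping (choosing $v$ among the heavy vertices; a fallback is to take the positive part of a spectral-projector vector, using that $N$ has nonnegative entries so that the positive part still carries a $\tau$-fraction of the quotient). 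Now sweep the level sets $S_t := \{i : f_i^2 > t\}$: for nonnegative $f$, Cauchy--Schwarz gives $\sum_{ij}B_{ij}f_if_j\le\sqrt{\sum_{ij}B_{ij}\min(f_i^2,f_j^2)}\cdot\sqrt{\sum_{ij}B_{ij}\max(f_i^2,f_j^2)}$, and $\sum_{ij}B_{ij}\max(f_i^2,f_j^2)\le\sum_{ij}B_{ij}(f_i^2+f_j^2) = 2f^\top D f$, so $\sum_{ij}B_{ij}\min(f_i^2,f_j^2)\ge\tfrac{\rho^2}{2}f^\top D f$; since $\int_0^\infty 1_{S_t}^\top B\,1_{S_t}\,dt = \sum_{ij}B_{ij}\min(f_i^2,f_j^2)$ and $\int_0^\infty\vol(S_t)\,dt = f^\top D f$, some level set $S'$ has $1_{S'}^\top B\,1_{S'}\ge\tfrac{\rho^2}{2}\vol(S')$. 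Using $|S_t|\le\|u\|_2^2/t$ together with the pointwise bound on $u_i$ to control $\vol(S_t)$, and restricting to levels above an appropriate threshold, keeps $\vol(S')$ within the claimed $\tfrac{1}{\tau^{c+2}\sqrt k}(n^2/k)^{1/c}$-fraction of $\vol(G)$. Finally transfer density from $G^2$ to $G$: let $T$ be the vertices sending at least a $\tfrac{\rho^2}{4}$-fraction of their $G$-edges into $S'$; from $\|D^{-\hf}A1_{S'}\|_2^2\ge\tfrac{\rho^2}{2}\vol(S')$, splitting off the contribution of $V\setminus T$, one gets $\tfrac{\rho^2}{8}\vol(S')\le\vol(T)\le\tfrac{8}{\rho^2}\vol(S')$, so $S := S'\cup T$ satisfies $1_S^\top A\,1_S = \Omega(\rho^4)\vol(S)$ and $\vol(S) = O(\vol(S')/\rho^2)$; tracking constants yields $\eta = \Omega(\tau^{2c+4})$ and the stated volume bound. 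Everything here is one eigendecomposition, one matrix power, and a sweep over at most $n$ level sets, hence polynomial time.

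The main obstacle is the construction of the vector $u$: arranging that a single nonnegative vector is \emph{simultaneously} (a) heavy, carrying a $\tau^{O(c)}/n$-fraction of the total $\ell_2^2$ mass (which is exactly the content of $\Tr(N^{2c})\ge\tau^{2c}k$), (b) supported on small volume, a $\tfrac{1}{\tau^{c+2}\sqrt k}(n^2/k)^{1/c}$-fraction of $\vol(G)$ (this is where the threshold rank $k$ and the $c$-th root enter, through the interplay of $\|N^{2c}\|\le1$, the trace lower bound, and the pointwise bound $u_i\le\sqrt{d_v/d_i}$), and (c) essentially confined to the top ($\lambda^2>\tau^2$) eigenspace, so that it has a genuine $G^2$-Rayleigh quotient $\rho=\tau^{O(c)}$ rather than merely being nonnegative. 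Balancing (a)--(c) is what fixes the power $c$ against $\tau$ and produces all the exponents in the statement; it is also the only place where the large-magnitude negative eigenvalues of $N$ require care, which is why the argument passes through $G^2$ and returns to $G$ only at the very end.
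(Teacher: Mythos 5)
Your overall strategy is the same as the paper's (the Arora--Barak--Steurer/Steurer local-partitioning argument: a nonnegative random-walk vector launched from a carefully chosen vertex, a trace/threshold-rank lower bound on its mass, an $\ell_1$-type degree bound to control volume, and a sparse-Cheeger level-set rounding), but there is a genuine gap exactly at the step you yourself flag as the ``main obstacle,'' and that step is where all the exponents in the statement come from, so it cannot be deferred.

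First, taking $u = N^c e_v$ and asserting it is ``nearly invariant under one more application of $N^2$'' with Rayleigh quotient $\rho = \tau^{O(c)}$ is unjustified and false in general: $\|N^{c+1}e_v\|^2/\|N^{c}e_v\|^2$ can be as small as about $\tau^2 (k/n)^{\Omega(1/c)}$, which is polynomially small in $n$ for fixed $c$. For instance, give $e_v$ mass $\approx k/n$ on an eigenvalue $\approx \tau$ and the rest on an eigenvalue $\lambda_0$ with $\lambda_0^{2c} \approx \tau^{2c}k/n$; then $(N^{2c})_{vv} \ge \tau^{2c}k/n$, satisfying your selection rule, yet the step-$c$ mass is dominated by the $\lambda_0$ component and collapses in one more step. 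The paper avoids this with a telescoping/geometric-mean argument: it compares $e_1^\top N^{2t}e_1 / e_1^\top N^{2(t-1)}e_1$ over all $t \le T = \lfloor \log(1/\mu)/(c\log(1/\tau))\rfloor$ and uses the \emph{best intermediate step} $z = N^{t-1}e_1$, which is what yields a quotient $\ge \mu^{1/T}\tau^2 \ge \tau^{c+2}$; your sketch commits to the fixed step $c$ and has no analogue of this selection. Second, your starting vertex is chosen by the unweighted trace average ($(N^{2c})_{vv}\ge \tau^{2c}k/n$), whereas the volume bound needs the degree-weighted ratio selection $\|\Pi e_i\|^2 \ge k\, d_i/\Tr(D)$. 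The sweep's volume fraction is governed by roughly $\frac{1}{\rho}\sqrt{d_v/(\Tr(D)\,\|u\|^2)}$, and with only $\|u\|^2\ge \tau^{2c}k/n$ and no control of $d_v$ this can exceed a constant (e.g.\ $d_v=\Theta(n)$ in an irregular graph), so the $\frac{1}{\sqrt k}(n^2/k)^{1/c}$ factor does not follow. Your pointwise bounds $u_i \le \sqrt{d_v/d_i}$ and $\sum_i u_i\sqrt{d_i} = \sqrt{d_v}$ are the right ingredients (they play the role of the paper's bound $\|Du\|_1 \le \tfrac32\sqrt{d_1}$), but they only pay off once $d_v$ is tied to the top-eigenspace mass of $e_v$, which is precisely the ratio selection you have not made; the proposed ``positive part of a spectral-projector vector'' fallback does not obviously preserve either nonnegativity-based quotient bounds or the locality needed for the volume bound. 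A smaller point: routing through $G^2$ and transferring back via the set $T$ of vertices with many edges into $S'$ is sound, but it loses an extra square, giving at best $\eta = \Omega(\tau^{4c+8})$ rather than the stated $\Omega(\tau^{2c+4})$; the paper gets the sharper exponent by passing from $z$ to $y = z + \tfrac12 Nz$ (using nonnegativity of $N$ and $z$) before rounding in $G$ itself.
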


We will prove Lemma \ref{lem:nibble}, then below show how to apply this lemma recursively to obtain Theorem \ref{thm:partition}.
\begin{proof}[Proof of Lemma \ref{lem:nibble}]
Throughout the proof, for a graph $G$, we will let $A$ denote the adjacency matrix, $D$ denote the diagonal degree matrix, $P = D^{-1}A$ denote the transition matrix of the random walk on $G$, and $N = D^{-\hf} A D^{-\hf}$ denote the symmetric normalized adjacency matrix.

First, we will establish the existence of a standard basis vector $e_i$ with a large projection into the large eigenspaces of $N$ (relative to the degree of the corresponding vertex $i$). 
This will be a ``good starting point'' for identifying the modestly non-expanding set.
From this point on, our proof is similar to \cite{ABS}: we will apply a sparse Cheeger's inequality to show that if one takes an $T = O(\log n)$ step random walk, then there is some step $t \le T$ such that a level set of the measure of the $t$-step random walk starting from $i$ has large volume and does not expand too much.

\paragraph{Existence of a good starting point.}
Let $\Pi$ be the projector to the eigenspace of $N$ with eigenvalues $\ge \tau$.
Since $\rank_\tau(G) \ge k$, we have that
\[
k \le \Tr(\Pi) = \sum_{i \in [n]} \|\Pi e_i\|^2.
\]
Similarly, let $v = \frac{1}{\sqrt{\Tr(D)}}D^{\hf} \vec{1}$, and note that $v$ is a unit vector. 
From this we have that
\[
1 = \sum_{i \in [n]} \iprod{v, e_i}^2.
\]
Taking these together, we must have at least one standard basis vector $e_i$ such that
\begin{align}
\frac{\|\Pi e_i\|^2}{\iprod{v,e_i}^2} \ge \frac{\sum_{j \in [n]} \|\Pi e_j\|^2}{\sum_{j \in [n]} \iprod{v,e_j}^2 } \ge k.\label{eq:starting-pt}
\end{align}
and furthermore we have that $\iprod{v,e_i}^2 = \frac{\deg(i)}{\Tr(D)} > 0$ since no vertex is isolated.
The inequality (\ref{eq:starting-pt}) makes vertex $i$ and the associated standard basis vector $e_i$ a ``good starting point.''
Without loss of generality, suppose that $e_1$ is a good starting point, and define define $\mu = \|\Pi e_1\|^2$.

\paragraph{Finding an analytically non-expanding set via random walk.}
Take $T = \left\lfloor \frac{\log \frac{1}{\mu}}{c\log \frac{1}{\tau}}\right\rfloor$.
Since the eigenvalues of $N$ in $\Pi$ are at least $\tau$, we have a lower bound on the $\ell_2$ mass of the measure of the random walk\footnote{This is only precise for regular graphs. The measure of the random walk starting at $1$ is $\|e_1^\top P^t\|^2$. For regular graphs, $P = N$; in a non-regular graph, this is not quite precise.} starting from $1$:
\[
e_1^\top N^{2T} e_1 = \|N^T e_1\|_2^2 \ge \|\Pi e_1\|_2^2 \cdot \tau^{2T} = \mu \cdot \tau^{2T}.
\]
We also have that $N^0 = \Id$, and therefore $e_1^\top N^0 e_1 = 1$.
Taking a ``telescoping product,'' this implies that
\[
\mu\cdot  \tau^{2T} \le \frac{e_1 N^{2T} e_1 }{e_1^\top N^0 e_1} = \prod_{t = 1}^T \frac{e_1^\top N^{2t} e_1}{e_1^\top N^{2(t-1)}e_1}
\]
In particular, there must be some $t \in [T]$ which is at least as large as the geometric mean of these $T$ terms.
Let $t$ be that index, so that
\[
\frac{e_1^\top N^{2t} e_1 }{e_1^\top N^{2(t-1)}e_1} \ge \mu^{1/T} \tau^2 \ge \tau^{2+c}.
\]

For convenience, call $\tau' = \tau^{2+c}$, and call $z = N^{t-1} e_1$.
The above can be re-written as a statement about the Rayleigh quotient of $z$ in $N^2$,
\[
\frac{z^\top N^2 z}{z^\top z} \ge \tau'.
\]
We will take the vector $y = z + \frac{1}{2} N z$, and use the fact that $N$, $z$ have non-negative entries to show that $y$ has a large Rayleigh quotient with $N$.
From the triangle inequality, $\|y\|^2 \le \left(\|z\| + \frac{1}{2} \|N\| \|z\|\right)^2 \le \frac{9}{4} \|z\|^2$, and from the non-negativity of $N,z$, $y^\top N y = z^\top N z + z^\top N^2 z + \frac{1}{4} z^\top N^3 z \ge z^\top N^2 z$.
Therefore,
\[
\frac{y^\top N y}{y^\top y} \ge \frac{4}{9} \tau'.
\]

Now, we perform one final manipulation, letting $u = D^{-\hf} y$.
Re-writing the above,
\[
\frac{u^\top A u}{u^\top D u} \ge \frac{4}{9}\tau'.
\]
We thus have a vector $u$ which is an analytically non-expanding set.

\paragraph{Rounding to a discrete set.}
We now apply a sparse Cheeger inequality to $u$ to obtain our set.
We will use the following lemma, which is a modification of \cite[Lemma 3.4]{ABS} to non-regular graphs which may also be found in \cite[Lemma 2.2]{Steurer-thesis}.
In Appendix \ref{app:spec} we provide a proof for completeness.
\begin{lemma}[sparse Cheeger for irregular graphs]\label{lem:local-cheeg}
Let $A,D$ be the adjacency and degree matrices of a simple $n$-vertex graph $G = (V,E)$.
Suppose $v \in \R^n$ is a non-negative vector such that $v^\top A v \ge \epsilon v^\top D v$, and $\|Dv\|_1 \le \theta v^\top D v$.
Then for each $\delta>0$ there is a distribution over vectors $x \in \{0,1\}^n$ with $x^\top D x \le \frac{1}{\delta} \theta v^\top D v$ such that
\[
\frac{x^\top A x}{x^\top D x} \ge \frac{1}{2}\epsilon^2 - \delta^2 \left(\frac{\Tr(D)}{v^T D v} + 1\right).
\]
\end{lemma}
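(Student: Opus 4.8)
The plan is a sweep-cut rounding of $v$: we threshold $v$ at a random level and output the indicator of the resulting super-level set. Two ingredients are needed — an algebraic inequality controlling the internal-edge mass of super-level sets of $v$ in terms of the adjacency form $v^\top A v$, and a Markov-type truncation that keeps the volume of the set we output below $\tfrac1\delta\theta\, v^\top D v$.

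The first (and essentially only) real step is the inequality
\[
\sum_{ij} A_{ij}\,\min(v_i^2,v_j^2)\;\ge\;\frac{(v^\top A v)^2}{2\,v^\top D v}\;\ge\;\tfrac12\epsilon^2\,v^\top D v .
\]
The trick is that because $v\ge 0$ we may factor $v_iv_j=\sqrt{\min(v_i^2,v_j^2)}\cdot\sqrt{\max(v_i^2,v_j^2)}$, so by Cauchy--Schwarz
\[
v^\top A v=\sum_{ij}A_{ij}v_iv_j\le\Big(\sum_{ij}A_{ij}\min(v_i^2,v_j^2)\Big)^{1/2}\Big(\sum_{ij}A_{ij}\max(v_i^2,v_j^2)\Big)^{1/2},
\]
and $\sum_{ij}A_{ij}\max(v_i^2,v_j^2)\le\sum_{ij}A_{ij}(v_i^2+v_j^2)=2\,v^\top D v$. (Here $\epsilon\le 1$ since $\|D^{-1/2}AD^{-1/2}\|\le 1$, and we may assume the claimed ratio bound is positive, since otherwise any single vertex $x=e_i$ works as $x^\top A x=0$.) I want to flag that this is exactly where non-negativity of $v$ gets used, and where the textbook Cheeger manipulation through $|v_i^2-v_j^2|$ would be too lossy: $\epsilon$ is tiny, so the usual $\phi\lesssim\sqrt{1-\epsilon}$ bound is vacuous, whereas here we only need to produce a set whose non-expansion is a positive $\Omega(\epsilon^2)$.

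The remaining steps are routine. Write $m:=\|v\|_\infty$, draw $s$ with $s^2$ uniform on $[0,m^2]$, and set $x:=\mathbf 1_{\{i:\,v_i\ge s\}}$. Then $\Pr[x_i=1]=v_i^2/m^2$, so $\E[x^\top D x]=\tfrac1{m^2}v^\top D v$ and $\E[x^\top A x]=\tfrac1{m^2}\sum_{ij}A_{ij}\min(v_i^2,v_j^2)$, and the inequality above reads $\E[x^\top A x]\ge\tfrac12\epsilon^2\,\E[x^\top D x]$. To also control volume I will only keep thresholds $s\ge\delta$: for any such $s$, $x^\top D x=\sum_{i:v_i\ge s}d_i\le\tfrac1\delta\sum_i d_iv_i=\tfrac1\delta\|Dv\|_1\le\tfrac\theta\delta v^\top D v$, which is exactly the claimed volume bound. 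Meanwhile the contribution to $\E[x^\top A x]$ from the discarded range $s<\delta$ is at most $\tfrac1{m^2}\int_0^{\delta^2}\big(\textstyle\sum_{ij}A_{ij}\big)\,du=\tfrac{\delta^2}{m^2}\Tr(D)$, using $x^\top A x\le\Tr(D)$. Combining these and $\E[x^\top D x\,\mathbf 1_{s\ge\delta}]\le\E[x^\top D x]$,
\[
\E\Big[\big(x^\top A x-\eta\,x^\top D x\big)\,\mathbf 1_{s\ge\delta}\Big]\;\ge\;\tfrac1{m^2}\Big(\tfrac12\epsilon^2\,v^\top D v-\delta^2\Tr(D)-\eta\,v^\top D v\Big)\;=\;\tfrac{\delta^2}{m^2}v^\top D v\;\ge\;0
\]
for $\eta=\tfrac12\epsilon^2-\delta^2\big(\tfrac{\Tr(D)}{v^\top D v}+1\big)$. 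Hence the law of $x=\mathbf 1_{\{i:\,v_i\ge s\}}$ conditioned on $s\ge\delta$ is a distribution supported on $0/1$ vectors of volume $\le\tfrac\theta\delta v^\top D v$ under which $\E[x^\top A x-\eta\,x^\top D x]\ge 0$; in particular some super-level set of $v$ achieves $\tfrac{x^\top A x}{x^\top D x}\ge\eta$, and it can be found in polynomial time by sorting the coordinates of $v$ and checking the at most $n$ candidate threshold sets.

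So the only genuine obstacle is spotting the right Cauchy--Schwarz split in the first displayed inequality; once that is in hand the sweep and the volume truncation are mechanical, and the slightly odd $+1$ in the error term is precisely the slack produced by bounding $\E[x^\top D x\,\mathbf 1_{s\ge\delta}]$ by $\E[x^\top D x]$.
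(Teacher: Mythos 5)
Your proof is correct, and it uses the same high-level strategy as the paper (a randomized threshold/sweep-cut rounding of $v$), but both technical steps are carried out differently, and the differences are worth noting. For the edge-mass bound, the paper works through the Laplacian: it computes $\E\, x^\top (D-A)x=\sum_{i\sim j}|v_i^2-v_j^2|$, applies Cauchy--Schwarz with the split $|v_i-v_j|\cdot|v_i+v_j|$, and gets $\E\, x^\top A x\ge\bigl(1-\sqrt{1-\epsilon^2}\bigr)v^\top Dv$; your split $v_iv_j=\sqrt{\min(v_i^2,v_j^2)}\sqrt{\max(v_i^2,v_j^2)}$ lower-bounds $\sum_{ij}A_{ij}\min(v_i^2,v_j^2)$ directly and yields the same $\tfrac12\epsilon^2$ factor without passing through $D-A$. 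For the volume control, the paper bounds the second moment $\E[(x^\top Dx)^2]\le\|Dv\|_1^2$, applies Chebyshev to show the overflow event has probability at most $\delta^2$, and then conditions, paying the $\delta^2\bigl(\Tr(D)/v^\top Dv+1\bigr)$ correction through the conditioning; you instead note that every super-level set at threshold $s\ge\delta$ \emph{deterministically} satisfies $x^\top Dx\le\tfrac1\delta\|Dv\|_1\le\tfrac\theta\delta v^\top Dv$ and only charge the discarded range $s<\delta$ of the sweep. This is more elementary (no second moments), your normalization by $m=\|v\|_\infty$ removes the paper's implicit assumption that $\|v\|_\infty\le 1$ in the rounding step, and your version extracts an actual cut (checkable over the at most $n$ threshold sets) rather than the paper's ratio-of-conditional-expectations statement, which is in fact the form needed downstream in Lemma~\ref{lem:nibble}. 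Two minor points to record: the step $\E[x^\top Dx\,\mathbf{1}_{s\ge\delta}]\le\E[x^\top Dx]$ is applied with the factor $-\eta$, so it needs $\eta\ge 0$, which is exactly your standing assumption that the claimed bound is positive (and $\eta\ge0$ also forces $\delta<\|v\|_\infty$, so the conditioning event is nonempty); and your fallback for $\eta<0$ via a single vertex is not entirely automatic, since that vertex must also satisfy $d_i\le\tfrac\theta\delta v^\top Dv$, which for the vertex maximizing $v_i$ again requires $\|v\|_\infty\ge\delta$ --- a harmless corner case, since there the stated bound is vacuous and the paper's own argument also yields nothing, but it merits a sentence.
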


To apply the lemma, we must bound the magnitude of $\|D u\|_1$.
Using the relations $u = D^{-\hf} y$, $y = z + \tfrac{1}{2} N z$, $z = N^{t-1}e_1$, and $D^{-\hf} N = P D^{-\hf}$ (where $P$ is the transition matrix), we obtain:
\begin{align}
Du 
= D D^{-\hf} \left(N^{t-1} + \frac{1}{2}N^t\right)e_1
= D P^{t-1} D^{-\hf} e_1 + \frac{1}{2} D P^t D^{-\hf} e_1. \label{eq:transition}
\end{align}
Now, for a non-negative vector $w$, $\|DPw\|_1 = \|Dw\|_1$, since $1^\top D = 1^\top A$, and $\|DPw\|_1 = 1^\top D P w = 1^\top A w = 1^\top D w$.
Applying this $t-1$ and $t$ times to (\ref{eq:transition}) above, we have
\[
\|Du\|_1 \le \frac{3}{2} \|D^{\hf} e_1\|_1 = \frac{3}{2} \sqrt{\Tr(D)} \iprod{v,e_1},
\]
where we recall that $v = \frac{1}{\sqrt{\Tr(D)}} D^{\hf} \vec{1}$.
So, $\|Du\|_1 \le \theta u^\top D u$ for $\theta = 6 \sqrt{\Tr(D)} \iprod{v,e_1}\frac{1}{u^\top D u}$. 
Applying the Lemma \ref{lem:local-cheeg} with $\delta = \frac{1}{9}\tau' \sqrt{\frac{u^\top D u}{\Tr(D)}}$, we get a set indicator vector $x$ with large Rayleigh quotient,
\[
\frac{x^\top A x}{x^\top D x} \ge \frac{4}{81}(\tau')^2 - \delta^2\left(\frac{\Tr(D)}{u^\top D u} + 1\right) \ge \frac{2}{81} (\tau')^2.
\]
We also obtain the following bound on the relative volume,
\[
\frac{x^\top D x}{\Tr(D)} 
\le \frac{1}{\Tr(D)} \frac{1}{\delta} \theta u^\top D u 
\le \frac{1}{\Tr(D)} \cdot \frac{9}{\tau'} \sqrt{\frac{\Tr(D)}{u^\top D u}} \cdot 6 \sqrt{\Tr(D) \iprod{v,e_1}^2}
= \frac{54}{\tau'}\sqrt{\frac{\iprod{v,e_1}^2}{u^\top D u}}.
\]
Which we can simplify by noting that
\[
u^\top D u = \|y\|^2 \ge \frac{1}{4}\|z\|^2 = \frac{1}{4} e_1^\top N^{2t -2} e_1 \ge \frac{1}{4}\|\Pi e_1\|^2 \tau^{2T} \ge \frac{1}{4}\mu^{1 + 2/c}.
\]
Combining the above,
\[
\frac{x^\top D x}{\Tr(D)} 
\le \frac{54}{\tau'}\sqrt{4\frac{\iprod{v,e_1}^2}{\mu^{1 + 2/c}}}
\le \frac{108}{\tau'}\frac{1}{\sqrt{k}} \mu^{-1/c},
\]
where we have used the ``good starting point conditon''  $\mu/\iprod{v,e_1}^2 \ge k$ guaranteed by (\ref{eq:starting-pt}).
Finally, we use that $\iprod{v,e_1}^2 = \frac{d_1}{\Tr(D)} \le \frac{\mu}{k}$ to get that $\mu \ge \frac{k}{n^2}$, from which we have our conclusion.
\end{proof}

Now, we prove Theorem \ref{thm:partition}.
\begin{proof}[Proof of Theorem \ref{thm:partition}]
Let $k = n^{\alpha}$, and notice that by assumption $\vol(G) \le n^2$.
Let $r = \frac{12}{\alpha} \ge \frac{8\log n}{\log k} - 4$, so that $\left(\frac{n^2}{k}\right)^{1/r} \le k^{1/4}$.
Call $\tau ' := \tau^{r + 2}$; notice that by assumption $\tau'$ is a fixed constant independent of $n$.
Let $\ell = \frac{8}{\alpha} \ge 4\frac{\log \vol(G) - \log k}{\log k - \frac{1}{4}\log \frac{1}{\tau'}}$.
We apply Lemma \ref{lem:nibble} with $c = r$, $k = n^{\alpha}$, and $\tau$ recursively to obtain a partition $P$: we begin with $P_0 = \{G\}$, and then at every time step $t \ge 0$, if there exists a component $C$ in $P_t$ with $\rank_{\tau}(P_t) \ge k$, we apply Lemma \ref{lem:nibble} to remove from $C$ a subgraph $C'$ with $\frac{\vol(C')}{\vol(C)} \le O(\frac{1}{\tau' k^{1/4}})$ and $\phi_C(C') \le 1 - O((\tau')^2)$, then we take $P_{t+1}$ to be the refinement of $P_t$ given by splitting $C'$ off of $C$, $P_{t+1} = (P_t \setminus \{C\}) \cup \{C', C \setminus C'\}$.
This process must terminate, since a component of volume $\le k$ has rank at most $k$.

Let $P$ be the final partition, and consider any part $G_i$ in $P$.
Suppose that $G_i$ was obtained by subdividing $C_0 = G$, then $C_1,\ldots,C_m$.
Since at each subdivision the volume dropped by a factor of $\tau'^{-1}k^{-1/4}$, there cannot have been more that $\ell$ subdivisions that created $G_i$.
Further, at each subdivision the components kept a $O((\tau')^2)$ fraction of their internal edges.
Thus, $\phi_G(G_i) \le 1 - \left(c_1\tau'\right)^{2\ell}  = 1 - \exp\left(-O(\frac{1}{\alpha^2}\log\left(\frac{1}{\tau}\right))\right)$.
\end{proof}

\section{Global Correlation Rounding}\label{app:gcr}

First, we will prove that the global mutual information drops in expectation under conditioning.
\begin{lemma}[\cite{RagT12}, Lemma 4.5]\label{lem:potential}
Let $\mu$ be a distribution over $[n]$, and let $X_1,\ldots,X_n$ be $q$-ary random variables.
Then for all $k \in \N$ there exists $t \le k$ such that 
\[
\E_{i_1,\ldots,i_t \sim \mu} \E_{a,b \sim \mu}[I(X_a;X_b | X_{i_1},\ldots, X_{i_t})] \le \frac{\log q}{k}.
\]
Furthermore, this is true if $X_1,\ldots,X_n$ are $(k+2)$-local.
\end{lemma}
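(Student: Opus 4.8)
The plan is the standard entropy-potential argument of \cite{RagT12}. For each $t \geq 0$ define the potential
\[
\Phi_t \;=\; \E_{i_1,\ldots,i_t,a \sim \mu}\; H(X_a \mid X_{i_1},\ldots,X_{i_t}),
\]
where $i_1,\ldots,i_t,a$ are drawn i.i.d.\ from $\mu$ and $H$ is Shannon entropy. First I would record the trivial bounds $\Phi_0 = \E_{a\sim\mu}H(X_a) \leq \log q$ (each $X_a$ is $q$-ary) and $\Phi_t \geq 0$. Next I would note monotonicity $\Phi_{t+1} \leq \Phi_t$: conditioning on the extra variable $X_{i_{t+1}}$ cannot increase the conditional entropy of $X_a$ in expectation, so each successive difference $\Phi_t - \Phi_{t+1}$ is non-negative.

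The key step is to recognize the one-step decrement of $\Phi$ as exactly the averaged conditional mutual information we want to bound. Relabelling the fresh index $i_{t+1}$ as $b$ and using $I(U;V\mid W) = H(U\mid W) - H(U\mid V,W)$,
\begin{align*}
\Phi_t - \Phi_{t+1}
&= \E_{i_1,\ldots,i_t,a,b \sim \mu}\Big[\,H(X_a\mid X_{i_1},\ldots,X_{i_t}) - H(X_a\mid X_b,X_{i_1},\ldots,X_{i_t})\,\Big]\\
&= \E_{i_1,\ldots,i_t \sim \mu}\;\E_{a,b\sim\mu}\;I(X_a; X_b \mid X_{i_1},\ldots,X_{i_t}).
\end{align*}
(If $a$ or $b$ coincides with one of $i_1,\ldots,i_t$, or if the $i_j$ repeat, the relevant mutual information is zero, so nothing breaks.) Telescoping gives $\sum_{t=0}^{k-1}(\Phi_t - \Phi_{t+1}) = \Phi_0 - \Phi_k \leq \log q$; since the $k$ summands are non-negative, some $t \in \{0,\ldots,k-1\}$ has $\Phi_t - \Phi_{t+1} \leq \tfrac{\log q}{k}$, which is precisely the claimed inequality (and $t \leq k-1 \leq k$).

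For the ``furthermore'' clause I would simply observe that for $t \leq k-1$ every quantity above depends only on the joint law of the at most $k+1 \leq k+2$ variables $X_a, X_b, X_{i_1},\ldots,X_{i_t}$, all of which is determined by a $(k+2)$-local pseudodistribution; moreover the chain-rule identity used is itself purely local, relating entropies of subsets of one fixed set of $\leq k+2$ variables, so the entire argument holds verbatim for $(k+2)$-local random variables. I do not expect a genuine obstacle here: the only points requiring any care are the bookkeeping for coinciding indices and checking that no expression in the argument ever refers to more than $k+2$ variables simultaneously.
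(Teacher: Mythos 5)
Your proof is correct and is essentially the same argument as the paper's: the standard telescoping of the entropy potential $\E_{i_1,\ldots,i_t,a}H(X_a\mid X_{i_1},\ldots,X_{i_t})$, where each one-step decrement is identified with the averaged conditional mutual information via the chain-rule identity and the relabeling of the fresh index as $b$. The only cosmetic difference is that you let $t$ range over $\{0,\ldots,k-1\}$ (so only $k+1$ variables ever appear at once) while the paper sums over $\ell\in[k]$, which is why it states $(k+2)$-locality; both yield the claimed statement.
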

\begin{proof}
We use the identity $H(A|B,C) = H(A|C) - I(A;B|C)$ to obtain that for any $i_1,\ldots,i_{\ell}$,
\[
H(X_a| X_b, X_{i_1},\ldots,X_{i_\ell}) = H(X_a | X_{i_1},\ldots, X_{i_{\ell}}) - I(X_a;X_{b} | X_{i_1},\ldots,X_{i_{\ell}}).
\]
Rearranging and summing this equality over $\ell \in [k]$,
\[
\sum_{\ell = 1}^k I(X_a;X_b | X_{i_1},\ldots,X_{i_\ell}) = \sum_{\ell = 1}^k H(X_a | X_{i_1},\ldots,X_{i_\ell}) - H(X_a| X_b, X_{i_1},\ldots,X_{i_\ell}).
\]
Now, taking an expectation over $a,b,i_1,\ldots,i_{\ell} \sim \mu$ chosen independently, we have a telescoping of the sum on the right-hand side,
\[
\E_{a,b,i_1,\ldots,i_{k} \sim \mu} \sum_{\ell = 1}^k I(X_a;X_b | X_{i_1},\ldots,X_{i_\ell}) = \E_{a, i_1 \sim \mu} H(X_a | X_{i_1}) - \E_{a,b,i_1,\ldots,i_{k} \sim \mu} H(X_a| X_b, X_{i_1},\ldots,X_{i_\ell}) \le \log q,
\]
where the bound on the right-hand-side follows from the fact that $H(X_a) \le \log q$ and conditioning decreases entropy.
The bound now follows by taking the smallest of the $k$ terms in the sum.
\end{proof}

Next, we show that the objective value does not change in expectation under conditioning.
\begin{fact}[See e.g. \cite{BRS11}] \label{fact:obj-value}
If $\pE$ is a degree-$(D + k)$ Sherali--Adams pseudodistribution on local random variables $X_1,\ldots,X_n$ and $S \subset [n]$ has size at most $k$, then conditioning on $X^S = x^S$ for $x^S$ sampled according to the joint marginals defined by $\pE$ yields a degree-$D$ pseudodistribution $\pE_x'$ such that $\pE[q(X)] = \E_{x^S}\pE_x'[q(X)]$ for all $q$ of degree at most $D$ in $X$.
\end{fact}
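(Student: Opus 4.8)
The plan is to show that conditioning a pseudoexpectation obeys the tower rule exactly as an honest expectation does, so that the identity becomes essentially automatic once the degree bookkeeping is checked. First I would make the conditioning operation precise. For $y \in [q]^S$ write $\Ind(X^S = y) = \prod_{i \in S}\Ind(X_i = y_i)$, which as a polynomial in the Sherali--Adams variables is the degree-$|S|$ monomial $\prod_{i \in S} x_{i,y_i}$; in particular $\pE[\Ind(X^S=y)] = \Pr_\mu[X^S = y]$ is the joint marginal mass, and these masses sum to $1$ over $y \in [q]^S$ since the degree-$|S|$ local distribution of $\mu$ on $S$ is a genuine distribution on $[q]^S$. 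For $y$ of positive mass, set
\[
\pE'_{y}[p(X)] \;=\; \frac{\pE\bigl[p(X)\,\Ind(X^S = y)\bigr]}{\pE\bigl[\Ind(X^S = y)\bigr]},
\]
which is well-defined whenever $\deg p \le D$, because then $p(X)\,\Ind(X^S=y)$ has degree at most $D + |S| \le D+k$, within the degree of $\pE$. For $y$ of zero mass $\pE'_y$ may be defined arbitrarily (it is multiplied by probability $0$ below); moreover $\pE[p(X)\,\Ind(X^S=y)] = 0$ whenever $\pE[\Ind(X^S=y)]=0$, as one sees by passing to the genuine local distribution of $\mu$ on a coordinate set of size $\le D+k$ containing $S$ together with the coordinates on which $p$ depends.

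Next I would verify that $\pE'_y$ is a legitimate degree-$D$ Sherali--Adams pseudoexpectation. Linearity and $\pE'_y[1]=1$ are immediate. For positivity, if $f \ge 0$ is a function depending on at most $D$ coordinates, then $f(X)\,\Ind(X^S=y) \ge 0$ depends on at most $D+k$ coordinates, so $\pE$ of it is nonnegative; dividing by the strictly positive denominator preserves the sign. Hence $\pE'_y$ satisfies all the defining constraints of a degree-$D$ pseudoexpectation (equivalently, its local distributions on coordinate sets of size $\le D$ are the honest conditionals of $\mu$, consistent on overlaps because $\mu$ is).

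Finally I would compute the averaged conditional value. With $x^S = y$ drawn from the $\pE$-marginals, i.e.\ $\Pr[x^S=y] = \pE[\Ind(X^S=y)]$,
\[
\E_{x^S}\,\pE'_{x^S}[q(X)] \;=\; \sum_{y \in [q]^S} \pE[\Ind(X^S=y)]\cdot\frac{\pE\bigl[q(X)\,\Ind(X^S=y)\bigr]}{\pE[\Ind(X^S=y)]} \;=\; \sum_{y\in[q]^S}\pE\bigl[q(X)\,\Ind(X^S=y)\bigr],
\]
where the zero-mass terms drop out harmlessly. By linearity of $\pE$ this equals $\pE\bigl[q(X)\sum_{y\in[q]^S}\Ind(X^S=y)\bigr]$, and $\sum_{y\in[q]^S}\Ind(X^S=y) = \prod_{i\in S}\bigl(\sum_{a\in[q]} x_{ia}\bigr) = 1$ by the Sherali--Adams normalization $\sum_a x_{ia}=1$. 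Therefore $\E_{x^S}\pE'_{x^S}[q(X)] = \pE[q(X)]$, which is the claim; applied with $q(X) = \E_{i\sim j}\Pi_{ij}(X_i,X_j)$ (degree $2$) this gives invariance of the 2CSP objective value under conditioning.

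I do not expect a genuine obstacle here: the content is precisely the tower property for the genuine local distribution $\mu_{S\cup T}$, and the only points that require care are (i) the degree accounting, which is exactly why the hypothesis asks for degree $D+k$, and (ii) the bookkeeping that atoms $x^S$ outside the support of the marginals carry probability zero and so may be ignored both in defining $\pE'_{x^S}$ and in taking the average.
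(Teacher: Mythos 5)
Your proposal is correct and follows essentially the same route as the paper's proof: define $\pE'_{x^S}$ as the ratio $\pE[q(X)\Ind(X^S=x^S)]/\pE[\Ind(X^S=x^S)]$, average over $x^S$ drawn with probability $\pE[\Ind(X^S=x^S)]$ so the denominators cancel, and use $\sum_{x^S}\Ind(X^S=x^S)=1$ to recover $\pE[q(X)]$. You additionally spell out the degree bookkeeping, the verification that $\pE'_{x^S}$ is a valid degree-$D$ pseudoexpectation, and the vanishing of zero-mass atoms, all of which the paper leaves implicit; these are correct and harmless elaborations of the same argument.
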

\begin{proof}
By definition, we have that 
\[
{\pE_x}'[q(X)] = \begin{cases}
\frac{\pE[q(X) \cdot \Ind(X^S = x^S)]}{\pE[\Ind(X^S = x^s)]} & \pE[\Ind(X^S = x^s)] > 0 \\
0 & \text{otherwise}.
\end{cases}
\]
Therefore, since we set $X^S = x^s$ with probability $\pE[\Ind(X^S = x^S)]$,
\[
\E_{x} {\pE_x}'[q(X)]
= \sum_{x^S} \frac{\pE[q(X) \cdot \Ind(X^S = x^S)]}{\pE[\Ind(X^S = x^s)]} \cdot \Pr[X^S = x^S]
= \pE[q(X)],
\]
as desired.
\end{proof}

\begin{proof}[Proof of Theorem \ref{thm:gcr-conditioning}] 
We first apply Pinsker's inequality to deduce that $\pC(X_i,X_j)^2 \le 2I(X_i;X_j)$.
The theorem now follows by taking Fact \ref{fact:obj-value} together with Lemma \ref{lem:potential} and applying Markov's inequality together with a union bound.
\end{proof}

Finally, the following lemma shows that when the local covariance is small in absolute value, independent rounding achieves a large objective value in expectation.
\begin{lemma}[See e.g. \cite{BRS11}]
Suppose that $(G,\Pi)$ is an instance of a 2CSP, and suppose that $X_1,\ldots,X_n$ are $2$-local variables supported on $[q]$.
If $\E_{i\sim j} |\pC(X_i,X_j)|\le \delta$, then in expectation independent rounding according to the marginals of the individual $\{X_i\}$ will obtain objective value $\ge \E_{i\sim j} \pE \Pi(X_i,X_j) - \delta$.
\end{lemma}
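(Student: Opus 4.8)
The plan is to reduce to a single edge and then average over the graph. First I would fix an edge $(i,j)$ of $G$ and observe that, because $Y_i$ and $Y_j$ are drawn independently from the $1$-wise marginals $\{X_i\}$ and $\{X_j\}$, the expected value contributed by this edge under independent rounding is $\sum_{a,b \in [q]}\Pr(X_i=a)\Pr(X_j=b)\,\Pi_{ij}(a,b)$, i.e. the predicate averaged against the \emph{product} of the marginals; the corresponding contribution to the local objective value is the same sum but against the \emph{joint} local distribution, $\pE[\Pi_{ij}(X_i,X_j)] = \sum_{a,b}\Pr(X_i=a,X_j=b)\,\Pi_{ij}(a,b)$. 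All these quantities are well defined since they involve only the two coordinates $i,j$, and $X_1,\dots,X_n$ are $2$-local.

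The main step is a per-edge triangle inequality: since $\Pi_{ij}$ takes values in $\{0,1\}$, the difference between the joint-distribution value and the product-distribution value is at most $\sum_{a,b}\bigl|\Pr(X_i=a,X_j=b)-\Pr(X_i=a)\Pr(X_j=b)\bigr|$, which is exactly $\pC(X_i,X_j)$ by definition of the correlation measure. Then I would take the expectation over a uniformly random edge $i \sim j$, use linearity, and invoke the hypothesis $\E_{i\sim j}|\pC(X_i,X_j)|\le\delta$ to conclude that $\E_{i\sim j}\E_Y[\Pi_{ij}(Y_i,Y_j)] \ge \E_{i\sim j}\pE[\Pi_{ij}(X_i,X_j)] - \delta$, which is precisely the claim.

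There is essentially no obstacle here: the argument is one application of the triangle inequality per edge followed by linearity of expectation, and the only point requiring a moment's care is that the correlation in the hypothesis is already an $\ell_1$ distance (a sum of absolute values), so the sign of $\Pr(X_i=a,X_j=b)-\Pr(X_i=a)\Pr(X_j=b)$ never enters. This is also why the same computation underlies Lemma~\ref{lem:gcr-rounding}; the derandomization claimed there follows by the method of conditional expectations but is not needed for the present statement.
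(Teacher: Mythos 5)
Your proposal is correct and matches the paper's own argument: both bound the per-edge gap between the joint-distribution value and the product-of-marginals value by the $\ell_1$ distance $\pC(X_i,X_j)$ (using that $\Pi$ is $\{0,1\}$-valued), then average over a random edge and apply the hypothesis. No further comment is needed.
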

\begin{proof}
Let $Y_i \in [q]$ be the random variable defined by the marginal probability distribution $\{X_i\}$.
By definition $\pC(X_i,X_j) = \|\{X_i,X_j\} - \{X_i\}\{X_j\}\|_1$, where $\{X_i,X_j\}$ is the joint distribution and $\{X_i\}\{X_j\}$ is the product of the marginals.
Thus, the statement of the lemma is given by noticing that since $\Pi$ is a $0/1$ function, for every $i\sim j$, 
\[
\E_{Y_i \sim \{X_i\}, Y_j \sim \{X_j\}} \Pi(Y_i,Y_j)
\ge \left(\E_{\{X_i,X_j\}} \Pi(X_i,X_j)\right) - \|\{X_i,X_j\} - \{X_i\}\{X_j\}\|_1
\ge \pE \Pi(X_i,X_j) - \delta 
\]
and the conclusion follows.
\end{proof}
Lemma \ref{lem:gcr-rounding} follows by applying standard derandomization arguments.

\section{Proofs of spectral primitives}
\label{app:spec}

This lemma appears (with somewhat different language) in \cite[Lemma 2.2]{Steurer-thesis}.
We provide a proof here for completeness.
\begin{lemma*}[sparse Cheeger for irregular graphs, restatement of \ref{lem:local-cheeg}]
Let $A,D$ be the adjacency and degree matrices of a simple $n$-vertex graph $G = (V,E)$.
Suppose $v \in \R^n$ is a non-negative vector such that $v^\top A v \ge \epsilon v^\top D v$, and $\|Dv\|_1 \le \theta v^\top D v$.
Then for each $\delta>0$ there is a distribution over vectors $x \in \{0,1\}^n$ with $x^\top D x \le \frac{1}{\delta} \theta v^\top D v$ such that
\[
\frac{x^\top A x}{x^\top D x} \ge \frac{1}{2}\epsilon^2 - \delta^2 \left(\frac{\Tr(D)}{v^T D v} + 1\right).
\]
\end{lemma*}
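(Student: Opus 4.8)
The plan is to run the classical \emph{sweep‑cut} (Cheeger) rounding on $v$, but with the sweep restricted to a narrow window of thresholds chosen so that every candidate set automatically has small volume; the cost of this restriction will turn out to be exactly the additive error term $\delta^2(\Tr(D)/v^\top D v+1)$. Concretely, for $t\in(0,\|v\|_\infty^2]$ set $S_t:=\{ i\in[n]: v_i^2\ge t\}$ and let $1_{S_t}\in\{0,1\}^n$ be its indicator; these sets are nested and take at most $n$ distinct values as $t$ decreases. The distribution I would output is: draw $t$ uniformly from $[\delta^2,\|v\|_\infty^2]$ and return $x=1_{S_t}$; as an algorithm, enumerate the at most $n$ sweep sets of volume at most $\tfrac1\delta\theta\,v^\top D v$ and keep the one of largest Rayleigh quotient, which is polynomial time. (If $\delta^2\ge\|v\|_\infty^2$ then, since $v^\top D v=\sum_i D_{ii}v_i^2\le\|v\|_\infty^2\Tr(D)$, the asserted bound is $\le\tfrac12\epsilon^2-1<0$ and there is essentially nothing to prove; I would dispose of this degenerate regime at the outset and assume $\delta^2<\|v\|_\infty^2$.)

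The two quantities to control are the integrals over the sweep. Integrating indicators, $\int_0^{\|v\|_\infty^2} 1_{S_t}^\top D\,1_{S_t}\,dt=\sum_i D_{ii}v_i^2=v^\top D v$ and $\int_0^{\|v\|_\infty^2} 1_{S_t}^\top A\,1_{S_t}\,dt=\sum_{i,j}A_{ij}\min(v_i^2,v_j^2)$. The crucial ``Cheeger squaring'' step is to lower bound the second sum: writing $v_iv_j=\sqrt{\min(v_i^2,v_j^2)}\cdot\sqrt{\max(v_i^2,v_j^2)}$ (valid since $v\ge 0$) and applying Cauchy--Schwarz with the nonnegative weights $A_{ij}$, then using $\max(a,b)\le a+b$ and $\sum_j A_{ij}=D_{ii}$, I get
\[
\big(\epsilon\, v^\top D v\big)^2\;\le\;\big(v^\top A v\big)^2\;\le\;\Big(\sum_{i,j}A_{ij}\min(v_i^2,v_j^2)\Big)\cdot 2\,v^\top D v,
\]
so $\int_0^{\|v\|_\infty^2} 1_{S_t}^\top A\,1_{S_t}\,dt\ge\tfrac12\epsilon^2\,v^\top D v$.

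For the volume guarantee: if $i\in S_t$ then $v_i\ge\sqrt t$, so $D_{ii}\le D_{ii}v_i/\sqrt t$, and summing over $i\in S_t$ (using $v\ge 0$) gives $\vol(S_t)=1_{S_t}^\top D\,1_{S_t}\le\|Dv\|_1/\sqrt t\le\theta\,v^\top D v/\sqrt t$; for $t\ge\delta^2$ this is at most $\tfrac1\delta\theta\,v^\top D v$, so every vector in the support of the distribution meets the required bound. It then remains to check that restricting the sweep to $t\ge\delta^2$ does not destroy the $A$‑mass: since $1_{S_t}^\top A\,1_{S_t}=\sum_{i,j\in S_t}A_{ij}\le\sum_{i,j}A_{ij}=\Tr(D)$ for every $t$, dropping thresholds below $\delta^2$ costs at most $\delta^2\Tr(D)$, whence $\int_{\delta^2}^{\|v\|_\infty^2} 1_{S_t}^\top A\,1_{S_t}\,dt\ge\tfrac12\epsilon^2\,v^\top D v-\delta^2\Tr(D)$, while trivially $\int_{\delta^2}^{\|v\|_\infty^2} 1_{S_t}^\top D\,1_{S_t}\,dt\le v^\top D v$. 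Writing $c:=\tfrac12\epsilon^2-\delta^2\big(\tfrac{\Tr(D)}{v^\top D v}+1\big)$ (the claim being vacuous if $c<0$), a one‑line rearrangement gives $\E_t\big[1_{S_t}^\top A\,1_{S_t}-c\cdot 1_{S_t}^\top D\,1_{S_t}\big]\ge\tfrac{\delta^2\,v^\top D v}{\|v\|_\infty^2-\delta^2}>0$ over $t$ uniform in $[\delta^2,\|v\|_\infty^2]$, so some sweep set $S_{t^*}$ (nonempty, with $t^*\ge\delta^2$) satisfies $x^\top A x\ge c\,x^\top D x$, i.e.\ the claimed Rayleigh‑quotient bound.

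I expect the main obstacle to be the calibration of this threshold window rather than any single inequality: the window must be narrow enough that $\vol(S_t)\le\|Dv\|_1/\sqrt t$ stays below $\tfrac1\delta\theta\,v^\top D v$ (this forces the lower endpoint $\delta^2$), yet wide enough that the surviving part of $\sum_{ij}A_{ij}\min(v_i^2,v_j^2)$ still beats $c\cdot v^\top D v$; and it is precisely the crude estimate $1_{S_t}^\top A\,1_{S_t}\le\Tr(D)$ for the discarded thresholds $t<\delta^2$ that produces the $\delta^2\Tr(D)/v^\top D v$ loss. The Cauchy--Schwarz/Cheeger step and the integral identities are routine, and the ``$+1$'' in the stated error term is harmless slack (the argument above in fact delivers $c=\tfrac12\epsilon^2-\delta^2\,\Tr(D)/v^\top D v$).
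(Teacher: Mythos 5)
Your argument is correct and reaches the paper's bound, but the mechanics differ from the paper's proof in an instructive way. The paper performs the standard randomized threshold rounding over the \emph{full} sweep (choosing $t^2$ uniform, $x_i=\Ind[v_i>t]$), proves $\E[x^\top Dx]=v^\top Dv$, $\E[x^\top Ax]\ge\tfrac12\epsilon^2 v^\top Dv$ via bounding the Laplacian form $\E\,x^\top(D-A)x\le\sqrt{(v^\top Dv)^2-(v^\top Av)^2}$, and then enforces the volume constraint \emph{probabilistically}: a second-moment bound $\E[(x^\top Dx)^2]\le\|Dv\|_1^2$ plus Chebyshev shows the large-volume event has probability at most $\delta^2$, and conditioning on its complement produces both loss terms ($\delta^2\Tr(D)$ from the discarded mass of $x^\top Ax$, and the extra $+1$ from the $1/(1-\delta^2)$ inflation of the denominator). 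You instead truncate the sweep to thresholds $t\ge\delta^2$, which makes the volume bound hold \emph{deterministically} for every set in the support via $\vol(S_t)\le\|Dv\|_1/\sqrt t$, and you account for the truncation by the crude bound $1_{S_t}^\top A 1_{S_t}\le\Tr(D)$ on the discarded window, which is where your $\delta^2\Tr(D)$ arises; your ``Cheeger squaring'' also goes through the $\min\cdot\max$ Cauchy--Schwarz identity rather than the Laplacian estimate, yielding the same $\tfrac12\epsilon^2$ factor. Your route buys two small advantages: it directly hands over a single sweep set satisfying both the volume and Rayleigh-quotient guarantees simultaneously (which is the form actually consumed in the proof of Lemma~\ref{lem:nibble}, and which the paper's ratio-of-conditional-expectations statement extracts only after one more averaging step), and it avoids the implicit normalization $\|v\|_\infty\le 1$ that the paper's $\Pr(v_i^2\ge t^2)=v_i^2$ computation requires. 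The one loose end is your dismissal of the regime $\delta^2\ge\|v\|_\infty^2$: a negative right-hand side does not by itself discharge the statement, since one must still exhibit a nonzero $x$ meeting the volume constraint (and for very small $\|v\|_\infty$ none need exist); this corner is equally untreated in the paper's own proof, and it never arises in the application, where $\delta^2\le\tfrac{(\tau')^2}{81}\|u\|_\infty^2<\|u\|_\infty^2$, but it would be worth a sentence making the assumption explicit rather than calling it ``nothing to prove.''
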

\begin{proof}
For convenience, we choose $\varepsilon'$ to be the ratio $\varepsilon' := \frac{v^\top A v}{v^\top D v} \ge \epsilon$.
We perform a standard Cheeger dependent rounding scheme.
Choose $t$ so that $t^2$ is distributed uniformly in $[0,1]$, and set $x_i = \Ind[ v_i > t]$.
We have that
\[
\E[x_i^2] = \Pr(v_i^2 \ge t^2) = v_i^2.
\]
As a consequence, $\E[x^\top D x] = v^\top D v$.

We also have that
\[
\E[(x^\top D x)^2] = \sum_{i,j \in [n]}\E[ d_i d_j x_i^2 x_j^2] = \sum_{i,j} d_i d_j \cdot \Pr(\min(v_i^2,v_j^2) \ge t^2) \le \sum_{i,j} d_i d_j \cdot \Pr( v_i v_j \ge t^2) = \|Dv\|_1^2.
\]
Combining the above with Chebyshev's inequality,
\[
\Pr( x^\top D x \ge \frac{1}{\delta}\theta v^\top D v) \le \delta^2 \frac{\|Dv\|_1^2}{\theta^2 (v^\top D v)^2} \le \delta^2.
\]

Finally, to bound the quadratic form with $A$ we first consider the Laplacian.
We have that
\begin{align*}
\E x^T(D-A) x 
&= \sum_{i\sim j} \Pr[ \max(v_i^2,v_j^2) \ge t \ge \min(v_i^2,v_j^2)]\\
&= \sum_{i \sim j} |v_i^2 - v_j^2|
= \sum_{i \sim j} |v_i - v_j| \cdot |v_i + v_j| \\
&\le \sqrt{(v^\top (D-A)v \cdot v^\top(D+A)v}\\
&= \sqrt{(v^\top D v)^2 - (v^\top A v)^2}.
\end{align*}
And, since $x^T A x \le x^T D x$, we have that
\[
\E x^\top A x 
\ge v^\top D v\cdot \left(1 - \sqrt{\frac{(v^\top D v)^2 - (v^\top A v)^2}{(v^\top D v)^2}}\right)
= v^\top D v\cdot \left(1 - \sqrt{1 - (\varepsilon')^2}\right)
\ge v^\top D v \cdot\left(\frac{1}{2}(\varepsilon')^2\right)
\]

Now, we consider the distribution over $x$ conditioned on $\cE_\delta$, the event that $\|x\|^2 \le \frac{1}{\delta} \theta \|v\|^2$.
We have that 
\[
\E[ x^T A x | \cE_{\delta}] 
= \frac{1}{\Pr[\cE_{\delta}]} \left(\E[x^T A x] - \E[x^T A x ~|~ \overline{\cE}_{\delta}] \cdot \Pr[\cE_\delta]\right)
\ge \E[x^\top A x] - \delta^2 \cdot \Tr(D).
\]
We also have
\[
\E[x^T D x |\cE_{\delta}]
= \frac{1}{\Pr[\cE_{\delta}]} \left(\E[x^T D x] - \E[x^T D x ~|~ \overline{\cE}_{\delta}] \cdot \Pr[\cE_\delta]\right)
\le \frac{1}{1-\delta^2} \E[x^T D x].
\]

Putting this together, we have
\[
\frac{ \E[ x^T A x | \cE_{\delta}] }{ \E[x^T D x |\cE_{\delta}]}
\ge (1-\delta^2) \left(\frac{1}{2} (\varepsilon')^2 - \delta^2 \frac{\Tr(D)}{v^T D v}\right).
\]
Taking our distribution to be the distribution over $x$ conditioned on $\cE_\delta$, this gives our conclusion.
\end{proof}

\end{document}